 \documentclass[final,pdftex]{Preamble/ectaart}

\usepackage{Preamble/PreambleMain}
\usepackage{Macros/macroMain}

\usepackage[none]{hyphenat}

\begin{document}
\begin{frontmatter}

\title{Simple subvector inference on  sharp identified set in affine models \protect\thanksref{T1}}
\runtitle{Simple subvector inference on sharp identified set}
\thankstext{T1}{This version is July 12, 2024. Earlier versions of the paper was  circulated with the title ``Inference on Scalar Parameters  in Set-Identified Affine Models'' and   ``Inference in high-dimensional Set-Identified Affine Models''. The work is based on  a chapter in my PhD dissertation \citep{gafarov2017essays}.  First-draft date: November 10, 2015.}

\begin{aug}
\author{\fnms{Bulat} \snm{Gafarov}\thanksref{t1}\ead[label=e1]{bgafarov@ucdavis.edu}}

\thankstext{t1}{I am extremely grateful to Joris Pinkse and Patrik Guggenberger for their very helpful and detailed comments. I would like to thank (in alphabetical order) Donald Andrews, Andres Aradillas-Lopez, Christian Bontemps, Ivan Canay, Peng Ding, Graham Elliott, Zheng Fang, Dalia Ghanem, Joachim Fryberger, Ronald Gallant, Michael Gechter, Marc Henry, Keisuke Hirano, Sung Jae Jun, Nail Kashaev, Francesca Molinari, Demian Pouzo, Adam Rosen, Thomas Russell, Andres Santos, Xiaoxia Shi, Jing Tao, and Alexander Torgovitsky
for their comments and suggestions.}

\runauthor{B. Gafarov}

\address{University of California, Davis, Department of Agricultural and Resource Economics \\
}

\end{aug}

\begin{abstract}

This paper studies a regularized support function estimator for bounds on components of the parameter vector in the case in which the identified set is a polygon. 
The proposed regularized estimator has three important properties: (i) it has a uniform asymptotic Gaussian limit in the presence of flat faces in the absence of redundant (or overidentifying) constraints (or vice versa); (ii) the bias from regularization does not enter the first-order  limiting distribution; (iii) the estimator remains consistent for sharp (non-enlarged) identified set for the individual components even in the non-regualar case.
These properties are used to construct \emph{uniformly valid} confidence sets for an element $\theta_{1}$ of a parameter vector $\theta\in\mathbb{R}^{d}$ that is partially identified by affine moment equality and inequality conditions.
The proposed confidence sets can be computed as a solution to a small number of linear and convex quadratic programs, leading to a substantial decrease in computation time and guarantees a global optimum.
As a result, the method provides a uniformly valid inference in applications in which the dimension of the parameter space, $d$, and the number of inequalities, $k$,  were previously computationally unfeasible ($d,k=100$).
The proposed approach can be extended to construct confidence sets for intersection bounds,  to construct joint polygon-shaped  confidence sets for multiple components of $\theta$, and to find the set of solutions to  a linear program.
Inference for coefficients in the linear IV regression model with an interval outcome is used as an illustrative example.

\end{abstract}

\begin{keyword}
\kwd{affine-moment inequalities}
\kwd{asymptotic linear representation}
\kwd{higher-order analysis}
\kwd{delta method}
\kwd{interval data}
\kwd{intersection bounds}
\kwd{partial identification}
\kwd{ regularization}
\kwd{ strong approximation}
\kwd{ stochastic programming}
\kwd{subvector inference}
\kwd{ uniform inference}
\end{keyword}

\end{frontmatter}

\newpage
\setcounter{page}{1} 
\section{Introduction}\label{sec:Introduction}

Strong econometric assumptions can lead to poor estimates. 
Sometimes, moment inequalities can provide alternative estimates under weaker assumptions.
Linear models with interval-valued outcome data are a good example.\footnote{Other examples of affine-moment inequalities include monotone instrumental variables (\citet{manski2000monotone}, \citet{Freyberger201541}) and models with missing data (\citet{manski2003partial}).}
It is common practice to replace income-bracket data with the corresponding midpoints when estimating the returns to schooling (\cite{trostel2002estimates}).
However, the conventional approach is applicable only under strong assumptions about the distribution of the residual term.\footnote{Another common approach is to assume Gaussian distribution for the residuals and apply the maximum likelihood method (\citet{stewart1983least}). } 
The affine moment inequality approach to interval-valued  data proposed by \cite{manski2002inference} can set-identify the return to schooling without such strong assumptions. 

Multiple methods can be used to construct confidence sets (CS)  for parameters defined by moment inequalities.
The pioneering procedures of \cite{chernozhukov2007estimation} and \cite{andrews2010inference} (AS) and their subsequent refinements by 
\cite{bugni2014inference} (BCS) and \cite{kaido2015inference} (KMS) are powerful statistical methods that solve this inference problem in the small-dimensional case.
Some applications, such as panel or semiparametric regression models with interval-measured outcome variables, have a high-dimension  parameter space,  which poses a computational challenge for the existing procedures.\footnote{For example, \cite{trostel2002estimates} consider a panel regression with more than 60 variables that include country fixed effects, time effects,  exogenous demographic control variables, and their interactions.}

I propose a novel regularized support  function estimator for the lower and upper extremes of the identified set   for an element $\theta_{1}$ of an unknown parameter vector $\typVector{\theta}\in\R^{d}$ in models defined by affine moment equalities and inequalities.
In the example of returns to schooling, $\theta_{1}$ corresponds to the returns to schooling and $\typVector{\theta}\in\R^{d}$ to the full vector of the regression coefficients.
The novel estimator has a closed-form asymptotic Gaussian  distribution, which I use to construct uniformly valid confidence bounds and confidence intervals for $\theta_{1}$. 
The proposed set has valid asymptotic coverage probability uniformly over a class of data-generating processes (DGP). 
Uniformity in DGP is a desirable property, as it results in better coverage-probability control  in small samples compared to point-wise analogs in \emph{nonregular statistical models} such as the affine moment inequality model.

The regularized support function proposed in this paper is a solution to a convex quadratic program that minimizes the sum of $\theta_{1}$ and a penalty $\mu_{n}\left\Vert \typVector{\theta}\right\Vert ^{2}$ with $\mu_{n}\to0$, subject to the sample moment restrictions. 
If the set of optima for $\mu_n=0$ is not a singleton, this additional convex term selects the optimum with the minimal norm as $n$ increases. 
The standard errors are computed using the sample variance of the weighted moment conditions at the unique optima. 
To correct the asymptotic bias resulting from the regularization exactly, I suggest using the argmin of the regularized program with a larger tuning parameter $\kappa_{n}\to0$. 
If $\kappa_{n}/\mu_{n}\to\infty$ as $n\to\infty,$ then the bias correction does not affect the asymptotic distribution of the estimator. 
To achieve a uniformly valid confidence interval (CI), I replace the exact correction with an upper bound on the maximum of $\mu_{n}\left\Vert \typVector{\theta}\right\Vert ^{2}$ over the argmin set of the nonregularized program. 

The proposed CIs have several attractive statistical and computational properties that make them viable in high-dimensional affine moment inequality models. 

First, the estimator of the regularized support function has an asymptotically linear (Bahadur-Kiefer) representation that provides an easy-to-compute asymptotic standard error.
Consequently, this paper is the first to propose a closed-form estimator of the bounds on $\theta_{1}$ convex moment inequality models  with asymptotic Gaussian distribution in the non-regular case (in the absence of strict convexity). 
In contrast, the estimator of the ordinary support function estimator used in the existing literature (\citet{beresteanu2008asymptotic}, \citet{kaido2014asymptotically}, \citeauthor{Freyberger201541} (2015, FH), \citet{gafarov2018delta}, among others) can have a non-Gaussian asymptotic distribution, which complicates uniform inference. 
To establish the uniform remainder bound in the Bahadur-Kiefer expansion, I developed a novel second-order directional envelope theorem, which is a theoretical result of independent interest.

Second, the proposed approach requires only a fraction of the computational time of the existing uniform procedures if  $\typVector{\theta}$ has a large number of dimensions.
The computational cost is low since it involves only four quadratic programs, it does not require any resampling, and it depends on covariance of the moment conditions at two points. 
 In the asymptotic analysis, I   only consider the case of a fixed dimension of $\theta$ and a fixed number of inequalities for analytical simplicity. 
 The goal of the present study is to focus on the computational difficulties resulting from the high-dimensional moment inequalities.
(There are recent papers that are concerned with the impact of the growing number of inequalities on the statistical properties of the inference procedures; see, for example, \cite{belloni2018subvector}.)

The computation time for my procedure increases slowly in the dimension of $\theta\in\R^{d}$ and takes only 0.1 second  for $d=20$ and $k=40$ moment inequalities and 2.5 seconds for $d=100$ and $k=200$.   As a result, the proposed method  can address  parameter $\theta$ with a large dimension and a large number of moment conditions. 
In contrast, the existing uniform-inference methods for moment inequalities proposed by AS, KMS, and BCS are based on costly nonconvex optimization.
\label{reply:R1.1} In fact, despite the fact that the moment inequalities are convex (linear) the standardized moment conditions are not convex which can potentially result in creation of multiple local optima which complicate computation for these statistical procedures (see Appendix Sections \ref{sec:Convergnce rate} and \ref{sec:globalOptimum}).

I provide an example of an affine moment inequality model by showing that the number of local optimal solutions in existing uniform procedures (AS, BCS, and KMS) can grow exponentially with dimension $d$. 
As a result, the procedures take more computational time and can produce misleadingly short CIs if the optimization routine fails to find the global optimum.
It takes 100 seconds to compute the CI of AS in an affine model with $d=20$ and $k=40$ moment inequalities which is 1000 times longer than the newly proposed method.

The simulation evidence suggests that {the computational speed gains} come without a substantial loss of statistical power.
In the non-regular cases, the proposed uniform CIs have length properties that are not worse than those of the existing uniform methods. (The proposed \emph{uniform} CI has a length within simulation error from the projection CI of AS in the Monte Carlo (MC) design considered in this paper.) In the regular cases, the novel confidence bounds attain the efficiency bound of the usual support function estimators (as shown in \cite{kaido2014asymptotically}).

\label{reply:R2.1.1}
The proposed idea of regularizing support functions for linear moment inequality inference gained a subsequent development in a recent work \cite{cho2023simple}.\footnote{The first draft of \cite{cho2023simple} was circulated on 27 May 2019 on Arxiv depository two years after the initial distribution of the present paper as a PhD dissertation chapter in \cite{gafarov2017essays}.}
The authors argue that if one is satisfied with an inference on an \emph{enlarged} identified set, one can simply regularize the moment inequalities by adding random noise to their coefficients under milder regularity conditions.  
This operation restores a Gaussian limit of the perturbed estimated support function, allowing conventional bootstrap inference on the \emph{enlarged} identified set. 
Since the inference is done on the enlarged identified set, one does not need to impose any constraint qualification conditions on the moment inequalities --- they are satisfied automatically with probability 1 after adding the random noise to the coefficients.
Unfortunately, this approach results in confidence sets with zero power against all local alternatives corresponding to other nested enlargements of the identified set. 
In practical terms, it means that the confidence sets can be very conservative if too much noise was added or fail to control size in small samples if the added noise was insufficient (there is no theory that would determine the minimal level of required noise for a given sample size).    
In contrast, the theory provided in the present paper explicitly studies the impact of the choice of tuning parameters on the size of the regularized identified set. 
Such an analysis requires imposing constraint qualification conditions on the moment inequalities (Assumption  \ref{ass:ULICQ}).
Under these conditions, Theorem \ref{thm:bounds} shows that for sufficiently small value of the tuning parameters $\mu_n$ and $\kappa_n$  both lower and upper bounds based on the regularized value function coincide exactly with the nonregularized support function in the regular case when the set of primal solutions is a singleton. 
More generally, under the maintained assumptions, the regularized estimators remain consistent for the bounds on the sharp (non-enlarged) identified set, thus providing non-trivial local power against the relevant alternatives.

Identified sets defined by affine inequalities appear in various economic applications, in particular those dealing with discrete variables and shape restrictions. Linear models with interval outcome, which were originally studied in \cite{manski2002inference} and \cite{haile2003inference},  are just one example of affine inequalities. Other examples include bounds on marginal effects in dynamic discrete choice panel models (\cite{honore2006bounds}, \cite{torgovitsky2016nonparametric, torgovitsky2018partial}), bounds on average treatment effects (\cite{kasy2016partial}, \cite{Laffer2018}, \cite{russell2017sharp}), nonparametric IV models with shape restrictions (\cite{manski2000monotone}, \cite{Freyberger201541}), errors in variables (\cite{MOLINARI200881}),  intersection bounds (\cite{honore2006bounds2}), revealed  preference restrictions (\cite{kline2016bounding}), and game-theoretic models (\cite{syrgkanis2017inference}).
This paper focuses on the case with finitely many affine unconditional inequalities that are non-overidentifed and that result in a non-empty identified set. 
This, of course, rules out many interesting economic applications that involve overidentifying moment inequalities \citep[for example,][]{shi2018estimating}, or where moment inequalities are nonlinear \citep[for example,][]{pakes2015moment}.

I also contribute to the growing literature on inference on non-differentiable functions and regularized estimators. 
Bounds on components of a parameter characterized by linear moment conditions considered in this paper are an example of a nondifferentiable (\emph{nonregular}) function of a parameter (the expectation of the data) that has an asymptotically normal estimator (the sample mean).  
Distributions of nondifferentiable functions of the normal estimator are hard to approximate using standard methods. 
(See Section~\ref{subsec:asymptoticDistribution} for a detailed discussion.)
I propose differentiable lower and upper bounds that converge to the nondifferentiable function of interest as the sample size grows.
Since the bounds are regular parameters themselves, the standard delta method and bootstrap can be used to conduct one-sided or two-sided inference on the bounds. 
In the regular case,  these bounds collapse and coincide with the original parameter of interest, which results in a $\sqrt{n}$-consistent and asymptotically normal estimator.
In the non-regular case, the bounds converge at a slower rate and result in a locally biased estimator, which is acceptable for valid one-sided inference. 
 
Another interesting statistical problem that appears in many applications is inference for extrema of finitely many means of random variables. 
It is known as \emph{ intersection-bounds problem } (\cite{hall2010bootstrap} and \cite{chernozhukov2013intersection}) and can be framed as a value of a linear program.
The regularized support function estimator can also be used for uniform delta-method CSs in this setting (see Appendix Section~\ref{subsec:overidentification}).
The approach considered here is expected to have statistical properties similar to \cite{chernozhukov2013intersection}, but has the advantage of closed-form standard errors and critical values, which correspond to the standard normal distribution.

The paper is structured as follows. Section~\ref{sec:Setup} describes the setup, gives examples, and summarizes the available literature.
Section~\ref{sec:results} provides the main results, applies them to uniform inference on projections, and discusses extensions (overidentified inequality models, joint CSs, characterization of argmin sets). 
Section~\ref{sec:Monte-Carlo} provides the results of the Monte Carlo experiments. 
Section~\ref{sec:Conclusion} concludes.

The following notational conventions are used.
$\bydef$  denotes definitions. 
 $\expect\left[\cdot\right]$  denotes expectation with respect to a probability distribution $\measTrue$. 
Uppercase English letters  denote random variables (scalar, vector, or matrix valued), and lowercase letters  denote the corresponding realizations, for example, $\Data_i$ and $\data_i$. 
 $\measEmp$ denotes the empirical distribution and $f\left(0^+\right)$ denotes $\lim_{x\downarrow0}f\left(x\right)$.
The vector $\ej\bydef(0,...1,...0)'$ is the $j$-th coordinate vector, where 1 occurs at position $j$. $\ej$ is the projector on the $j$-th coordinate. 
The symbol $\jSet$ denotes a finite set of indices $\jSet\bydef\{i_1,...,i_\ell\}\subset\numbers$ and $\jMat\bydef(e_{i_1},...,e_{i_\ell})^\prime$ as a coordinate projection matrix in the corresponding Euclidean space. 
 $\left|\jSet\right|$  denotes the cardinality of the set $\jSet$.  u.h.c. stands for upper-hemicontinuous correspondence.

\section{Setup, motivating examples, and related literature}\label{sec:Setup}

\subsection{\label{subsec:GeneralSetup} Support function and projections of identified sets}

\label{R3.5.c}
Consider a support function  for a polygon    $\setID\subset\R^{d}$ (a set defined by a system of linear equalities and inequalities) that depends on a data generating process  parametrized by a measure $\measTrue$ evaluated at a direction $e_1\bydef(1,0,\dots)^\prime 
\in\R^d$,\footnote{See, for example, \citet[][ chapter 13]{rockafellar1970convex}}
\begin{equation}
\begin{array}{ccc}
\funLValO\bydef\displaystyle\min_{ {\theta}\in\setID}\eOne {\theta}. 
\end{array} \label{prog:minP}
\end{equation}
The set $\setID$,  also referred to as an \emph{identified set} for a parameter vector $ {\theta}\in\R^{d}$, consists of all solutions to  the following system of affine moment equalities/inequalities: 
\begin{equation}
\begin{cases}
\expect\moment{W}{\theta}  =0, & j\in\setEq,\\
\expect\moment{W}{\theta}  \leq0, & j\in\setIneq.
\end{cases}\label{eq:GenericMomentConditions}
\end{equation} 
Here,  $\moment{W}{\theta} \bydef  \sum_{\ell=1}^d \Data_{j\ell} \theta_\ell -\Data_{j(d+1)}$.
I consider a setup with finitely many unconditional moment functions (that is, $\left|\setEq\cup\setIneq\right|=k$,  $\card{\setEq}=p$, $0\leq p \leq d$, $k<\infty$).
The random matrix corresponding to an individual observation $\Data$ has probability measure $\measTrue$ with  sample space $\R^{k\times\left(d+1\right)}$,
\begin{align*}
  W &=  \left(\begin{array}{c}
 W'_1\\
 \vdots \\
  W'_k
\end{array} \right) ,\quad W_j = (W_{j1},\dots,W_{j(d+1))})', \quad j=1,\dots,k.
\end{align*}
The econometrician observes an i.i.d. sample  $\left\{ \data_{i}\in\R^{k\times\left(d+1\right)}|\,i=1,...,n\right\} $ of random matrix $W$.  
 There is a straightforward way to extend the analysis to the case of dependent data as long as a CLT for averages of $\data_{i}$ remains valid.

This setup reduces to a finite-dimensional parametric statistical model once    the support function evaluated at $e_1$ is represented as a linear program, 
\begin{align}
\funLValO&=\min_{   {\theta}\in\R^{d} }  \eOne {\theta} \label{eq:primalLP} \\
\text{s.t. } &   e_j^\prime  \Ap \theta= e_j^\prime\bp , \quad j\in\setEq,\nonumber\\ 
&   e_j^\prime  \Ap \theta\leq e_j^\prime\bp , \quad j\in\setIneq.\nonumber 
\end{align}
Here, the coefficients on the left-hand side, $\Ap$, and the right-hand side, $\bp$, taken together constitute  matrix $\expect \Data \bydef (\Ap|\bp)$. It  means $W$’s expectation is a $k\times (d + 1)$ matrix whose first $d$
columns is $A_P$ and the last column is the $k$-dimensional vector  $b_p$.
Following optimization theory terminology, I occasionally refer to the support function at $e_1$ as \emph{value} of program \eqref{eq:primalLP} and to the corresponding argmin set as the set of \emph{optimal solutions}. 

The focus on the first coordinate of $\theta$ as an objective of \eqref{prog:minP} is without loss of generality.
As discussed in Section~\ref{subsec:subvectors}, the support function evaluated at any unit direction vector $a\in\R^d$ can be represented in the form \eqref{prog:minP} after some redefinition of the parameter space (matrices $ \Ap$ and $\bp $ can be functions of the unit vector $a$).

To ensure boundedness of the support function, I  assume that system (\ref{eq:GenericMomentConditions}) includes inequalities that make the identified set compact,
\begin{equation}
-\infty<-\underline{c}_{\ell}\leq\theta_{\ell}\leq \bar{c}_{\ell}<\infty \quad \text{for } \ell=1,...,d,  \label{eq:Box}
\end{equation}
for some constants $\underline{c},\bar{c}\in\R^d_+$. 
Moreover, I impose the following assumption:
\begin{assumption}
\label{assu:Non--empty} $\Theta\left(\measTrue\right)$ is non-empty
for the probability measure $\measTrue$. 
\end{assumption}
This assumption implies that $\funLValO<+\infty$. 
It is valid whenever the model corresponding to \eqref{eq:GenericMomentConditions} is correctly specified.

Under Assumption \ref{assu:Non--empty}, support functions evaluated at  $\pm e_j$   characterize  projections of $\setID$ on individual coordinates $j$ of $\theta$.
In particular, the marginal (projected) identified set for the coordinate $\theta_{1}$ can be represented as an interval  $\setMarginal=\left[\funLValO,\funUValO\right]$, 
where the bounds are support functions evaluated at   directions $\eOneVec$ and $-\eOneVec$.
Indeed, the upper bound can also be written as a (minus) support function at $e_1$, 
\begin{equation}
\funUValO\bydef\max_{ {\theta}\in\setID}  \{ \eOne {\theta}\}=-\min_{ {\theta}\in\setID}  \{-\eOne {\theta}\}.
\end{equation}
The analysis for the upper bound is analogous to the one for the lower bound, so from here on I focus on the lower bound.

\subsection{\label{subsec:examples} Motivating example: Linear IV model with interval-valued outcome }

The polygon-shaped identified sets considered in this paper appear in many econometric models that feature discrete data, as mentioned in the introduction. 
Many of these applications are concerned with instrumental variables. 
 I use the linear IV model with interval outcome \citep[for example,][]{manski2002inference,chernozhukov2007estimation} to illustrate ideas throughout the paper. Other applications of the proposed method include monotone IV \citep{manski2000monotone}  and nonparametric IV \citep{Freyberger201541}.

Consider a linear model for a random vector $(Y,X,Z)$ satisfying  
$$\expect\left[Y- {\theta}^{\prime} X|  Z\right] =0,$$
where $\theta\in\R^d$ is the vector of regression coefficients.
The true outcome $Y$ is unobserved;  only its a.s. bounds  $\left[\underline{Y},\overline{Y}\right]$ are observed.
Suppose that the vector of instrumental variables $Z$ has finite support $\left\{   z_{1},\dots,  z_{K}\right\} \subset\R^{d}$.
In this case, the model implies a polygon-shaped identified set $\setID$ for $\theta$ defined by a set of moment inequalities,
\begin{equation}
\begin{cases}
\expect\left[\underline{Y}\Ch{  Z=  z_{j}} \right]\leq {\theta}^{\prime}  \expect \left[ X\Ch{  Z=  z_{j}}\right] , & j=1,...,K,\\
\expect\left[\overline{Y}\Ch{  Z=  z_{j-K}} \right]\geq {\theta}^{\prime} \expect \left[X\Ch{  Z=  z_{j-K}} \right], & j=K+1,...,2K.
\end{cases}\label{eq:intervalOutcome}
\end{equation}

These inequalities can be represented in the standard form (\ref{eq:GenericMomentConditions})
with $p=0$, $k=2K$, and the following observation matrix:

\begin{align*}
\Data_{j\ell}\bydef\begin{cases}
- X_{\ell}\Ch{  Z=  z_{j}} , & \text{ for }j=1,\dots,K, \ell =1,\dots,d,\\
  X_{\ell}\Ch{  Z=  z_{j-K} } , & \text{ for }j=K+1,\dots,2K, \ell =1,\dots,d,
\end{cases}\\
\Data_{j(d+1)}\bydef\begin{cases}
\underline{Y}\Ch{  Z =  z_{j} } , & \text{ for }j=1,...,K,\\
-\overline{Y}\Ch{   Z =  z_{j-K} } , & \text{ for }j=K+1,...,2K.
\end{cases}
\end{align*}

 If it is known a priori that for some support points $j$  $$\expect\left[\underline{Y}\Ch{  Z=  z_{j}} \right] = \expect\left[\overline{Y}\Ch{  Z=  z_{j-K}} \right],$$
then one can replace the corresponding pair of inequalities with a single equality,
\begin{equation}
\expect\left[\frac{1}{2}(\underline{Y}+\overline{Y})\Ch{  Z=  z_{j}} \right]= {\theta}^{\prime}  \expect X\Ch{  Z=  z_{j}} .
\end{equation}
In this case, $p$ (the number of equality restrictions) is equal to the number of such support points.
One can further incorporate additional shape restrictions information such as signs of components of $ {\theta}$ in the form of linear inequalities to narrow the identified set.  

This example appears in the context of the estimation of return to schooling using survey data. 
 \citet{trostel2002estimates} study economic returns to schooling
for 28 countries using data from the International Social Survey Programme  from 1985 to  1995. They estimate the conventional \citet{mincer1974schooling}
model of earnings (the human capital earnings function), which has $Y$, the log of hourly wages,  satisfying  
\begin{equation}
\expect\left[Y - {\theta}^{\prime} X|  Z\right]=0 ,\label{eq:schooling}
\end{equation}
where the first regressor, $  X_{1}$, is the years of schooling; the other components of $X$ play a role of additional controls.
The component $\theta_{1}$ is then interpreted as the return to schooling. 
It is equal to the percentage change in wages due to an additional year of schooling.
To correct for the endogeneity bias in $\theta_{1}$ resulting from omitting the latent ability variable, one can use an instrument vector $Z$  that correlates with $X_{1}$ (for example, an indicator of whether a good school is in  proximity or an indicator of  the quarter of birth).
 
Exact measurements of $Y$ are not available for some countries (including the US); only hourly-income- bracket data $\left[\underline{Y},\overline{Y}\right]$ are
available.

Instrumental variables $Z$ (coinciding with the control variable)  considered in \citet{trostel2002estimates} take discrete values. 
The variables include annual fixed effects, union status, marital status, age and age squared, and country-year dummies (in the case of the aggregate equation).
With inclusion of the country and time effects,  $d$ can be larger than 60.
Because of the large number of support points for $Z$, the corresponding system~\eqref{eq:GenericMomentConditions} would also have a large number of linear moment inequalities $k$.

The conventional technique to estimate IV regression with interval-outcome data is to replace the interval data with the corresponding midpoints and estimate the model using the OLS method.
This technique is valid only under the unreasonably strong condition
\begin{align}
\expect\left[\left(Y-\frac{1}{2}(\underline{Y}+\overline{Y})\right)  Z\right] & =0.\label{eq:midpoint}
\end{align}
If Equation \eqref{eq:midpoint} is violated, then the OLS estimator with midpoints is inconsistent for the true parameter $\theta$.\footnote{The OLS is, however, consistent for the best linear predictor of the midpoint that may not have the desirable economic interpretation; see, for example,  \citet{shi2020uniform}.}
Without assuming that \eqref{eq:midpoint} is satisfied, support function estimators (defined below) provide consistent bounds on marginal identified sets of the true parameter $\theta$ \citep[see][]{beresteanu2008asymptotic,bontemps2012set}.

When constructing valid CSs  on projections of identified sets, strictly speaking, one needs only a lower bound on the support function at $e_1$ in  \eqref{prog:minP}.
Other applications may instead require estimators of an upper bound on the (minimum) value of an optimization problem. 
The upper bound can be used, for example, to bound the set of optimal solutions or to construct a consistent \emph{inner} estimator of a convex identified set \citep[the inner set estimator has important applications in inference; see, for example, ][]{bugni2017inference}.

\subsection{\label{subsec:asymptoticDistribution} Review of existing results on the asymptotic distribution of support function estimators for polygon-shaped identified sets }
 
Following \cite{beresteanu2008asymptotic} the parameter $\funLValO$ (the support function at $e_1$)  can be estimated using a sample analog,
\begin{align}
\hat{\underline{v}}_n &=\min_{ {\theta}\in\R^{d} } \eOne {\theta}  \\
\text{s.t. } & \begin{cases}
\sMean\moment{W_i}{\theta} =0, & j\in\setEq,\\
\sMean\moment{W_i}{\theta} \leq0, & j\in\setIneq,
\end{cases} 
\end{align} 
where the observations $W_i$  are independent copies of the random matrix $W$.
The asymptotic distribution of this estimator has been extensively studied in the case of the strictly convex identified set.
The strict convexity implies that  the  support function  at $e_1$, $\hat{\underline{v}}_n$ , is differentiable in the sample mean $\sMean W_i$ (under additional regularity conditions discussed below).  
As a consequence, its sample analog has an asymptotic Gaussian distribution, admits of bootstrap inference, and attains semiparametric efficiency \citep{kaido2014asymptotically}.
\label{reply:R1.3.1}In contrast to the strictly convex case, the Gaussian limit is not guaranteed anymore in the general (non-deterministic) linear moment inequality model (\cite {kaido2014asymptotically} only allow for deterministic linear inequalities under particular regularity conditions).

The stochastic programming approach \citep{shapiro1991asymptotic} enables an asymptotic analysis of the support function estimators for a fixed direction in the general convex case, which includes the linear moment inequality model. 
In this section, I briefly review the two main ideas in this approach, Lagrangian duality of convex programs and the delta method for directionally differentiable functions.  
I conclude with an overview of the alternative inference approaches and their relation to stochastic programming methods.

The Lagrangian duality theory provides additional, often more convenient, formulations of convex programs.
Namely, the (primal) program in \eqref{prog:minP} has the same value as its Lagrangian dual formulation, 
\begin{align}
\funLValO=\max_{ {\lambda}\in\R^{p}\times\R_{+}^{k-p}} & \left\{ - {\lambda}^{\prime}\bp\right\} \label{eq:DualProgram0},\\
\text{s.t. } &  {\lambda}^{\prime}\Ap=-\eOne,\nonumber 
\end{align}
given Assumption~\ref{assu:Non--empty}.
Iff the dual program has a bounded set of solutions $\argminL(\measTrue)$, the support function for a given direction has bounded directional derivatives in $\Ap,\bp$ (defined precisely below).
Proposition 5.45 in \cite{bonnans2013perturbation} provides a necessary and sufficient condition for $\argminL(\measTrue)$ to be bounded:
\begin{condition}[Slater's]
\label{con:slater} There exist $ \theta \in \setID$  s.t. $\expect\moment{W}{\theta}  < 0$ for all $ j\in\setIneq$.
\end{condition}

Condition~\ref{con:slater}
enables another, Lagrangian  min/max, representation of  \eqref{prog:minP}, which is particularly convenient for computing derivatives of the support function in a given direction  for the delta-method-based inference procedures.
Suppose that $\Lambda\subset{\R}^{p}\times\R_{+}^{k-p}$ is some compact set that contains $\argminL(\measTrue)$. 
The support function for a given direction $ e_1$ can then be represented as \citep[see, for example,][p. 437]{bonnans2013perturbation} 
\begin{equation}
\funLValO=\min_{ {\theta}\in \Theta }\max_{ {\lambda}\in\Lambda} \{\eOne {\theta} +{\lambda}^{\prime}(\Ap\theta-\bp)\}.  \label{prog:minmax}
\end{equation}
The directional derivative for the value of this min/max program  is  provided by a corresponding envelope theorem    \citep[for example,][Theorem 7.28]{shapiro2014lectures}.
Namely, the envelope theorem gives a derivative of any perturbed version of the min/max program,
\begin{equation}
\funLValO[P,t]\bydef\min_{ {\theta}\in\Theta}\max_{ {\lambda}\in\Lambda} \{\eOne {\theta}  +{\lambda}^{\prime}((\Ap +t h_{A,t})\theta-(\bp+t h_{b,t}))\},  \label{prog:minmaxPerturbed}
\end{equation}
with respect to a scalar $t$  for any uniformly converging sequence of directions $h_t\bydef(h_{A,t},h_{b,t})\to(h_{A},h_{b})$.
The derivative takes the form
\begin{equation}
\lim_{t\to0} \frac{\funLValO[P,t]-\funLValO[P]}{t}=\min_{ {\theta}\in \argminT(\measTrue)}\max_{ {\lambda}\in\argminL(\measTrue)} \{ {\lambda}^{\prime}(  h_{A} \theta- h_{b})\},  \label{prog:minmaxDerivative}
\end{equation}
where $\argminT(\measTrue)$ and $\argminL(\measTrue)$ are sets of primal  and dual optima for  \eqref{prog:minP}.
Both sets can be nonsingletons as illustrated below.

\begin{example}[Bivariate linear IV with an interval outcome]\label{exa:runningExample} 
Suppose that $\theta\in\R^2$, $X=Z$, and regressors $Z_{1}$ and $Z_{2}$  take values in $\left\{ 0,1\right\} $ with $\expGeneric Z_{1}=\expGeneric Z_{2}=\frac{1}{2}$. 
As in \eqref{eq:intervalOutcome}, the identified set for $\theta$ can be characterized by eight inequality constraints,
\begin{equation}
\expGeneric\left[\underline{Y}\psi_z(Z)\right]\leq \expGeneric\left[Z_{1}\psi_z(Z)\right] \theta_{1}+\expGeneric\left[Z_{2}\psi_z(Z)\right]\theta_{2}\leq\expGeneric\left[\bar{Y}\psi_z(Z)\right], \label{eq:fullsystem}
\end{equation}
where indicator functions $\psi_z(Z)=\Ch{Z=z}$ correspond to all combinations of $z\in \{0,1\}^2$.
Suppose, for illustrative purposes, we are interested only in the identified set $\setID$  defined by the following  subsystem of four inequalities:
\begin{equation}
\expGeneric\left[\underline{Y}Z_{1}\right]\leq\frac{1}{2}\theta_{1}+\theta_{2}\expGeneric\left[Z_{1}Z_{2}\right]\leq\expGeneric\left[\bar{Y}_{i}Z_{1}\right],\\
\expGeneric\left[\underline{Y}\left(1-Z_{1}\right)\right]\leq  \theta_{2}\expGeneric\left[\left(1-Z_{1}\right)Z_{2}\right]\leq\expGeneric\left[\bar{Y}\left(1-Z_{1}\right)\right].\label{eq:subsystem}
\end{equation}
In order to represent the identified set on a diagram, suppose further that the a.s. bounds on the outcome variable  $Y$ satisfy $\expect[\bar{Y}|Z_1=i]=- \expect[\underline{Y}|Z_1=i] = \frac{1}{2}\Delta_{i}\geq0 $  for $i\in\{0,1\}$ (that is, $\Delta_{i}$ is the average length of the outcome interval depending on $Z_1$). 
\begin{figure}[H]
\begin{centering}
\begin{tabular}{ccc}
\includegraphics[scale=0.5,trim={2cm 0 1cm 0 }]{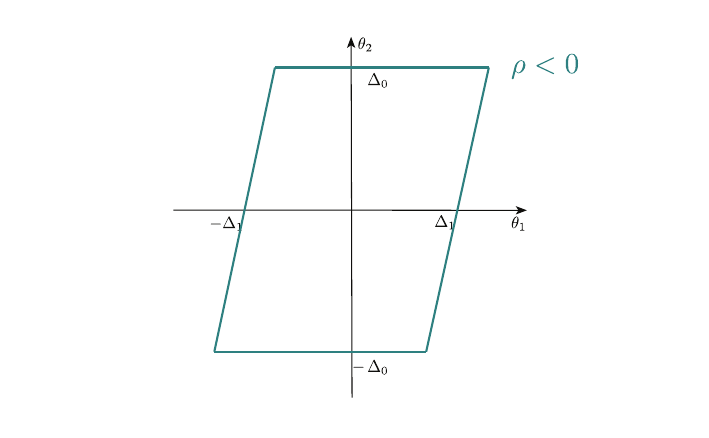}& \includegraphics[scale=0.5,trim={2cm 0 1cm 0 }]{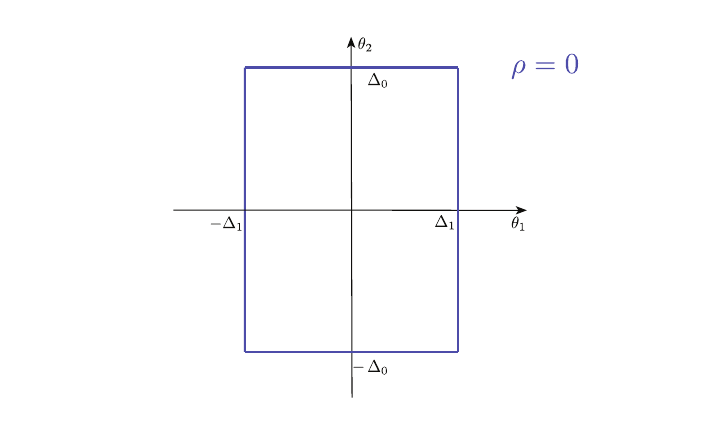}& \includegraphics[scale=0.5,trim={2cm 0 1cm 0 }]{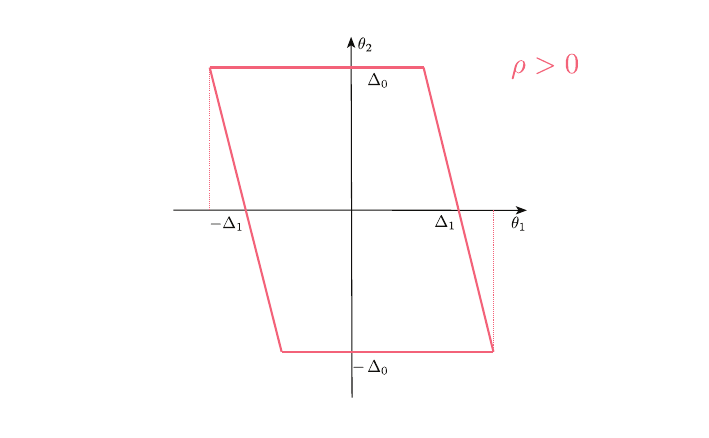}\tabularnewline
\end{tabular}
\par\end{centering}
\caption{\label{fig:Identified-sets-Example2}The identified sets in Example~\ref{exa:runningExample}  for various values of $\rho$.}
\end{figure}

The shape of the full identified set $\setID$ depends on the value of $\rho\bydef \expGeneric\left(Z_{1}Z_{2}\right)$. 
The corresponding marginal identified set for $\theta_1$ can be written in explicit form, $\setMarginal =\left[-\Delta_{1}-2\abs{\rho}\Delta_{0},\Delta_{1}+2\abs{\rho}\Delta_{0}\right]$.
If $\rho=0$,  $\argminT_2$---the second coordinate of the set of primal optima of the program in \eqref{prog:minP}---is not uniquely defined.

The case of a non-singleton set of dual optima occurs when the number of binding inequalities at the primal optimum is larger than the dimension of the parameter space. 
Suppose that we further restrict the parameter space by imposing $\theta_2=0$.
Consider the following two moment inequalities from system \eqref{eq:fullsystem}:
\begin{equation}
- \frac{1}{2} \theta_{1} \leq -\expGeneric\left[\underline{Y}Z_{1}\right],\\ 
 - \frac{1}{2} \theta_{1} \leq  -\expGeneric\left[\underline{Y} \right].\label{eq:subsystemIntersectionBounds}
\end{equation} 
The support function at $e_1$ corresponding to \eqref{eq:subsystemIntersectionBounds} takes the explicit form 
\begin{equation}
    \funLValO = \argminT_1 = 2 \min \{\expGeneric \underline{Y}Z_{1},\expGeneric\underline{Y}\}.
\end{equation}
If $\expGeneric\underline{Y}=\expGeneric\underline{Y}Z_{1}$, then both inequalities in \eqref{eq:subsystemIntersectionBounds} are binding at the optimum,  resulting in a non-singleton dual-optimum set (compare with the parameter-on-the-boundary problem and  the intersection-bounds problem  considered in \citet{andrews2001testing} and \citet{chernozhukov2013intersection}, respectively).
Indeed, the  dual formulation of the support function at $e_1$  takes form
\begin{align}
\funLValO=\max_{ {\lambda}\in\R_{+}^{2}} & \left\{    \expGeneric\left[\underline{Y} \right] \lambda_1+  \expGeneric\left[\underline{Y}Z_1 \right] \lambda_2 \right\} =   \expGeneric\left[\underline{Y} \right]  \max_{ {\lambda}\in\R_{+}^{2}}   \left\{    \lambda_1+  \lambda_2 \right\} \label{eq:DualProgramIntersectionBounds},\\
\text{s.t. } &   -\frac{1}{2}\lambda_1- \frac{1}{2}\lambda_2 =   -1.\nonumber 
\end{align}
This program has the constant value of the objective function on the entire optimization domain, which coincides with its argmin set,
\begin{equation}
   \argminL =\{\lambda\in\R_+^2| \lambda_1+ \lambda_2 =2\}. \QEDB
\end{equation}
\end{example}

The derivative \eqref{prog:minmaxDerivative} depends on particular optimal primal and dual solutions that are selected by a given perturbation $(h_{A},h_{b})$, unless both sets of solutions are singletons.
I refer to parameter values $\Ap$ and $\bp$ resulting in nonsingleton (primal or dual) solutions as \emph{nonregular}.

\cite{shapiro1991asymptotic} proposes a generalization of the delta method for the directionally differentiable functions of asymptotic normal estimators.
Under Condition~\ref{con:slater}, the sample analog of $\funLValO$ has a solution (and thus is well defined) with probability approaching 1. 
Theorems 3.4 and 3.5 in \cite{shapiro1991asymptotic} provide an asymptotic distribution of the sample support function in a given direction.
It takes form
\begin{equation}
  \frac{1}{\sqrt{n}}(\hat{\underline{v}}_n-\funLValO[P]) \wTo \min_{ {\theta}\in \argminT(\measTrue)}\max_{ {\lambda}\in\argminL(\measTrue)} \{ {\lambda}^{\prime}(  \eG_{A}(P) \theta- \eG_{b}(P))\},  \label{eq:asyDistributionShapiro}
\end{equation}
where $\eG(P)\bydef(\eG_{A}(P),\eG_{b}(P))$ is the limiting zero-mean Gaussian process for $\eG_n(P)\bydef\frac{1}{\sqrt{n}} \sum_{i=1}^n(\data_i-\expect \Data)$ indexed by $\measTrue$.
In general, the limit \eqref{eq:asyDistributionShapiro} is non-Gaussian since either of the two sets $\argminT$ and $\argminL$ can be non-singleton. (It does reduce to a Gaussian distribution when both sets are singletons.)

Several robust methods have been designed for inference on components of $\theta$ defined by moment inequalities (nonlinear, in general).
The dominant approach has been to test individual values of $\theta$ or their subvectors and then to invert the tests \citep[for example,][]{chernozhukov2007estimation,andrews2001testing,andrews2019inference,chernozhukov2019inference, cox2019simple,kaido2019confidence}. 
The main advantage of the test-inversion approach is that fixing $\theta$ simplifies the asymptotic distribution of the test statistics and allows for valid inference under weak assumptions (in particular, one can allow for an empty true identified set as in \citet{andrews2019spuriousinference}). 

Although having attractive statistical properties, the test-inversion methods can become computationally intractable in settings with high-dimensional parameters $\theta$ since they are based on grid search and resampling methods.\footnote{See further discussion of the computational issues in Appendix \ref{sec:Discussion}.} In contrast, delta-method  inference using the support function estimators for a fixed direction proposed in this paper remains a computationally tractable (frequentist) option in  high-dimensional settings such as mentioned in Section \ref{subsec:examples} since it fully uses the linear programming structure of the problem.

Inference using the sample support function for a fixed direction in the non-regular case faces three challenges: (i) nondifferentiability makes the standard bootstrap inconsistent \citep[see][]{Fang2018inference}; (ii) discontinuity in the directional derivative \eqref{prog:minmaxDerivative} results in poor (uniform) approximation of \eqref{eq:asyDistributionShapiro} by the numerical bootstrap \citep[see][]{dumbgen1993nondifferentiable,Hong2018numerical};
(iii) the estimator is necessarily (asymptotically) biased \citep[see][]{hirano2012impossibility}. 
The regularized support function estimator proposed in the next section addresses these challenges in a robust and computationally tractable way.


\section{Main results}\label{sec:results}
\subsection{\label{subsec:Regularization} Bounds based on the regularized primal program }

The main source of inference complications, the lack of smoothness in the support function, resolves itself when the sets of primal, $\argminT,$ and dual, $\argminL,$ optima are singletons (see equations \eqref{prog:minmaxDerivative} and \eqref{eq:asyDistributionShapiro}). Consequently, my proposal is to consider a regularized support function at $e_1$ that has a unique solution and approximates the original support function from a known direction, either from above or from below. 
\footnote{\label{rep1.4}If both  primal and dual solutions of the regularized program are unique, all the directional derivatives of the regularized support function at $e_1$  given by  \eqref{prog:minmaxDerivative}  coincide and are given by formulas
\begin{equation}
\frac{\partial \minV\mnP }{\partial (\Ap)_{ij}}=\argminL_i\mnP\argminT_j \mnP\text{ and } \frac{\partial \minV \mnP}{\partial (\bp)_i}=-\argminL_i\mnP,
\end{equation}
where  $\argminT \mnP$ and $\argminL\mnP$ are, correspondingly, the primal and dual solutions  of the regularized program with a regularization parameter $\mn$. }
Since the direction of such a regularization bias is known, the corresponding estimators of the regularized support  function  at $e_1$  admit  standard one-sided  delta-method and bootstrap inference  based on the normal limiting distribution. 

Without regularization, both primal and dual solutions of \eqref{prog:minmax} can be non-singletons. 
To keep the problem analytically tractable, I   focus on the case in which the dual set is a singleton by assumption, and I propose a regularization for the primal problem.
In this way, I can explicitly consider only the bias from the primal regularization and develop the necessary bias correction. 
Appendix Section~\ref{subsec:overidentification} discusses the complementary case of dual regularization.
 
The following \emph{regularized primal program} is strictly convex for any $\mu>0$ and hence has a unique primal solution\label{reply:R1.4} $\argminT (\mu, P)$ and approximates the optimal value of program~(\ref{prog:minP}):
\begin{align}
\funLVal[\mu ] & =\underset{ {\theta}\in\setID}{\mbox{min}}\left\{\eOne {\theta}+\mu \ltwo{ {\theta}}^{2}\right\}.\label{prog:regProg}
\end{align}

The corresponding dual program has a unique solution $\argminL(\mu,P)$ iff the set of constraints satisfies the linear independence constraint qualification (LICQ; see \citet[p.178]{shapiro1991asymptotic}  and \cite{ wachsmuth2013licq}), 
\begin{condition}[LICQ]
\label{con:LICQ}  The matrix of gradients of binding constraints has a full rank for any $ {\theta}\in\setID$. 
\end{condition}

\begin{rem}
There is a purely computational reason to ensure that LICQ holds. 
If it is violated,  Newton-type algorithms, which typically guarantee a (fast) quadratic rate of convergence to a stationary point,  have a linear rate of convergence or do not converge at all (see, for example, \cite{golishnikov2006newton}). 
\end{rem}

The set of DGPs that satisfy LICQ is not closed since a limit of a sequence of linearly independent matrices can be a reduced rank matrix. 
This means that depending on the size of the smallest singular value of the gradients of the set of binding constraints at any point may be arbitrary small while still satisfying LICQ.
As a result, the dual solutions $\argminL(\mu,P)$ may be arbitrary large, implying large derivatives of $\minV\mnP $ 
and may require very high sample sizes for reasonable precision of the delta-method inference (see equation \eqref{eq:boundsOnLinearApproximation} in Lemma \ref{lem:boundOnVgrowth} in Appendix Section \ref{app:smoothness} for details).
In the next section, I provide sufficient conditions for LICQ that explicitly ensure that the class of DGP under consideration is closed, so that we can uniformly control the quality of the delta-method inference within this class. 
Then I illustrate these conditions in the context of Example~\ref{exa:runningExample}. 

\subsubsection{Testable sufficient conditions for uniqueness of dual solutions}\label{subsec:UniqueDual}
LICQ is often considered  hard to verify \citep[see][]{kaido2019constraint}. 
In fact, a direct test of this assumption would face two problems: (i) the set $\setID$ is unknown but can only be estimated with an error; (ii) the set of binding inequalities at each $\theta$ is also unknown.
Because of their multiple-testing nature, both problems would complicate inference beyond the practical level.  
Moreover,   LICQ does not provide an explicit bound on the dual variables required for uniform validity analysis.

Some authors, including \cite{hsieh2017inference}, make a high-level assumption about boundedness of the dual variables.
Others (KMS and \cite{andrews2019inference}) focus on test inversion and thus only require restriction on non-degenerate covariance matrix of the moment functions.
Both KMS and \cite{andrews2019inference} compute critical values of a test at a particular point $\theta$  which they then invert. (\cite{kaido2019confidence} additionally propose a method for computationally efficient interpolation of confidence sets based on test inversion.)
KMS, for example, only need a bounded dual variable for each of the bootstrap draws for their purposes.   
The local linear program that is used in KMS bootstrap has a bounded dual variables almost surely under the aforementioned covariance constraints since its inequality constraints have random bootstrapped coefficients.
In contrast, I need stronger assumptions to be able to estimate the nuisance parameter, the argmin set $\argminT(\mnP)$ and the dual solutions $\argminL(\mnP)$, which allows me to avoid the test inversion stage and achive compuational gains.

I propose a sufficient condition for LICQ in the form of bounds on the values of two auxiliary optimization programs.
The values of these programs, in turn, explicitly characterize an upper bound on the dual variables, allowing for a uniform asymptotic analysis. 
Specifically, within this class of DGP satisfying this assumption, we can uniformly control the quality of the delta-method inference by establishing an explicit uniform bound on the relevant second-order directional derivatives.

To introduce the sufficient conditions for LICQ, I  use the following notation.
For any $\jSet\subset\setIneq$, let the matrix $\activeMat=(e_{i_1},...,e_{i_\ell})^\prime$ correspond to $ \activeSet\bydef \setEq \cup \jSet = \{i_1,...,i_\ell\} $, a set of active constraints.
Let $\eta_1(\cdot)$ be the smallest left singular value function of a matrix; that is, $\eta_1(A) \bydef \sqrt{ \min_{u}( u^\prime  A A^\prime u/ u^\prime  u ) }$ .
The sufficient conditions can now be formulated as the following two assumptions on every submatrix $\activeMat (\Ap|\bp)$ that include all $p$ equality constraints and either $d-p$ or $1+(d-p)$ inequality constraints. 

\begin{assumption}\label{ass:ULICQ} Measure $\measTrue$  satisfies two conditions:

\begin{enumerate}[label={\textbf{\Alph*.}},
  ref={\theassumption.\Alph*}]
 \item \label{assu:RankCondition} For any combination $\jSet$ consisting of  all  $p$ equality constraints and $d-p$ inequality constraints, the corresponding submatrices of coefficients $\activeMat (\Ap|\bp) $ of the full set of constraints $ (\Ap|\bp)$  have singular values that are uniformly bounded from below by a positive number $\eta(\measTrue)$.
 
 \item \label{assu:NoOveridentification} Any combination  $\jSet$ consisting of all  $p$ equality constraints and $d-p+1$ inequality constraints  cannot be simultaneously satisfied as equality at any point $\theta\in\setID$.
  \end{enumerate}
\end{assumption}

Assumption \ref{ass:ULICQ} can be summarized using two characteristics:
\begin{align}
\eta(\measTrue)&\bydef\min_{\text{s.t. }\begin{matrix}
\jSet  \subset\setIneq \\
\card\jSet  = d-p 
\end{matrix}}\eta_1\left(\activeMat (\Ap|\bp) \right) >0,\label{eq:rankD}\\ 
 s(\measTrue)&\bydef\min_{\text{s.t. } \begin{matrix}
\jSet\subset  \setIneq \\
\card\jSet =  d-p+1\\
{\theta}\in  \setID
\end{matrix}}\ltwo{\activeMat  (\Ap\theta-\bp)} >0.\label{eq:GenericMomentConditions-1}
\end{align}
These numbers measure how close a given DGP $\measTrue$ to a violation of LICQ condition in population and determine how many observations are required to meet LICQ and have non-empty feasible set for a sample analog of program \ref{prog:regProg} with a given probability (see Lemma \ref{lem:wellDefined} in Appendix).

Both characteristics $\eta(\measTrue)$ and $ s(\measTrue)$ can be consistently estimated using a plug-in approach.
As long as a sample analog of $\setID$ is nonempty, sample analogs of $\eta(\measTrue)$ and $ s(\measTrue)$ will be generically positive.
In principle, a critical value can be obtained for a formal test of hypothesis $\eta(\measTrue)\geq \underline{\eta}$ and $s(\measTrue)\geq \underline{s}$  for any given pair of numbers $\underline{\eta},\underline{s}$ along the lines of \cite{cragg1997inferring}.\footnote{See also Appendix Remark \ref{rem:contastLICQ2} for an alternative representation of Assumption \ref{ass:ULICQ} as a single characteristic minimal bound on a different subset of submatrices  of $ (\Ap|\bp)$.}
I leave this formal test for future research.

Assumption~\ref{ass:ULICQ} rules out more than $d$ binding inequality constraints at any point $\theta\in\setID$. 
There are some special empirical applications resulting in a singleton identified set $\setID$ in which such \emph{overidentifying} inequality constraints can appear, which can result in multiplicity of dual solutions.\footnote{See, for example, \cite{gafarov2014identification} and \citet{shi2018estimating} who consider cases of infinitely many inequality conditions.}
This multiplicity can be eliminated using a regularization of the dual program \eqref{eq:DualProgram0}; that is, Assumption~\ref{assu:NoOveridentification} can be relaxed within the framework proposed in this paper, but such an extension is left for future research.
I briefly discuss this proposal in  Appendix Section \ref{subsec:overidentification}.

It is instructive to see what Assumption~\ref{ass:ULICQ} implies for our running example.
\renewcommand\thmcontinues[1]{Continued}
\begin{example}[ continues=exa:runningExample   ]\label{exa:runningExample5}
In this setup, there are four moment inequality conditions defined in \eqref{eq:subsystem}. They correspond to the following matrix of coefficients: 
\begin{equation}
    (\Ap|\bp) = \left(\begin{array}{cc|c}
\frac{1}{2} & \expGeneric\left[Z_{1}Z_{2}\right] & \expGeneric\left[\bar{Y} Z_{1}\right]\\
-\frac{1}{2} & -\expGeneric\left[Z_{1}Z_{2}\right] & -\expGeneric\left[\underline{Y}Z_{1}\right]\\
0 &\expGeneric\left[\left(1-Z_{1}\right)Z_{2}\right]& \expGeneric\left[\bar{Y}\left(1-Z_{1}\right)\right]\\
0 & -\expGeneric\left[\left(1-Z_{1}\right)Z_{2}\right]& -\expGeneric\left[\underline{Y}\left(1-Z_{1}\right)\right]
\end{array}\right)
\end{equation}
Here $p=0$, so to check Assumption~\ref{assu:RankCondition} one need to consider all subsets with two rows out of four, six combinations in total. 
For example, the submatrix with rows $\jSet=\{1,2\}$ takes form
\begin{equation}
    \activeMat (\Ap|\bp) = \left(\begin{array}{cc|c}
\frac{1}{2} & \expGeneric\left[Z_{1}Z_{2}\right] & \expGeneric\left[\bar{Y}Z_{1}\right]\\
-\frac{1}{2} & -\expGeneric\left[Z_{1}Z_{2}\right] & -\expGeneric\left[\underline{Y}Z_{1}\right]
\end{array}\right).
\end{equation}
This matrix has a full row rank iff $\expGeneric\left[\bar{Y}Z_{1}\right]\neq \expGeneric\left[\underline{Y}Z_{1}\right]$ or $\Delta_1>0$. As result, we just verified that for $\jSet=\{1,2\}$ we have $\eta_1\left(\activeMat (\Ap|\bp) \right) >0$.
Similarly, the matrix corresponding to rows $\jSet=\{3,4\}$ has full rank iff both $\Delta_0>0$ and $\expGeneric\left[\left(1-Z_{1}\right)Z_{2}\right]\neq0$.
Under those conditions, submatrices with pairs of rows $\jSet\in\big\{\{1,3\},\{1,4\},\{2,3\}\{2,4\}\big\} $  are also all full rank, and thus their left singular values are all positive.
To summarize,   Assumption~\ref{assu:RankCondition} is satisfied iff   
\begin{align}
&\Delta_0>0,\Delta_1>0, \label{eq:non-trivial-intervals}\\
&\expGeneric\left[\left(1-Z_{1}\right)Z_{2}\right]\neq0.\label{eq:noMulticollinearity}
\end{align}
What do these conditions mean in practical terms?
Inequalities~\eqref{eq:non-trivial-intervals} imply that the upper and lower bounds on $Y$ are different from each other on average, conditional on $Z_1$, while  inequality~\eqref{eq:noMulticollinearity}  implies that the instrument $Z_2$ (with values in $\{0,1\}$) is not perfectly correlated with $Z_1$. 
In fact, in the case where the bounds $\underline{Y}$ and $\bar{Y}$ coincide with probability 1, one can replace the corresponding pair of moment inequalities with a single equality.
The degenerate case $\expGeneric\left[\left(1-Z_{1}\right)Z_{2}\right]=0 $  is analogous to the multicollinearity problem in the usual linear regression setup.  

Inequalities~\eqref{eq:non-trivial-intervals}   imply that Assumption~\ref{assu:NoOveridentification} is satisfied. 
To illustrate a violation of Assumption~\ref{assu:NoOveridentification}, suppose that we add one more moment condition corresponding to instrument $Z_2$, $$\expGeneric\left[\underline{Y}Z_{2}\right]\leq  \theta_{1}\expGeneric\left[Z_{1}Z_{2}\right]+ \frac{1}{2}\theta_{2}.$$ 
Assumption~\ref{assu:NoOveridentification} would be violated if, for example, this additional inequality was binding at the corner points of the original identified set, $\theta=(-\Delta_{1}\mp 2\rho\Delta_{0}, \pm \Delta_{0})$. 

It is worth contrasting Assumption~\ref{ass:ULICQ} with LICQ:
the latter only restricts the gradients of the submatrices of $\Ap$ at any point where the corresponding constraints are active. 
It turns out that instead of checking the active constraints at any given point, one can restrict the singular values of submatrices $\activeMat (\Ap|\bp)$.
In this simple example, we can manually verify that LICQ holds under Assumption~\ref{ass:ULICQ}  (the general proof is given in Lemma \ref{lem:LICQ} in the Appendix Section \ref{subsec:LICQdiscussion}).
First, let us consider pairs of constraints with collinear gradients that potentially could violate LICQ.
For example, the submatrix of $\Ap$ corresponding to rows $\jSet=\{1,2\}$,
\begin{equation}
    \activeMat  \Ap  = \begin{pmatrix}
\frac{1}{2} & \expGeneric\left[Z_{1}Z_{2}\right]  \\
-\frac{1}{2} & -\expGeneric\left[Z_{1}Z_{2}\right]  
\end{pmatrix},
\end{equation}
is always a reduced rank matrix. 
Nevertheless, the corresponding two inequality constraints cannot be binding simultaneously by  the Rouch\'e-Capelli  theorem (also known as Kronecker–Capelli theorem) since  $ \activeMat (\Ap|\bp)$ for them has rank equal  to 2 which  is larger than 1, the  rank of $\activeMat \Ap$.
Correspondingly, these two constraints do not violate LICQ at any point (since they do not intersect) despite having collinear gradients.
Second, for constraint pairs like $\jSet=\{1,3\}$ we have $ \rank (\activeMat  \Ap )= \rank ( \activeMat (\Ap|\bp)) = 2$, and hence the corresponding corner point (intersection of these constraints) exist as a unique solution to $(\activeMat  \Ap) \theta = \activeMat \bp$. That corner point does not violate  LICQ since $ \rank(\activeMat  \Ap )=d=2$.
By this logic, we can prove that all the points in this example have at most 2 binding constraints, all of which have  linearly independent gradients. 
Third, the faces of the polygon are always defined by a subset of the inequalities defining their corners.
Hence the inequalities defining the faces are also linearly independent (a matrix of full rank has all submatrices of full rank) and any point on a face does not violate LICQ.
Finally, we do not need to check the LICQ in the internal points of $\setID$ since there are no binding constraints at those points by definition. 
Thus, in this example, we just verified that LICQ indeed holds under Assumption  \ref{ass:ULICQ}. $\QEDB$
\end{example}

Appendix Section \ref{subsec:LICQdiscussion} contains further technical details about Assumption \ref{ass:ULICQ}.

\subsubsection{Tighter bounds on the support function at a given direction} \label{subsec:tighterBounds}

Clearly, for any positive $\mu$, the value of the regularized program \eqref{prog:regProg} $\minV(\mu\measTrue)$ is larger than $\minV (\measTrue)$ by at least $\mu\ltwo{\argminT(\mu,\measTrue)}^{2}$.
A tempting approach would be to correct for this regularization bias by subtracting $\mu\ltwo{\argminT(\mu,\measTrue)}^{2}$. 
Unfortunately, this correction makes the corrected value function $\minV(\mu,\measTrue)-\mu \ltwo{\argminT(\mu,\measTrue)}^{2}$ non-differentiable in the paramteters $A_P,b_P$  if  the non-regularized value function   $\minV (\measTrue)$ itself is non-differentiable. 
So to preserve differentiability of the bias-corrected value function, which is crucial for delta-method inference,  one cannot use argmin of the actual regualrized programm for bias corrections. 
Instead, I suggest to tighten the bounds  on $\minV (\measTrue)$ using one the following two corrections,
\begin{align}
\minVin(\mu,\kappa,\measTrue) &\bydef \minV(\mu,\measTrue)-\mu \ltwo{\argminT(\kappa,\measTrue)}^{2},\label{eq:innerBound} \\
\minVout(\mu,\measTrue) &\bydef\minV(\mu,\measTrue)-\mu \ltwo{\theta^*}^{2},\label{eq:outerBound}
\end{align}
where $\argminT(\kappa,\measTrue)$ is an argmin of the regularized program \eqref{prog:regProg} with a larger tuning parameter $\kappa$ instead of $\mu$, $\theta^*$ is  any point in $ \argminT(\measTrue)$. 
As $\mu$ and $\kappa$ shrink to zero, these bounds continuously shrink to $\minV (\measTrue)$ above and below, respectively.  The expressions in \eqref{eq:innerOuterBounds} provide valid conservative bounds from above and below for $\minV(\measTrue)$ that are useful for uniform one-sided inference (coverage probability in this case can be higher than nominal level). For the remainder of the paper, the focus will be on the confidence bounds that cover $\minV (\measTrue)$ from below. So $\minVout(\mu,\measTrue) $ will play the major role. The uses of $\minVin(\mu,\kappa,\measTrue)$ for uniform inference are discussed in Appendix Section \ref{subsec:overidentification} 

\begin{thm}
\label{thm:bounds}For any DGP parameterized by $\measTrue$ satisfying Assumption~\ref{assu:Non--empty} and any $\kappa\geq \mu \geq 0$, the following bounds hold:
\begin{equation}
\minVout(\mu,\measTrue) \leq \minV(\measTrue) \leq \minVin(\mu,\kappa,\measTrue). \label{eq:innerOuterBounds}
\end{equation}
If,  in addition,  $\measTrue$ satisfies  Assumption ~\ref{ass:ULICQ}--\ref{assu:NoOveridentification}, then there exist $\bar{\mu}(\measTrue)>0$ such that $\minVin(\mu,\kappa,\measTrue)=\minV(\measTrue)$
for any fixed $\mu<\kappa<\bar{\mu}(\measTrue)$. Furthermore, if $\argminT(\measTrue)$ is a singleton, then $\minVout(\mu,\measTrue)=\minV(\measTrue)$ for any $\mu<\bar{\mu}(\measTrue)$.
\end{thm}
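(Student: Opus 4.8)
The plan is to handle the sandwich \eqref{eq:innerOuterBounds} by elementary comparisons of optimal values, and to obtain both exactness claims from a single \emph{exact regularization} lemma: once $\mu$ is small enough, the minimizer of \eqref{prog:regProg} stops moving and coincides with the $\ell_2$-minimal-norm optimal solution of the linear program \eqref{prog:minP}. Throughout write $\theta^\mu\bydef\argminT(\mu,\measTrue)$, which is well defined and unique because the objective of \eqref{prog:regProg} is strictly convex and $\setID$ is a nonempty compact polytope under Assumption~\ref{assu:Non--empty}.

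First I would record a monotonicity fact: comparing the two defining optimality inequalities for $\theta^\mu$ and $\theta^\kappa$ with $\mu\le\kappa$ and adding them yields $(\kappa-\mu)\ltwo{\theta^\kappa}^2\le(\kappa-\mu)\ltwo{\theta^\mu}^2$, so $\mu\mapsto\ltwo{\theta^\mu}$ is nonincreasing. The lower bound is then immediate: any $\theta^*\in\argminT(\measTrue)$ is feasible in \eqref{prog:regProg}, so $\minV(\mu,\measTrue)\le\theta^*_1+\mu\ltwo{\theta^*}^2=\minV(\measTrue)+\mu\ltwo{\theta^*}^2$, which rearranges to $\minVout(\mu,\measTrue)\le\minV(\measTrue)$. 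For the upper bound, feasibility of $\theta^\mu$ in \eqref{prog:minP} gives $\minV(\measTrue)\le\theta^\mu_1=\minV(\mu,\measTrue)-\mu\ltwo{\theta^\mu}^2\le\minV(\mu,\measTrue)-\mu\ltwo{\theta^\kappa}^2=\minVin(\mu,\kappa,\measTrue)$, the last step by the norm monotonicity since $\mu\le\kappa$. This proves \eqref{eq:innerOuterBounds}.

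Next I would reduce the exactness statements to the lemma. Let $F\bydef\argminT(\measTrue)=\{\theta\in\setID:\eOne\theta=\minV(\measTrue)\}$, a nonempty compact face of $\setID$, and let $\theta^0$ be the unique Euclidean projection of the origin onto $F$. The lemma to establish is: under Assumptions~\ref{assu:RankCondition}--\ref{assu:NoOveridentification} there is $\bar\mu(\measTrue)>0$ with $\theta^\mu=\theta^0$ for every $\mu\in(0,\bar\mu(\measTrue))$. Granting it, for $\mu<\kappa<\bar\mu(\measTrue)$ one has $\minV(\mu,\measTrue)=\theta^0_1+\mu\ltwo{\theta^0}^2=\minV(\measTrue)+\mu\ltwo{\theta^0}^2$ and $\theta^\kappa=\theta^0$, hence $\minVin(\mu,\kappa,\measTrue)=\minV(\mu,\measTrue)-\mu\ltwo{\theta^0}^2=\minV(\measTrue)$, giving the second claim; and if $\argminT(\measTrue)$ is a singleton then $F=\{\theta^*\}$, so $\theta^0=\theta^*$ and $\minVout(\mu,\measTrue)=\minV(\mu,\measTrue)-\mu\ltwo{\theta^*}^2=\theta^0_1=\minV(\measTrue)$ for every $\mu<\bar\mu(\measTrue)$, giving the third.

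To prove the lemma I would argue via normal cones. By convexity, $\theta^0$ solves \eqref{prog:regProg} iff $-(\eOneVec+2\mu\theta^0)\in\mathcal N(\theta^0)\bydef\{v:\langle v,\theta-\theta^0\rangle\le0\ \forall\theta\in\setID\}$. Optimality of $\theta^0$ in \eqref{prog:minP} gives $-\eOneVec\in\mathcal N(\theta^0)$; optimality of $\theta^0$ for $\min_{\theta\in F}\ltwo{\theta}^2$ gives $-2\theta^0\in\mathcal N_F(\theta^0)$, and since $F$ is the intersection of the two polyhedra $\setID$ and $\{\eOne\theta=\minV(\measTrue)\}$ the polyhedral sum rule gives $\mathcal N_F(\theta^0)=\mathcal N(\theta^0)+\R\eOneVec$. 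So $-2\theta^0=c+s\,\eOneVec$ with $c\in\mathcal N(\theta^0)$, $s\in\R$, whence $-(\eOneVec+2\mu\theta^0)=(1-\mu s)(-\eOneVec)+\mu c$ lies in the convex cone $\mathcal N(\theta^0)$ as soon as $1-\mu s\ge0$; taking $\bar\mu(\measTrue)\bydef1/\max\{s,1\}>0$ makes $\theta^0$ optimal, and strict convexity forces $\theta^\mu=\theta^0$, for all $\mu\in(0,\bar\mu(\measTrue))$. Assumptions~\ref{assu:RankCondition}--\ref{assu:NoOveridentification} (through the LICQ of Assumption~\ref{assu:LICQ} and Lemma~\ref{lem:LICQ}) enter when one wants $\bar\mu(\measTrue)$ in closed form: LICQ makes the active-gradient matrix at $\theta^0$ invertible with smallest singular value at least $\eta(\measTrue)$, which lets $s$, $c$, and $\ltwo{\theta^0}$ be bounded explicitly in terms of $\eta(\measTrue)$ and $\mathrm{diam}(\Theta)$ — the form the later uniform results need. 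The main obstacle will be exactly this lemma: naive perturbation of the KKT multipliers of \eqref{prog:minP} fails without strict complementarity (multipliers of constraints that are tight but inactive may turn negative), so the normal-cone decomposition is essential, and turning it into a DGP-uniform, explicit threshold $\bar\mu(\measTrue)$ is the part that requires real care.
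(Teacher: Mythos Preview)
Your proof is correct. The sandwich \eqref{eq:innerOuterBounds} via the norm-monotonicity inequality is exactly what the paper does (this is the content of Lemma~\ref{lem:bounds}). For the exactness claims, however, you take a genuinely different and somewhat cleaner route.

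The paper establishes the constancy of $\argminT(\mu,\measTrue)$ for small $\mu$ (Lemma~\ref{lem:constantTheta}) by working through the KKT system: under LICQ (delivered by Assumptions~\ref{assu:RankCondition}--\ref{assu:NoOveridentification} via Lemma~\ref{lem:LICQ}) the active-constraint gradients are linearly independent, the multipliers are unique and explicitly bounded (Lemma~\ref{lem:boundOnLambda}), and one can then verify directly that the KKT system at $\theta^0$ for the regularized program admits a feasible multiplier vector for all $\mu$ below an explicit threshold depending on $\eta(\measTrue)$ and the box $\Theta$. Your normal-cone decomposition $\mathcal N_F(\theta^0)=\mathcal N_{\setID}(\theta^0)+\R\eOneVec$ sidesteps the multiplier bookkeeping entirely and, as you note, actually delivers the existence of some $\bar\mu(\measTrue)>0$ under Assumption~\ref{assu:Non--empty} alone --- so you prove slightly more than the theorem states. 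The trade-off is exactly what you flag at the end: the paper's KKT route produces an explicit, DGP-uniform lower bound on $\bar\mu(\measTrue)$ in terms of $\underline\eta$, $\underline s$, and $\mathrm{diam}(\Theta)$ as a by-product, which is what Theorem~\ref{thm:TheoremUniform} ultimately needs, whereas your decomposition $-2\theta^0=c+s\eOneVec$ is nonunique and bounding the minimal admissible $s$ uniformly over $\Measures$ would still require invoking LICQ and the singular-value bound $\eta(\measTrue)\ge\underline\eta$.
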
 
 
 \begin{proof}
 See Appendix Section~\ref{sec:Proof-of-Theorem1}.
 \end{proof}

Although for a given DGP $P$, characterized by $(A_P,b_P)$, the cutoff $\bar{\mu}(\measTrue)$ is well defined, it can change discontinuously as a result of a small change in $(A_P,b_P)$. It makes a consistent estimation of $\bar{\mu}(\measTrue)$ a challenging task. 
So instead of trying to estimate $\bar{\mu}(\measTrue)$, I suggest using two appropriately chosen shrinking sequences $\mu_n$ and $\kappa_n$ instead of fixed tuning parameters $\mu$ and $\kappa$.

The gap   between $\minV(\measTrue)$  and $\minVout(\mu_n,\measTrue) $  is at least $\mu_n (\ltwo{\theta^*}^{2}-\ltwo{\argminT(\mu_n,\measTrue)}^2)\geq0$; for $   \minVin(\mu_n,\kappa_n,\measTrue) $ the gap is at most $\mu_n (\ltwo{\argminT(\mu_n,\measTrue)}^{2}-\ltwo{\argminT(\kappa_n,\measTrue)}^2)\leq0$.
In the nonregular case, that is where $\argminT(\measTrue)$ is non-singleton, the gap for $\minVout(\mu_n,\measTrue) $ is shrinking to $0$ at rate $\mu_n$ which still results in a consistent estimators of $\minV (\measTrue)$ at rate $\mu_n$ that can be slightly slower than the regular parametric rate $1/\sqrt{n}$ (for example, $ {\sqrt{\ln{n}}}/{\sqrt{n}}$).
As a result, the corresponding confidence sets cover the projections of the sharp identified set, not an enlargement of it (in contrast to other studies, including \cite{cho2023simple}, who do not use consistent estimators of the sharp bounds on the identified set and study an enlarged identified set instead).
The gap for $\minVin(\mu_n,\kappa_n,\measTrue)$ becomes exactly equal to zero for some sufficiently large $n$ in both regular and nonregular cases, resulting in exact corresponding point-wise one-side confidence sets.

\subsection{\label{subsec:inference} Large-sample results for the estimated regularized support function}

Consider the analog estimator of $\funLVal[\mu_{n}]$ for some sequence $\mu_{n}$,
\begin{equation}
\funLValHat[\mu_{n}]\bydef\underset{ {\theta}\in\setID[\measEmp]}{\mbox{min}}\left\{\eOne {\theta}+\mu_{n}\ltwo{ {\theta}}^{2}\right\} \label{prog:minSSAregularized}.
\end{equation}
As we shall see in this subsection, this estimator admits an asymptotic linear (Bahadur-Kiefer) expansion with an explicit approximation error, which determines the precision of asymptotic normal and bootstrap inference. 
This allows me to establish the validity of the corresponding inference methods \emph{uniformly} over a reasonable class of DGP. 
Uniform asymptotic validity is crucial for small-sample performance of inference methods in the presence of possible discontinuous changes of the asymptotic distribution of estimators (for example, $\funLValHat[\mu_{n}]$) with respect to DGP parameters (for example, $\measTrue$).

\subsubsection{A class of DGPs under consideration}

I consider the class of all measures $\Measures=\Measures(\underline{\eta},\underline{s},\varepsilon,\bar{M})$ that satisfy Assumptions~\ref{assu:Non--empty}-\ref{assu:Moments}  with some    positive constants $\underline{\eta}$, $\underline{s}$, $\varepsilon$, and $\bar{M}$.
\begin{assumption} 
\label{assu:Moments}There exist  $\varepsilon>0$  and $\bar{M}<\infty$ such that
\begin{align}
\expect\left\Vert \Data\right\Vert  & ^{2+\varepsilon} <\bar{M}.
\end{align}
\end{assumption}
To summarize, every  $\measTrue\in \Measures$ satisfies $\setID\neq \emptyset$,
$\eta(\measTrue)>\underline{\eta}$, $s(\measTrue)>\underline{s}$,  and $\expect\left\Vert \Data\right\Vert^{2+\varepsilon} < \bar{M}^{2+\varepsilon}$. \label{reply:R3.5.c.precompact}

The class $\Measures$  includes DGPs  (parameterized by measure $\measTrue$) with multiplicity of primal solutions of \eqref{prog:minP} but assumes uniqueness of dual solutions. 
This property is necessary to show that program~\eqref{prog:minSSAregularized} has a nonempty sample argmin and a unique vector of Lagrange multipliers in large enough samples with probability approaching 1 uniformly in $\measTrue\in\Measures$ as $n\to\infty$ (see  Lemma~\ref{lem:wellDefined} in Appendix Section \ref{sec:Proof-of-Theorem2}). 
As discussed in Section~\ref{subsec:UniqueDual}, one can study a case with multiple dual solutions and a unique primal solution in a similar way (see Appendix Section~\ref{subsec:overidentification}).

\subsubsection{A higher-order envelope theorem and a strong approximation of the regularized support function}
A Bahadur-Kiefer expansion of estimator defined in program~\eqref{prog:minSSAregularized} can be established using the envelope theorem \eqref{prog:minmaxDerivative}.
As discussed in Section~\ref{subsec:asymptoticDistribution}, this theorem typically holds for the sample support function at $e_1$ .
However, this theorem  does not provide bounds on the higher-order directional derivatives.
So, without additional assumptions, the directional delta method of \cite{shapiro1991asymptotic} does not provide means to evaluate the error of the asymptotic approximation of the sample support function at $e_1$  by \eqref{eq:asyDistributionShapiro}. 
Since the limiting distribution changes discontinuously with the DGP $\measTrue$, poor performance of asymptotic methods based on the limit \eqref{eq:asyDistributionShapiro} should be expected when the parameter $\measTrue$ is close to a discontinuity point.
In other words, inference procedures that estimate \eqref{eq:asyDistributionShapiro} directly (for example, subsampling or directional bootstrap) will inevitably be only point-wise valid and can have poor finite-sample performance in such cases. (It is an implication of the impossibility theorem of \citet{hirano2012impossibility} for a functional that is directionally differentiable.)  

In contrast, the regularized support function at $e_1$  admits a stronger version of the envelope theorem that provides a bound on the error of the linear approximation uniformly over $\measTrue\in\Measures$.
I developed this novel bound using a \emph{second-order directional Taylor expansion} of a system of generalized inequalities that define the optimal solutions to the regularized program (see Lemmas~\ref{lem:directionalDerivative} and \ref{lem:boundOnVgrowth} in  Appendix Section \ref{app:smoothness}).
Using this result, the asymptotic linear (Bahadur-Kiefer) representation of the value function in program~\eqref{prog:minSSAregularized} follows almost immediately  (see  Lemma~\ref{lem:Bahadur} in Appendix  Section \ref{sec:Proof-of-Theorem2}):
  \begin{equation}
 \sqrt{n} ( \funLValHat[\mu_{n}]- \minV\mnP ) = \frac{1}{\sqrt{n}}\sum^{n}_{i=1}\argminL\mnP^\prime\momentVec{\data_i}{\argminT\mnP}   +O_\Measures(\frac{1}{\mn \sqrt{n}}). \label{eq:Bahadur}
\end{equation}
The first term on the right-hand side of this representation is a scaled sample average of a zero-mean random variable, which admits a uniform Gaussian approximation. (Indeed, the binding constraints have zero mean at $\argminT$, while the nonbinding constraints are multiplied by zero dual variables $\argminL$.) 
The residual term $O_\Measures(1)$ denotes a uniformly tight sequence of  a random process indexed by  $\measTrue\in\Measures$. Analogously, I  denote any random sequence  as  $ o_\Measures(1)$ if 
$\lim_{n\to\infty}\sup_{\measTrue \in \mathcal{P}}\measTrue\left( \ltwo{\zeta_n(\measTrue) }\geq \epsilon \right) =0$.

The Bahadur-Kiefer representation \eqref{eq:Bahadur} is important for three reasons. 
First, it suggests a coupling of $\funLValHat[\mu_{n}]$ with a Gaussian process (through the Yurinsky theorem). 
Second, it implies uniform validity of the optimization-free  score bootstrap, which is particularly computationally convenient.
Third, it suggests an  analog estimator of the asymptotic variance of the regularized support-function estimator,   
\begin{equation}
 \funLStdHat[\mu_{n}][][2]=\sMean{(\funLLagrangeHat[\mu_{n}][][\prime]\momentVec{\data_i}{\funLArgHat[\mu_{n}][]})^2},   \label{eq:standardErrorFormula}
\end{equation}
where  $\funLArgHat[\mu_{n}]$ and $\funLLagrangeHat[\mu_{n}]$ are, respectively, the optimum and the vector of Lagrange multipliers of (\ref{prog:minSSAregularized}), which are provided by  common constraint-optimization software packages. 
 
These implications are summarized in the following theorem. 

\begin{thm}
\label{thm:consistency}Consider any sequence $\mu_{n}$ such that
$\mu_{n}\to0$ and $\mu_{n}\sqrt{n}\to\infty$.  Then with probability approaching 1 uniformly in $\measTrue\in \Measures$,
\begin{align}
\lim_{n\to\infty} \sup_{\measTrue \in \mathcal{P}}&\pi(\sqrt{n}(\funLValHat[\mu_{n}]-\funLVal[\mu_{n}][\measTrue]), N\left(0,\funLStd[\mu_{n}][\measTrue][2]\right))  = 0,\label{eq:coupling}\\ 
\funLArgHat[\mn] &=\funLArg[\mn] +  O_\Measures(\frac{1}{\mu_n\sqrt{n}}),\label{eq:thetaConsistency}\\ 
\funLLagrangeHat[\mn]&=\funLLagrange[\mn] +  O_\Measures(\frac{1}{\sqrt{n}}),\label{eq:lambdaConsistency}\\
 \funLStdHat[\mn]& = \funLStd[\mn]  +  o_\Measures(1). \label{eq:convergenceSpeedStd}
\end{align}
The function $\pi(\cdot,\cdot)$ is the Levy-Prohorov metric, which metricizes the weak topology of   probability measures (see \cite{van1996weak}).
\end{thm} 
 \begin{proof}
 The strong approximation result \eqref{eq:coupling} is based on the uniform bound on the higher-order directional derivatives (implied by Assumptions \ref{assu:Non--empty},\ref{ass:ULICQ}) and the generalization of the \cite{yurinskii1978error} coupling proposed in \citet[][Proposition A.5.2 on p. 457]{van1996weak}  ( it is implied by \ref{assu:Moments} for i.i.d. data).
 See Appendix Section~\ref{sec:Proof-of-Theorem2} for details.
 \end{proof}

The coupling with a Gaussian random process \eqref{eq:coupling} is a  \emph{strong approximation} result, which can be understood using a geometric interpretation.
The distance between the difference $ \sqrt{n} ( \funLValHat[\mu_{n}]- \minV\mnP )$ and some sequence of zero-mean Gaussian r.v.s  $ N\left(0,\funLStd[\mu_{n}][\measTrue][2]\right)$ 
converges to zero with the same \emph{uniform} rate for all DGP $\measTrue \in \mathcal{P}$.
This property is stronger than conventional CLT-type results, since it does not require the existence of a limiting distribution for the approximating Gaussian variables. 


\subsection{Uniformly valid inference\label{subsec:pointwise-ci}}

 Theorems~\ref{thm:bounds} and \ref{thm:consistency} can be used to construct uniformly valid confidence bands (one-sided CS) for $\minV(\measTrue)$.
The corresponding generic algorithm takes the following form: \vspace{0.5cm}
\begin{enumerate}
    \item[Step 1.]  Compute the regularized sample support function $\funLValHat[\mu_{n}]$ at $e_1$  defined in \eqref{prog:minSSAregularized}.
    \item[Step 2.] Compute the standard error using \eqref{eq:standardErrorFormula}.
    \item[Step 3.] Compute a bias adjustment using sample analogs of either $\mu_{n} \ltwo{\argminT(\kappa_{n},\measTrue)}^{2}  $ (for an upper confidence band) or the norm of some point in the argmin set $\mu_{n} \ltwo{\theta^*}^{2}$   (for a lower confidence band).
\end{enumerate}
\vspace{0.5cm}

The following subsections explain specific implementations of this algorithm for uniformly valid CSs for a projection of the identified set on a single coordinate or multiple coordinates and for the argmin set of a linear program with estimated coefficients.

\subsubsection{Application to confidence sets for a scalar projection of the identified set  } \label{sec:DriftingDGP}

Let's revisit one of the primary objects of interest in the moment-inequality models: CSs on projections of the identified set, $\setMarginal\bydef\left[\funLValO,\funUValO\right]$ .

Theorems~\ref{thm:bounds} and \ref{thm:consistency}  suggest that the outer-bound  estimator for the minimal value,
\begin{equation}
\minV^{out}(\mn,\measEmp)   \bydef\funLValHat[\mu_{n}]-\mu_{n}\ltwo{\theta^*\left(\measEmp\right)}^{2},\label{eq:biasCorrectedEstimator}
\end{equation}
 is asymptotically unbiased (in the regular case) or biased downward (in the nonregular case). 
(Some downward bias is acceptable since the purpose of CSs is to cover the lowest point $\funLValO$ from below.)
To achieve this property, the estimator $\theta^*\left(\measEmp\right)$ should have a norm that is not smaller than the minimal norm in $ \argminT(\measTrue)$ with probability approaching 1.
We can use an estimator $\theta^*\left(\measEmp\right)$ that converges to the  point with the  coordinates
\begin{equation}
   \theta^*_i(\measTrue) \bydef  \max \{ {\theta^{+}_i(\measTrue)}, {\theta^{-}_i(\measTrue)}\},
\end{equation}
where
\begin{equation}
   {\theta}^{\pm}_i(\measTrue)\bydef\abs{ \underset{ {\theta}\in\setID[\measTrue],  {\theta_1}\leq\funLValO[\measTrue]+\mn} \min\left\{\pm {\theta_i}\right\} }.\label{eq:defThetaStar}
\end{equation}
By definition, $\ltwo{\theta^*}\geq \ltwo{\theta}$ for any $\theta \in \argminT(\measTrue)$. (See the proof of consistency in Lemma~\ref{lem:argminBound} in  Appendix A.)

This bound on $\ltwo{\theta^*}$ has two attractive properties. First, it can be (uniformly) consistently estimated using only $2k$ linear programs, which can be easily computed even in models with a very large-dimension $k$ and a large number of inequalities using interior-point numerical optimization methods.
Second, in the regular case, in which $\argminT(\measTrue)$ is a singleton, this estimator is asymptotically unbiased since it converges to  $ \ltwo{\theta^*} = \ltwo{\argminT(\measTrue)} $.
So for any such fixed $\measTrue$, by Theorem~\ref{thm:bounds}  we have $\minV^{out}(\mu_{n},\measTrue)=\minV(\measTrue)$ for sufficiently small $\mu_n$; that is, it is possible to construct CIs with correct (nonconservative) coverage in the regular case. 
  
Using the bias-corrected estimator  $\minV^{out}(\mn,\measEmp)$  and its analog for the upper bound, $\maxV^{out}(\mn, \measEmp)$, I construct the following  delta-method CSs:
\begin{equation}
\begin{cases}
 \mbox{CB}_{\alpha,n,\Measures} & =\left[\minV^{out}(\mn,\measEmp)-z_{1-\alpha}n^{-1/2} \hat{ { \underline{\sigma}}}^{reg}_n,\infty\right),\\
\mbox{CI}^{\theta_1}_{\alpha,n,\Measures} & =\left[\minV^{out}(\mn,\kappa_n,\measEmp)-z_{1-\alpha}n^{-1/2}\hat{ { \underline{\sigma}}}^{reg}_n; \maxV^{out}(\mn,\kappa_n,\measEmp)+z_{1-\alpha}n^{-1/2}\hat{ { \bar{\sigma}}}^{reg}_n\right],\\
\mbox{CI}_{\alpha,n,\Measures}^{\mathcal{S}} & =\mbox{CI}^{\theta_1}_{\alpha/2,n,\Measures}.
\end{cases}\label{eq:CIsPoint}
\end{equation}
 Here, $z_{1-\alpha}$ is $1-\alpha$ quantiles of the standard Gaussian distribution and 
 \begin{align}
     \hat{ {\underline \sigma}}^{reg}_n &\bydef \max\{\funLStdHat[\mu_{n}][][],\sigma_0\},\\
     \hat{ { \bar\sigma}}^{reg}_n &\bydef \max\{\funUStdHat[\mu_{n}][][],\sigma_0\}, \label{eq:sigma0reg}
 \end{align}
for some small positive number $\sigma_0$.  
 $\mbox{CB}_{\alpha,n,\Measures}$ is a one-sided CB for $\funLValO$ (and   for $\theta_1$ as well), $\mbox{CI}^{\theta_1}_{\alpha,n,\Measures}$ is a two-sided CI that covers any $\theta_1$ in the  identified set, and $\mbox{CI}_{\alpha,n,\Measures}^{\mathcal{S}}$ is a two-sided CI (based on the Bonferroni inequality) that covers the entire maginal identified set $\setMarginal$. 
The tuning parameter $\sigma_0$ is introduced to guarantee the nominal coverage in the cases in which only nonstochastic constraints are binding at the optimal solutions corresponding to the support functions; otherwise, this degeneracy can lead to superconsistent support-function estimators with a non-Gaussian limiting distribution.
One can set $\sigma_0=0$ if this concern is not appropriate in a particular application.

As before, let $\Measures$ contain all measures $\measTrue$ that satisfy Assumptions~\ref{assu:Non--empty}--\ref{assu:Moments} with some uniform positive constants $\underline{\eta}$, $\underline{s}$, $\varepsilon$, and $\bar{M}$. 
 
\begin{thm}
\label{thm:TheoremUniform} 
Suppose that $0<\alpha<1/2$, $\mu_{n}\to0$, and $\mu_{n}\sqrt{n}\to\infty$.
Then the following results hold:
 \begin{eqnarray*}
\liminf_{n\to\infty}\inf_{\measTrue\in\Measures} \measTrue\left(\setMarginal\subset\mbox{CB}_{\alpha,n,\Measures} \right)=\liminf_{n\to\infty}\inf_{\measTrue\in\Measures}\inf_{ {\theta}\in\setID}\measTrue\left(\theta_{1}\in\mbox{CB}_{\alpha,n,\Measures} \right) & \geq & 1-\alpha, \\
\liminf_{n\to\infty}\inf_{\measTrue\in\Measures}\measTrue\left(\setMarginal\subset\mbox{CI}_{\alpha,n,\Measures}^{\mathcal{S}}\right)\geq1-\alpha,\text{ }\liminf_{n\to\infty}\inf_{\measTrue\in\Measures}\inf_{ {\theta}\in\setID}\measTrue\left(\theta_{1}\in\mbox{CI}_{\alpha,n,\Measures}^{\mathcal{S}}\right) & \geq & 1-\alpha.
 \end{eqnarray*}
 \end{thm}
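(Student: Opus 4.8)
The plan is to derive everything from one one-sided statement about the lower bound and then mirror it. Write $\mbox{CB}_{\alpha,n,\Measures}=[L_{n},\infty)$ with $L_{n}:=\minV^{out}(\mu_{n},\measEmp)-z_{1-\alpha}n^{-1/2}\hat{\sigma}^{reg}_{n}$. Since $\setMarginal=[\funLValO,\funUValO]$ is an interval and $\mbox{CB}_{\alpha,n,\Measures}$ is a left half-line, $\{\setMarginal\subset\mbox{CB}_{\alpha,n,\Measures}\}=\{L_{n}\le\funLValO\}$, and for every $\theta\in\setID$ we have $\{\theta_{1}\in\mbox{CB}_{\alpha,n,\Measures}\}=\{L_{n}\le\theta_{1}\}\supseteq\{L_{n}\le\funLValO\}$ with equality at $\theta_{1}=\funLValO$; hence both coverage probabilities in the $\mbox{CB}$ part of the theorem equal $\measTrue(L_{n}\le\funLValO)$, and it suffices to show $\liminf_{n\to\infty}\inf_{\measTrue\in\Measures}\measTrue(L_{n}\le\funLValO)\ge1-\alpha$. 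Every $\measTrue\in\Measures$ satisfies Assumptions~\ref{assu:Non--empty}--\ref{assu:Moments} with the common constants $\underline{\eta},\underline{s},\varepsilon,\bar{M}$, so Theorems~\ref{thm:bounds} and \ref{thm:consistency} and Lemmas~\ref{lem:wellDefined}, \ref{lem:argminBound} all apply; by Lemma~\ref{lem:wellDefined} the sample program is feasible and satisfies LICQ with probability tending to $1$ uniformly over $\Measures$, so the complementary event is negligible below.

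The structural fact that does the work is that the outer-bound \emph{parameter} never overshoots $\funLValO$. By Lemma~\ref{lem:argminBound}, $\ltwo{\theta^{*}(\measTrue)}\ge\ltwo{\theta}$ for every $\theta\in\argminT(\measTrue)$; taking $\theta$ to be the minimal-norm element $\theta_{\min}(\measTrue)$ of $\argminT(\measTrue)$ and plugging it into Program~\eqref{prog:regProg} gives $\minV(\mu_{n},\measTrue)\le\funLValO+\mu_{n}\ltwo{\theta_{\min}(\measTrue)}^{2}$, hence
\[
\minVout(\mu_{n},\measTrue)=\minV(\mu_{n},\measTrue)-\mu_{n}\ltwo{\theta^{*}(\measTrue)}^{2}\le\funLValO .
\]
Write $B_{n}(\measTrue):=\funLValO-\minVout(\mu_{n},\measTrue)\ge0$, let $\sigma_{n}(\measTrue)$ be the limiting standard deviation so that $\sigma_{n}(\measTrue)^{2}=\funLStd[\mu_{n}][\measTrue][2]$, and set $\hat{\sigma}_{n}:=\funLStdHat[\mu_{n}]$, so $\hat{\sigma}^{reg}_{n}=\max\{\hat{\sigma}_{n},\sigma_{0}\}\ge\sigma_{0}>0$. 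Then
\[
\measTrue\!\left(L_{n}\le\funLValO\right)=\measTrue\!\left(T_{n}\le z_{1-\alpha}+\sqrt{n}\,B_{n}(\measTrue)/\hat{\sigma}^{reg}_{n}\right)\ge\measTrue\!\left(T_{n}\le z_{1-\alpha}\right),\qquad T_{n}:=\frac{\sqrt{n}\,(\minV^{out}(\mu_{n},\measEmp)-\minVout(\mu_{n},\measTrue))}{\hat{\sigma}^{reg}_{n}},
\]
so the task reduces to $\liminf_{n\to\infty}\inf_{\measTrue\in\Measures}\measTrue(T_{n}\le z_{1-\alpha})\ge1-\alpha$, i.e.\ to the uniform half-median-unbiasedness of $\minV^{out}$ announced in the text.

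Next I would expand the numerator of $T_{n}$. Since $\minV^{out}(\mu_{n},\measEmp)=\funLValHat[\mu_{n}]-\mu_{n}\ltwo{\theta^{*}(\measEmp)}^{2}$,
\[
\sqrt{n}\,(\minV^{out}(\mu_{n},\measEmp)-\minVout(\mu_{n},\measTrue))=\sqrt{n}\,(\funLValHat[\mu_{n}]-\funLVal[\mu_{n}][\measTrue])-\sqrt{n}\,\mu_{n}\,(\ltwo{\theta^{*}(\measEmp)}^{2}-\ltwo{\theta^{*}(\measTrue)}^{2}) .
\]
The last term is $o_\Measures(1)$: the $2k$ linear programs defining $\theta^{*}$ have uniformly (over $\Measures$) bounded solutions and Lagrange multipliers under Assumptions~\ref{assu:RankCondition}--\ref{assu:NoOveridentification}, hence are uniformly Lipschitz in their coefficients, and those coefficients --- the estimates of $(\Ap,\bp)$ and the plug-in for $\funLValO$ in the relaxed constraint $\theta_{1}\le\funLValO+\mu_{n}$ --- are estimated at rate $O_\Measures(n^{-1/2})$ uniformly over $\Measures$ by the central limit theorem that Assumption~\ref{assu:Moments} makes uniform, so $\ltwo{\theta^{*}(\measEmp)}^{2}-\ltwo{\theta^{*}(\measTrue)}^{2}=O_\Measures(n^{-1/2})$ and $\sqrt{n}\,\mu_{n}\cdot O_\Measures(n^{-1/2})=\mu_{n}\,O_\Measures(1)=o_\Measures(1)$ because $\mu_{n}\to0$. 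Combining with the strong approximation of Theorem~\ref{thm:consistency}, eq.~\eqref{eq:deltaMethodPW} (i.e.\ $\sup_{\measTrue\in\Measures}\rho_{n}(\measTrue)\to0$), and Strassen's theorem, on a common space $\sqrt{n}\,(\funLValHat[\mu_{n}]-\funLVal[\mu_{n}][\measTrue])=Z_{n}+o_\Measures(1)$ with $Z_{n}\sim N(0,\sigma_{n}(\measTrue)^{2})$; by eq.~\eqref{eq:convergenceSpeedStd}, $\hat{\sigma}_{n}=\sigma_{n}(\measTrue)+o_\Measures(1)$; and $\sigma_{n}(\measTrue)\le\bar{\sigma}$ uniformly, since the multipliers are uniformly bounded and $\expect\left\Vert\Data\right\Vert^{2+\varepsilon}\le\bar{M}^{2+\varepsilon}$. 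Hence $T_{n}=Z_{n}/\hat{\sigma}^{reg}_{n}+o_\Measures(1)$, the error being $o_\Measures(1)$ because $\hat{\sigma}^{reg}_{n}\ge\sigma_{0}$.

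The remaining step --- turning these uniform-but-pointwise-in-$n$ facts into the uniform $\liminf$, and in particular absorbing sequences of DGPs along which the Gaussian variance degenerates, $\sigma_{n}(\measTrue)\to0$ (excluded in Theorem~\ref{thm:coveragePW}, and precisely the reason $\sigma_{0}>0$ is introduced) --- is the main obstacle, and I would run it as the usual subsequence/contradiction argument. If the claim failed there would be $n_{j}\uparrow\infty$ and $\measTrue_{j}\in\Measures$ with $\measTrue_{j}(T_{n_{j}}\le z_{1-\alpha})\to c<1-\alpha$ and, along a subsequence, $\sigma_{n_{j}}(\measTrue_{j})\to\sigma_{\infty}\in[0,\bar{\sigma}]$. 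If $\sigma_{\infty}>0$, then $\hat{\sigma}_{n_{j}}=\sigma_{n_{j}}(\measTrue_{j})+o_\Measures(1)$ is bounded away from $0$, so $\hat{\sigma}^{reg}_{n_{j}}\ge\hat{\sigma}_{n_{j}}$ gives $\sigma_{n_{j}}(\measTrue_{j})/\hat{\sigma}^{reg}_{n_{j}}\le1+o_\Measures(1)$ and $T_{n_{j}}=(\sigma_{n_{j}}(\measTrue_{j})/\hat{\sigma}^{reg}_{n_{j}})\,G_{j}+o_\Measures(1)$ with $G_{j}\sim N(0,1)$; since $z_{1-\alpha}>0$ (as $\alpha<1/2$), scaling a standard normal by a factor at most $1+o_\Measures(1)$ cannot push $\measTrue_{j}(T_{n_{j}}\le z_{1-\alpha})$ below $1-\alpha$ in the limit. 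If $\sigma_{\infty}=0$, then $Z_{n_{j}}=o_\Measures(1)$, so the whole numerator of $T_{n_{j}}$ is $o_\Measures(1)$ while $\hat{\sigma}^{reg}_{n_{j}}\ge\sigma_{0}$, whence $T_{n_{j}}=o_\Measures(1)$ and $\measTrue_{j}(T_{n_{j}}\le z_{1-\alpha})\to1$. Either way $\liminf_{j}\measTrue_{j}(T_{n_{j}}\le z_{1-\alpha})\ge1-\alpha$, contradicting $c<1-\alpha$; this proves the $\mbox{CB}$ claim. Finally, $\mbox{CI}_{\alpha,n,\Measures}^{\mathcal{S}}=\mbox{CI}^{\theta_{1}}_{\alpha/2,n,\Measures}$ has left endpoint $L_{n}$ built at level $\alpha/2$ and right endpoint the mirror-image upper bound at level $\alpha/2$ (the upper-bound analysis is identical after replacing $\theta_{1}$ by $-\theta_{1}$, as noted below Program~\eqref{prog:minP}); each endpoint lies on the correct side of $\funLValO$, resp.\ $\funUValO$, with probability at least $1-\alpha/2$ in the limit, so the union bound gives $\measTrue(\setMarginal\subset\mbox{CI}_{\alpha,n,\Measures}^{\mathcal{S}})\ge1-\alpha$ asymptotically, and since every $\theta\in\setID$ has $\theta_{1}\in[\funLValO,\funUValO]$, likewise $\inf_{\theta\in\setID}\measTrue(\theta_{1}\in\mbox{CI}_{\alpha,n,\Measures}^{\mathcal{S}})\ge1-\alpha$.
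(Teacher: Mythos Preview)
Your proof follows the same architecture as the paper's: reduce to the one--sided event $\{L_n\le\funLValO\}$, exploit the deterministic inequality $\minVout(\mu_n,\measTrue)\le\funLValO$ (Theorem~\ref{thm:bounds} plus Lemma~\ref{lem:argminBound}), invoke the uniform strong approximation of Theorem~\ref{thm:consistency} for the regularized value, run a subsequence argument splitting on whether the limiting variance is positive or zero (this is exactly why $\sigma_0>0$ is introduced), and finish the $\mbox{CI}^{\mathcal S}$ statement by Bonferroni. The reductions in your first paragraph and the degenerate--variance case analysis are essentially what the paper does.

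The one step that is thinner than it should be is the rate claim for $\theta^*(\measEmp)$. You write that the auxiliary linear programs ``have uniformly bounded Lagrange multipliers under Assumptions~\ref{assu:RankCondition}--\ref{assu:NoOveridentification}'', but those assumptions are formulated for the original system, not for the augmented program with the extra constraint $\theta_1\le\hat{\underline v}+\mu_n$. That extra constraint has gradient $e_1$, identical to the gradient of the box constraint $\theta_1\le b_1$, so LICQ for the augmented system does not follow from the cited assumptions by a direct appeal. What makes the argument go through is that the added constraint renders $\theta_1\le b_1$ redundant and effectively replaces it: the active $d$--subsets of the augmented program are active $d$--subsets of the original program with the row $(e_1',b_1)$ swapped for $(e_1',\hat{\underline v}+\mu_n)$, and since $\hat{\underline v}+\mu_n$ is bounded uniformly over $\Measures$, the rank bound $\underline\eta$ carries over with a possibly different but still uniform constant. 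That is the missing sentence; once you supply it, your Lipschitz claim and hence $\lVert\theta^*(\measEmp)\rVert^2-\lVert\theta^*(\measTrue)\rVert^2=O_\Measures(n^{-1/2})$ are justified, and the rest of your argument is sound.
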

\begin{proof}
See Appendix \ref{sec:Proof-of-Theorem2}.
\end{proof}

Note that the worst-case (that is, the smallest) asymptotic coverage probability of $\mbox{CB}_{\alpha,n,\Measures}$ is exactly equal to $1-\alpha$ for a fixed regular DGP such that   $\argminT(\measTrue)$ is a singleton and $\lim_{n\to\infty}\funLStd[\mu_{n}][\measTrue][2]>\sigma_0\geq 0  $.
For a nonregular DGP (or sequence of DGPs), the asymptotic coverage can only be larger than $1-\alpha$.
How conservative is $\mbox{CB}_{\alpha,n,\Measures}$ in the nonregular case? 
We can evaluate it by comparing its average length  with that of a confidence bound that has exact point-wise asymptotic coverage probability in a Monte Carlo study (see Section \ref{sec:Monte-Carlo}). 
Theorem \ref{thm:coveragePW} in Appendix Section \ref{subsec:pointwise-results} provides an alternative bias correction that remains non-conservative in the non-regular case and results in shorter point-wise confidence bounds. 
This approach is based on using an estimator of the inner bound $\minVin(\mu,\kappa,\measTrue)$.

I conclude this section with a brief discussion of the tuning parameters. 
The theory of optimal choice of tuning parameter is beyond the scope of this paper. The following considerations, however, can provide some guidance for the optimal choice. 
 Theorem~\ref{thm:bounds} suggests that  the tuning parameters should be smaller than   $  \bar{\mu}(\measTrue)$ to avoid the bias in the first-order asymptotic distribution. 
 This choice is infeasible since  $\bar{\mu}(\measTrue)$ is unknown, so one has to let $\kappa_n$ and  $\mn$ go to zero.
 The optimal rates of $\kappa_n$ and  $\mn$ should balance the higher-order variance and the worst-case bias. 
A specific choice of the tuning parameters is  discussed in Section \ref{sec:Monte-Carlo}.

\subsubsection{Joint confidence sets for general subvectors   \label{subsec:subvectors}}
It is trivial to extend the analysis to 
$$\underline{v}(\measTrue;a)=\displaystyle\min_{ {\theta}\in\setID} a^\prime {\theta}$$
for any $a\in R^d$ with $\ltwo{a}=1$. 
Indeed,  Assumptions~\ref{assu:Non--empty}--\ref{assu:Moments}
  are invariant with respect to orthogonal transformations  of the coordinates; that is, they are satisfied for the following program (with $\tilde{\theta}=U^\prime \theta$, $\tilde{\Ap} = \Ap U$, and $a^\prime= \eOne U $  for any orthogonal matrix $U$):
\begin{align}
\underline{v}(\measTrue;a) &=\min_{ {\theta}\in\R^{d} } \eOne   \tilde{\theta}  \\
\text{s.t. } & \begin{cases}
e_j^\prime\tilde{\Ap}  \tilde{\theta} =e_j^\prime\bp, & j\in\setEq,\\
e_j^\prime\tilde{\Ap}  \tilde{\theta} \leq e_j^\prime\bp, & j\in\setIneq.
\end{cases} 
\end{align}
We can think of $\tilde{\Ap}$ as a coefficient matrix under a different measure $\tilde{\measTrue}$, $A_{\tilde{\measTrue}}$. 
The set of measures $\Measures$ from Section~\ref{sec:DriftingDGP} includes $\tilde{\measTrue}$ corresponding to all orthogonal transformations of $\Ap$.

The identified set $\setID$ is convex, so any projection of it can be characterized using support functions for a corresponding direction. One can construct a joint CS for $\setID$ as follows. For any set of directions $\mathcal{A}\subset R^d$,  take
$$ \text{CS}^\mathcal{A}_{\alpha,n}  = \{\theta| a\in\mathcal{A} , a^\prime \theta \leq -\minV^{out} (\mn,\measEmp;-a)  + c_{1-\alpha}n^{-1/2} \max\{\underline{\sigma}(\mn,\measEmp;-a),\sigma_0\}\},$$
where $c_{1-\alpha}$ is $1-\alpha$ quantiles of the maximum of the corresponding asymptotic Gaussian variables (also known as $\sup t$ statistics) that can be estimated using multiplier bootstrap enabled by the asymptotic linear representation, 
\eqref{eq:Bahadur}.\footnote{ 
I leave full analysis of mulitplier bootstrap procedure in this setup for future work; see additional discussion in Appendix Section \ref{sec:multiplier boostrap}.}
One can also use the Bonferroni inequality-based standard Gaussian critical value $c_{1-\alpha}= z_{1-\alpha/|\mathcal{A}|}$.

Choosing the set of directions appropriately $\mathcal{A}$, we can construct joint CSs for projections of $\setID$ on any subvectors $\theta$.
If $\mathcal{A}$ has finitely many elements,  $ \text{CS}^\mathcal{A}_{\alpha,n}$ is a polygon. 
So we can plot it directly without performing test inversion as in the one-dimensional case. 
The confidence set $\mbox{CI}_{\alpha,n,\Measures}^{\mathcal{S}}$ is a particular case of $ \text{CS}^\mathcal{A}_{\alpha,n}$  corresponding to $\mathcal{A}=\{e_1,-e_1\}$ and the Bonferroni estimate of $c_{1-\alpha}$.  
It seems natural to construct a joint CS for  $\theta$  based on directions that correspond to the normal vectors of the moment conditions.
For simplicity, assume that $p=0$. 
The original system \eqref{eq:GenericMomentConditions} may have some inequalities that are slack for any point $\theta\in\setID$. 
We can characterize the identified set $\setID$ as the solution to a tight system of inequalities 
\begin{equation}
   e_j^\prime \Ap  \theta \leq \underline{b}_j,   j\in\setIneq, \label{eq:tightIDSet}
\end{equation}
where
\begin{align}
 \underline{b}_j &=  \max_{ {(\vartheta,\theta)}\in\R^{d+1} }\vartheta  \label{eq:tightBoundsProg} \\
\text{s.t. } & \begin{cases}
\vartheta &= e_j ^\prime \Ap  \theta,  \\
e_\ell^\prime{\Ap} {\theta} & \leq e_\ell^\prime\bp, \ell\in\setIneq.
\end{cases} 
\end{align}
Every inequality in system \eqref{eq:tightIDSet} is active at least at one point in $\setID$ (any point in the argmax of \eqref{eq:tightBoundsProg}). Programs \eqref{eq:tightBoundsProg} meet Assumptions~\ref{assu:Non--empty}--\ref{assu:Moments}, and therefore the outer estimators $ \hat{\underline{b}}_j^\text{out}$ are half-median unbiased with the corresponding standard error estimators $\hat{\sigma}_j$. Then the following polyhedron CS will cover any point $\theta\in\setID$ with asymptotic probability of at least $1-\alpha$ uniformly over $\measTrue\in\Measures$,
$$ \text{CS}^\mathcal{N}_{\alpha,n}  =\big \{\theta \big|
 e_j^\prime \hat{A}_P  \theta \leq \hat{\underline{b}}_j^\text{out} + z_{1-\frac{\alpha}{k}} n^{-1/2} \max\{\hat{\sigma}_j,\sigma_0\} ,   j\in\setIneq \big\},$$
 where $z_{1-\frac{\alpha}{k}}$ is the critical value of the standard normal r.v. and $k$ is the number of (in)equality restrictions.
 The generalization to the case $p\neq0$ is straightforward.




\section{Monte Carlo experiments}\label{sec:Monte-Carlo}

\subsection{Overview}

In this Monte Carlo study our goal is to evaluate how the confidence bound's length and the corresponding coverage probability depends on a number of key factors: (i) how close gradients of the relevant moment inequality to the non-regular case (i.e. being collinear with the vector $e_1$) ; (ii) how tuning parameter choice affects the conservativeness of confidence bounds; (iii) how the dimension  of the problem and the number of inequalities affect length, coverage and computational time for the proposed methods. 
The last exercise in the list also involves a comparison with the AS projection implemented using KMS EM computational algorithm.

\subsection{Proximity to non-regular case}

To illustrate the advantages of uniform coverage compared to point coverage, we can study a simple design with $k=4$ moment inequalities where the angle $\omega$ between the gradient of one of the faces of the identified set and vector $e_1$ varies continuously. 
The expectation of the moment inequalities can be parametrized as follows
\begin{align*}
  \expect W &= \expect (\Ap|\bp)=\left(\begin{array}{cc|c}
-\cos(\NEangle) & -\sin\left(\NEangle \right)& \cos(\NEangle)   + \sin\left(\NEangle \right)  \\
    \cos(\NEangle) &     \sin(\NEangle)&   \cos(\NEangle)   + \sin\left(\NEangle \right)  \\
    0 & -1 & 1 \\
    0 & 1  & 1
\end{array} \right) .
\end{align*}
The shape of this set is a parallelogram analogous to the one in Figure~\ref{fig:Identified-sets-Example2} in with $\rho\geq0$.
The parameter $\omega\in[0^{\circ},36^{\circ}]$ defines the angle between the normal vectors of the rear sides of the parallelogram and the horizontal axis.
The value $\omega=0^{\circ}$ corresponds to a square-shaped identified set, also referred to as a \emph{non-regular case} because the sides are orthogonal to the gradient of the objective function. 
All other values are termed \emph{regular}.
In the vicinity of $\omega=0^{\circ}$   pointwise valid  $\mbox{CB}_{\alpha,n}$ may have a coverage probability below the nominal level because it lacks uniform validity.
The expectation $\expect W$ is parameterized to guarantee $\argminT_1=\argminT_2=-1$ and $\argmaxT_1=\argmaxT_2=1$ for all values of $\omega$. 
The components of $W_i$ are independent Gaussian random variables with variance $s_2^{2}=0.01$.
For each value of $\omega$, I compute the frequency of coverage and the excess average length over the identified set for $\mbox{CB}_{\alpha,n}$ and $\mbox{CB}_{\alpha,n,\Measures}$ based on the sample sizes $n\in\{100,10000\}$. 
The number of MC simulations is $1000$ for every combination of $n$ and $\omega$. 
The focus is on the nominal coverage probability $\alpha=0.95$.

As the main choice of tuning parameters, I use $\mu_n = \hat\mu_0 \sqrt{n^{-1}\ln\ln n}$ and  $\kappa_n = \hat\mu_0 \sqrt{n^{-1}\ln n}$ , where $$\hat\mu_0 = \sqrt{\frac{1}{n}\sum_{i=1}^n \big(  \argminL^\prime(0,\measEmp)\data_i e_1 - \frac{1}{n}\sum_{i=1}^n \argminL^\prime(0,\measEmp)\data_i e_1  \big)^2
}.$$  
While a theory of optimal choice of $\hat\mu_0$ is beyond the scope of this paper, this particular choice has some advantages: (i) $\hat\mu_0$  depends only on the behavior of the relevant moment inequalities as selected by non-zero components of $ \argminL^\prime(0,\measEmp)$;  (ii) it does not depend on the dimension of the   parameters $\theta$, since only the first column of $w_i$ is involved; (iii) it has the same scale as the standard deviation of the relevant components of $w_i$ which justifies our perturbation analysis (the impact of the regularization term $\mu_n \ltwo{\theta}^2$ is  larger  than the sample variation, $\sim \hat\mu_0 n^{-1/2}$).
As a result, this choice resulted in a good alignment of the theoretical predictions with the simulations, as can be seen below.

To appreciate the importance of uniformly valid confidence bounds, we start our study with point-wise valid confidence bounds $\mbox{CB}_{\alpha,n}$. 
Figure \ref{fig:pwFreq_opt1} panel (a) shows the corresponding coverage frequency for a small sample size $n=100$ and a large sample with $n=10,000$. 
For values of $\omega$ that are far from $0$, both sample sizes have observed a frequency of covering the correct bound of the identified set for $\theta_1$   close to the nominal coverage probability of $\alpha=0.95$. 
The same is true for the nonregular case with $\omega = 0$.
However, in the vicinity of the nonregular case $\omega \in  (0,4^\circ]$, large sample sizes are necessary to achieve a satisfactory coverage frequency. 
In contrast, the coverage frequency for the corresponding sample sizes for the uniformly valid confidence bounds $\mbox{CB}_{\alpha,n,\Measures}$ given on Figure \ref{fig:pwFreq_opt1} panel (b) is uniformly at least as large as   $\alpha$ for all values of $\omega$. 
One can see that the only point on Figure \ref{fig:pwFreq_opt1} panel (b) with coverage is significantly higher than $\alpha=0.95$ for $n=10,000$ is $\omega = 0$, that is, for most designs (or equivalently, directions of the support function), the uniformly valid confidence bounds $\mbox{CB}_{\alpha,n}$ have exact coverage.

The difference in coverage frequencies of $\mbox{CB}_{\alpha,n}$ and $\mbox{CB}_{\alpha,n,\Measures}$ can be understood by considering the behavior of the corresponding average excess length of the bounds, i.e. the Monte Carlo average of the difference between the corresponding confidence bounds and the true bound of the identified set $\argminT_1=-1$.
Figure \ref{fig:pwFreq_opt1} panel (c) compares the lengths of the two bounds for the small sample case $n=100$ where the point-wise inference becomes unreliable. 
One can see that the confidence bounds should have an excess length approximately equal to $0.025$ to match the coverage frequency with the nominal probability $\alpha$ in the proximity of $\omega =0$. 
The length of $\mbox{CB}_{\alpha,n,\Measures}$ is larger than necessary for $\omega=0$, while $\mbox{CB}_{\alpha,n}$ has the  correct length for $\omega=0$, but is too short in the neighborhood $ (0,4^\circ]$.
For values $\omega\geq 5$ both confidence bounds have approximately the same length, which also corresponds to the correct coverage probability predicted by Theorems  \ref{thm:bounds} and Appendix Theorem~\ref{thm:coveragePW}.

\subsection{Sensitivity to choice of  $\mu_n$ and $\kappa_n$} \label{subsec:sensitivityMC}

First, consider the effect of the tuning parameter  $\mu_n$ on the coverage probability for the uniformly valid confidence bounds $\mbox{CB}_{\alpha,n,\Measures}$.  
Figure \ref{fig:lengthCompare_opt1} panel (a) compares the coverage frequency of $\mbox{CB}_{\alpha,n,\Measures}$ for the sample size $n=100$ as a function of $\omega$ for two choices: (i) $\hat\mu_0 \sqrt{n^{-1}\ln\ln n}$ (baseline) and (ii) $\hat\mu_0 \sqrt{n^{-1}\ln n} $ (large). The baseline option has slightly less conservative coverage in the neighborhood of the nonregular design ($\omega=0$) and similar performance for other values of $\omega$. 
As a result, the baseline option is used for all the other simulations. 

Unlike its uniform counterpart, the point-wise valid confidence interval $\mbox{CB}_{\alpha,n}$  depends on additional tuning parameter sequence $\kappa_n$.
Theorem  \ref{thm:coveragePW} claims that for one-sided bounds $\mbox{CB}_{\alpha,n}$, the asymptotic coverage is exactly equal to $\alpha$ as long as $\kappa_n$ shrinks to zero slower than $\mu_n$.
This allows values of $\kappa_n$ to be smaller than $\mu_n$ and still result in (point-wise) valid inference.
Figure \ref{fig:lengthCompare_opt1} panel (b) compares three choices: (i) $\kappa_n= \hat\mu_0 \sqrt{n^{-1}\ln n} > \mu_n$; (ii) $ \kappa_n=\mu_n$; (iii)   $ \kappa_n=0<\mu_n$.
From the practitioner's perspective, options (ii) and (iii) are attractive since they only require the specification of one tuning parameter $\mu_n$.
Both additional choices (ii) and (iii) result in valid point-wise coverage in large samples (see Appendix Figures  
\ref{fig:pwFreq_opt2} and \ref{fig:pwFreq_opt3}).

For $n=100$, $\omega>4^\circ$ all three options have nearly indistinguishable coverage frequency.
The behavior is notably different between the three for $ \omega \in [0,4^\circ]$.
Choice (ii) still results in a lower probability of coverage than the required probability, but the problematic neighborhood $[0,2^\circ]$ is smaller than for choice (i).
 The choice $\mu_n=\kappa_n$ for $n=10,000$ is particularly in good alignment with the nominal coverage $\alpha=0.95$.
The choice (iii) essentially results in $\mbox{CB}_{\alpha,n}$ being the same length as $\mbox{CB}_{\alpha,n,\Measures}$ (see Appendix Figure  \ref{fig:lengthCompare_opt3}).
The similar performance in simulation suggests that  $\mbox{CB}_{\alpha,n}$  with $\kappa_n=0$ may have uniformly valid coverage like $\mbox{CB}_{\alpha,n,\Measures}$.
However, a formal study of uniform validity for $\mbox{CB}_{\alpha,n}$ with $\kappa_n=0$ is more difficult to conduct than for $\mbox{CB}_{\alpha,n,\Measures}$.

\begin{figure}[H]

\begin{centering}
\caption{\label{fig:pwFreq_opt1} Coverage frequency (a,b) and   average excess length (c)  in the $2$-dimensional  design for   $\mbox{CB}_{\alpha,n}$  and $\mbox{CB}_{\alpha,n,\Measures}$ as function of $\omega$ in the $2$-dimensional  design.}
 \begin{tabular}{cc}
\includegraphics[scale=0.9]{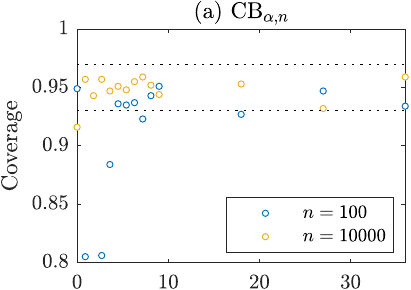}& \includegraphics[scale=0.9]{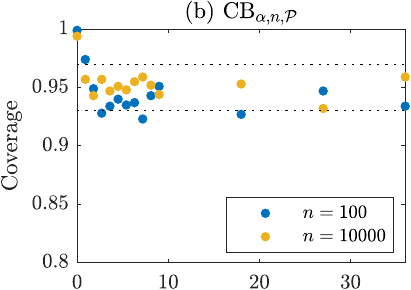} \tabularnewline
\end{tabular}
 \begin{tabular}{c}
 \includegraphics[scale=0.9]{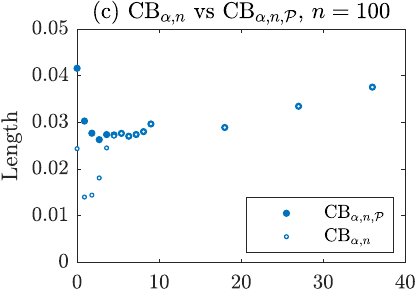}\tabularnewline
\end{tabular}
\par\end{centering}

   Note: The dotted lines correspond  to the asymptotic uniform 95\% confidence interval   for the parameter $p=0.95$ of the Bernoulli random variable based on a random sample of $1000$ simulations based on the Bonferroni correction for 14 hypothesis tests.
  Values of $\omega$  close to zero in panel (a) result in nonnegligible under-coverage. As the sample size grows, the problematic area shrinks.
   Values of $\omega$ close to zero in panel (b) result in nonnegligible conservative coverage.
\end{figure}

\begin{figure}[H]

\begin{centering}
\caption{\label{fig:lengthCompare_opt1} Sensitivity of coverage frequency to tuning parameter choices.}
  \begin{tabular}{cc}
\includegraphics[scale=0.9]{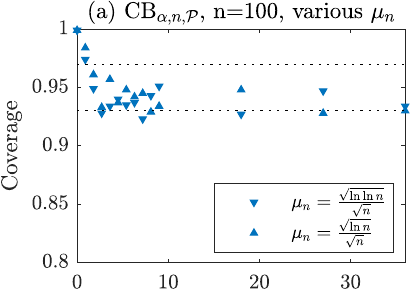}& \includegraphics[scale=0.9]{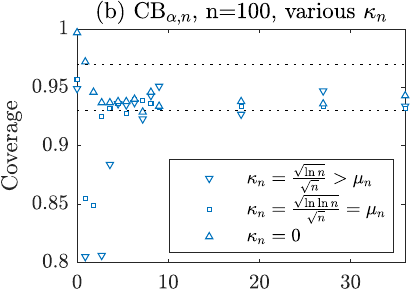}\tabularnewline
\end{tabular}

\par\end{centering}
   Note: The dotted lines correspond  to the asymptotic uniform 95\% confidence interval   for the parameter $p=0.95$ of Bernoulli random variable based on a random sample of $1000$ simulations based on Bonferroni correction for 14 hypothesis tests.
\end{figure}

\subsection{Effect of high dimensions}\label{subsec:highDMC}

In this section, we study the effect of the dimension of $\theta$ on the coverage probability and the length of the confidence bounds.
The design of the constraints matrix $\expect W=\expect (\Ap|\bp)$ is given by 
\begin{align*}
\Ap=\left(\begin{array}{c c}
1 & -a\\
0 & -\I[d-1]\\
0 & \I[d-1]
\end{array}\right) \text{ and } \bp=\left(\begin{array}{  c}
 0\\
   \iota_{2}\\
  \iota_{d-3}/\sqrt{d-3}\\
  \iota_{2}\\
  \iota_{d-3}/\sqrt{d-3}
\end{array}\right).
\end{align*}
The first line corresponds to an equality constraint that depends on a unit vector with direction $a\in\R^{d-1}$.
This equality constraint is deterministic, that is, $\operatorname{Var}W_{1j}=0$ for all $j=1,\dots,d+1$. Other constraints are inequality constraints with coefficients that are i.i.d Gaussian r.v. with $\operatorname{Var}W_{ij}=0.01$ for all $j=1,\dots,d+1$ and $i=2,\dots,(1+2d)$.
The identified set for the first coordinate $\theta_1$ in this design corresponds to values of a support function of a $d-1$ dimensional rectangular box $[-1,1]^2\times[-1/\sqrt{d-3},1/\sqrt{d-3}]^{d-3}$ in direction $a$ (and $-a$).
As before, the focus is on the lower bound, since by design the upper bound is symmetric.
When $d=3$ and $a=(1,0)^\prime$, this design reduces to the two-dimensional design considered in the previous subsection with $\omega=0$.

We consider two possible values for $a$: (a) regular case, $a=\iota_{d-1}/\sqrt{d-1}$; (b) nonregular case, $a=(1,0,\dots,0)'\in\R^{d-1}$.
In the regular case $\argminT = (-\iota_2,-\iota_{d-3}/\sqrt{d-3})$. In the nonregular case, the argmin set $\argminT$ is a convex hull of $2^{d-1}$ corner points with coordinates $(-1,\pm1,\pm1/\sqrt{d-3})$. 
Every such corner point has the same distance to $0$, equal to $\sqrt{3}$,  for $d>2$ regardless of the dimension $d$.
This normalization was chosen to make performance comparisons as dimension grows while keeping the diameter of the identified set fixed. In this way, we isolate the effect of the number of dimensions and constraint from the effect of the diameter of the identified on the dual variables.
By design, the asymptotic variance of the regularized estimators will not change with the dimension.
We used $n=1000$ data points and $N=1000$ Monte Carlo simulations. We use the first option for the tuning sequences, namely $\mu_n = \hat\mu_0 \sqrt{n^{-1}\ln\ln n}$ and $\kappa_n = \hat\mu_0 \sqrt{n^{-1}\ln n}$.

First, consider the coverage probability of the confidence bounds $\mbox{CB}_{\alpha,n}$ and $\mbox{CB}_{\alpha,n,\Measures}$. Panel (a) in Figure \ref{fig:ndFreq} shows that in the regular case both the uniform and the point-wise valid confidence set have a coverage frequency that is within the simulation error bound from the nominal coverage probability of $\alpha=0.95$.
At the same time, the coverage of the two procedures becomes noticeably different from each other for $d>40$.
It suggests that the size of the neighborhood where the uniform inference becomes important gets wider with $d$. 
The nonregular design provided on panel (b) in Figure \ref{fig:ndFreq} shows that uniformly valid confidence bounds have a coverage frequency nearly equal to $100\%$.
The point-wise bounds $\mbox{CB}_{\alpha,n}$  are less conservative for all $d$, but still reach the $100\%$ coverage frequency for $d\geq21$.
It suggests that the sample size required for exact asymptotic coverage for $\mbox{CB}_{\alpha,n}$ gets larger as dimension grows. 

The probability of coverage above the nominal level $\alpha=0.95$ in the nonregular case with high dimension  $d$ reflects the fact that as the number of corner points increases exponentially with $d$.
As a result, the nonregularized support function gets increasingly biased as it has asymptotic distribution of a minimum of $2^{d-1}$ Gaussian r.v. as evident from \eqref{eq:asyDistributionShapiro}.
Fortunately, this extreme conservatism only appears in the worst possible case. 
For a randomly chosen direction of a support function $a\in\R^{d-1}$, the performance is expected to be closer to that of panel (a) of Figure \ref{fig:ndFreq}, at least for sufficiently large sample sizes $n$.

As a benchmark, I use  $\mbox{CB}_{\alpha,n,AS}$---one-sided confidence bounds for $\argminT_1$  based on \cite{andrews2010inference} with Bonferroni critical values implemented using the fast E-A-M algorithm of \cite{kaido2015inference}.
This choice of benchmark is one of the fastest available uniformly valid procedure in the literature. The two alternative approaches, \cite{bugni2014inference} and \cite{kaido2015inference}, can provide uniformly valid CSs with a potentially shorter average length than $\mbox{CB}_{\alpha,n,AS}$.
However, both are expected to take considerably more time to compute because they add time-consuming   profiling or calibration steps.
AS results are only available in the nonregular case because of the software restrictions.\footnote{The code is available at https://molinari.economics.cornell.edu/programs.html.
This code is highly optimized for the case of inference on bounds on individual coordinates of the identified set. The code does not allow making confidence sets for arbitrary directions $a$.}
Nevertheless, AS is not adaptive and is expected to have a similar length regardless of whether we compute the bounds in regular or nonregular directions $a$. 
The coverage probability equal to the confidence sets to $100\%$  also applies to $\mbox{CB}_{\alpha,n,AS}$.
Figure \ref{fig:ndLen} shows that for $d\geq9$ for $n=1000$, the uniformly valid $\mbox{CB}_{\alpha,n,\Measures}$  is less conservative than $\mbox{CB}_{\alpha,n,AS}$.

\subsection{Computational time}

The main advantage of $\mbox{CB}_{\alpha,n,\Measures}$ compared to $\mbox{CB}_{\alpha,n,AS}$ is the computational gain that is achieved because of two factors: (i) using only linear and convex quadratic programs; (ii) no need to use multi-start procedures for global optimization of non-convex programs. 
Other alternative procedures like KMS and BCS would have additional burden of computing simulation-based critical values and are expected to be much slower.
Note that the computational cost does not depend on whether the design is regular or not, so Figure \ref{fig:time} compares the average computation time as a function of dimension $d$ on a modern multi-core laptop.\footnote{All simulations are using 2023 Macbook Pro with M2 Max CPU (12 cores) and 96 GB of RAM. The EAM algorithm for $\mbox{CB}_{\alpha,n,AS}$ takes advantage of all 12 cores. }
The computational time for the delta-method for regularized support functions grows very slowly with dimension $d$; the average time for $d=100$ is about 2 seconds.
In contrast, $\mbox{CB}_{\alpha,n,AS}$ is already nearly 1000 times slower for $d=21$.

\begin{figure}[H]
\caption{\label{fig:ndFreq}Coverage frequency in the $d$-dimensional  design as function of $d$ in (a)  regular   and  (b) non-regular cases.}
\begin{centering}
\begin{tabular}{cc}
\includegraphics[scale=0.9]{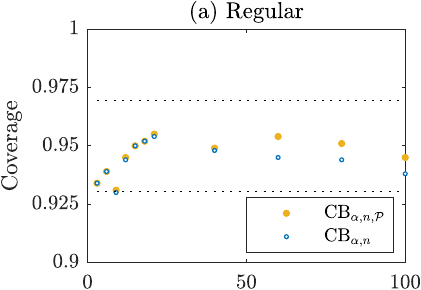}& \includegraphics[scale=0.9]{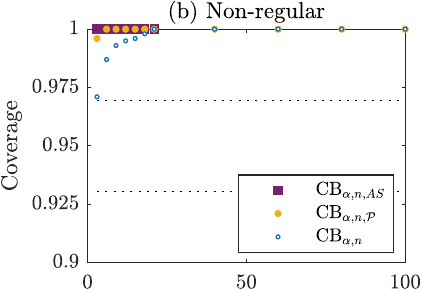}\tabularnewline
\end{tabular}
\par\end{centering}
   Note: The dotted lines correspond  to the asymptotic uniform 95\% confidence interval   for the parameter $p=0.95$ of Bernoulli random variable based on a random sample of $1000$ simulations based on Bonferroni correction for 11 hypothesis tests.
   AS coverage frequency is computed using $100$ simulations and for $d\leq21$, while the delta-method confidence bounds are based on $1000$ simulations. In all cases each simulation is based on $n=1000$ observations.
 
\end{figure}
\vspace{-0.5cm}
\begin{figure}[H]
\caption{\label{fig:ndLen}Length in the $d$-dimensional  design as function of $d$ in (a) the regular   and  (b) non-regular cases.}
\begin{centering}
\begin{tabular}{cc}
\includegraphics[scale=0.9]{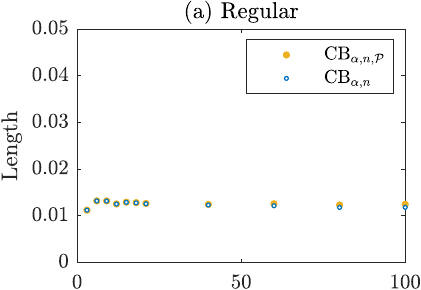}& \includegraphics[scale=0.9]{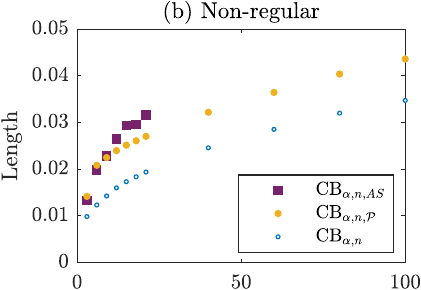}\tabularnewline
\end{tabular}
\par\end{centering}
   Note: AS length is computed using $100$ simulations and for $d\leq21$, while the delta-method average length are based on $1000$ simulations. In all cases each simulation is based on $n=1000$ observations.
\end{figure}

\vspace{-0.5cm}
\begin{figure}[H]

\begin{centering}
\caption{\label{fig:time}Computational time in the $d$-dimensional  design as function of $d$.}

\includegraphics[scale=0.8]{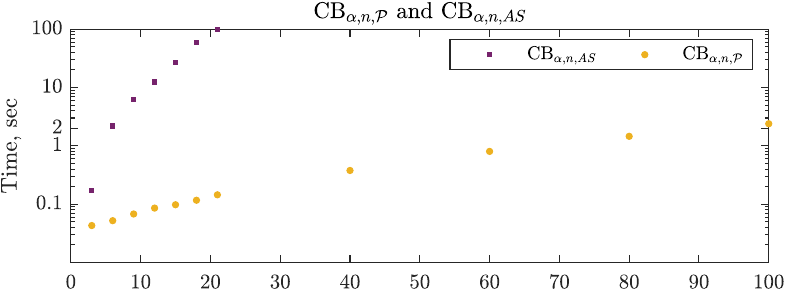} 
 
\par\end{centering}
     Note: The scale is logarithmic. AS average time is computed using $100$ simulations and for $d\leq21$ only, while the delta-method average times are based on $1000$ simulations. In all cases, each simulation is based on $n=1000$ observations.
\end{figure}

\section{Conclusion}\label{sec:Conclusion}

This paper demonstrated that the regularization approach provides a fast way to construct point-wise and uniform CSs for a $\theta_{1}$
that is comparable to or shorter than those of the existing literature. 
Monte Carlo simulations showed that the proposed CSs have good finite sample coverage properties.
The computational benefits of the new approach are particularly prominent if the dimension of $\theta$ is large. 
The regularization framework can be extended in a number of ways to allow for overidentification and joint inference. 
The proposed approach is attractive in applications such as a linear model with an interval-valued outcome variable and a large number of regressors and in problems with parameters represented as intersection bounds.

\label{reply:R1.2a} Focus on the affine inequalities    simplifies the large sample analysis. However, there are many interesting applications that characterize the identified set for the structural parameters using non-linear moment conditions, in particular, among the structural models of the industrial organization \citep{pakes2015moment}. Analysis of such non-linear moment inequalities goes beyond the scope of the current study, but would be a promising direction of future research.

\bibliographystyle{ecta}
\bibliography{main}

\newpage{}
\appendix
\setcounter{page}{1}  
 {\small


 \section{Detailed proofs}
\newpageLemma
\subsection{Relation between Assumption \ref{ass:ULICQ} and LICQ} \label{subsec:LICQdiscussion}

As before, I use symbols $\activeSet\tP$, $\setIndexActive$ etc to denote the projectors on the coordinates with the corresponding indices. 
Let $\jdMat\bydef \gradMat$.  
Using this notation, LICQ implies that matrix $\activeMat\tP \Ap$
has full row rank for any $ {\theta}\in\setID$.

\begin{lem}[Sufficient condition for LICQ]
\label{lem:LICQ} Assumption   \ref{ass:ULICQ}  implies LICQ.

\end{lem}
\begin{prooF}{lem:LICQ}
Assumption \reF{assu:NoOveridentification} implies that $\activeSet\tP$
has at most $d$ elements at any $ {\theta}\in\setID$. Consider any point $ {\theta}\in\setID$.  The set $\setIneq$ includes (\reF{eq:Box}), so
$k\geq2d\geq d+1$.  It implies that there exists a set $\jSet$ with $\card{\jSet}=d$ such that $\activeSet\tP\subset\jSet$.   By Assumption \reF{assu:RankCondition}, $\rank\left[\jMat\expect\Data \right]=d$,
so $ M\bydef\activeMat\tP(\Ap,\bp)$ has full row rank which is equal to $\card{\activeSet}$. By definition, $\activeMat\tP\Ap\theta  =\activeMat\bp$. It implies 
by the Rouch\'{e}\textendash Capelli
theorem that the matrices $M_\theta\bydef\activeMat\tP \Ap$
and $M$ have the same rank. This result implies LICQ.
\end{prooF}
The inverse implication does not hold in general as the following remark shows.
\begin{rem} \label{rem:LICQimplications}
LICQ implies Assumption \reF{assu:NoOveridentification}
and that for any $ {\theta}\in\setID$
\begin{equation}
\rank\left[\activeMat\tP  \expect\Data \right]=\card{\activeSet\tP}  .
\label{eq:strongRankConstraint}
\end{equation}

Indeed,  suppose that LICQ holds. It immediately implies Assumption \reF{assu:NoOveridentification}. To see   \eqreF{eq:strongRankConstraint} consider any point $\theta\in\setID$ such that $\card{\activeSet\tP}\leq d$. By the Rouch\'{e}\textendash Capelli theorem and the full row rank property of $M_\theta$ correspondingly, 
\[
\rank\left(M\right)=\rank\left(M_\theta\right)=\card{\activeSet\tP}.
\]

\end{rem}

 \begin{rem}[Interpretation as LICQ imposed over the entire space]\label{rem:contastLICQ2}
Assumption   \ref{assu:NoOveridentification} is formulated in form of restriction on a norm of a slack vector, $\|\activeMat\Ap\theta-\activeMat\bp\|$ of any $d+1$ constraints at every point $\theta\in\setID$. 
If we require instead no overidentifying inequalities for all $\theta\in \R^d$, one can write this requirement  in a form similar to Assumption \ref{assu:RankCondition},
\begin{equation}
\min_{\text{s.t. }\begin{matrix}
\jSet  \subset\setIneq \\
\card\jSet  =d-p+1 
\end{matrix}}\eta_1\left(\activeMat (\Ap|\bp) \right) >0. \label{eq:alternativeAss5}
\end{equation}
The bound on the singular value \eqref{eq:alternativeAss5} implies that $rk (\activeMat  (\Ap|\bp))= d+1$.
Since for any combination of $d+1$ constraints $rk (\activeMat \Ap)\leq d$, by the aforementioned  Rouch\'e-Capelli theorem any combination of $d+1$ linear inequalities/equalities cannot be satisfied at once, that is, $\|\activeMat\Ap\theta-\activeMat\bp\|\geq0$. 
The converse is also true by the same theorem.

One can notice that equation \eqref{eq:alternativeAss5} (bound on singular values and rank of $(d+1)\times(d+1)$ matrices) implies equation \eqref{eq:rankD} (bound on singular values and rank of its $ d  \times(d+1)$ submatrices) for some particular choice of bounds on the minimal singular value $\eta (\measTrue)$. 
As a result, \eqref{eq:alternativeAss5} implies both Assumptions \ref{assu:RankCondition} and Assumption \ref{assu:NoOveridentification} as well as the LICQ for all $\theta\in \R^n$ (see Lemma \ref{lem:LICQ}).
By the argument in Remark \ref{rem:LICQimplications},  LICQ for all $\theta\in \R^n$ would in turn imply \eqref{eq:alternativeAss5}. 
So \eqref{eq:alternativeAss5} is a necessary and sufficient condition for LICQ to hold for all $\theta\in \R^n$ (not just on the identified set).
So,  Assumptions \ref{assu:RankCondition}-\ref{assu:NoOveridentification} are satisfied if there are no overidentifying inequality and equality constraints for all $\theta\in \R^n$.
 \end{rem}

\subsection{Topological properties of optimal solutions} \label{subsec:LICQlemmas}
Consider any distribution $P$ with support on $\R^{(k-2d)\times\left(d+1\right)}$
such that $(\Ap,\bp)\bydef\expect\Data$ exist. Let $\activeSet\tP\subset\{1,...,k\}$
be the set of indices of moment equality and inequality constraints
active at $\theta$, i.e.~all $j$ s.t. $\Moment{\theta}{\measTrue}\bydef\expect\moment{\Data}{\theta}=0.$  $\activeSet\tP$ can be empty.

\begin{lem}[Characterization of the optimal solution]
\label{lem:KKT}Under Assumption \reF{assu:Non--empty} for
any $\mu\geq0$ any minimizer $ {\theta}$
for Program \eqreF{prog:regProg}  is a solution to the corresponding
Karush\textendash Kuhn\textendash Tucker (KKT) optimality conditions
for some finite $ {\lambda}\in\R^{k}$, 
\begin{numcases}{}\label{eq:KKTsystem}
(\eOneVec+2\mu {\theta} )^\prime  =- {\lambda}^{\prime}\Ap,&\label{eq:KKT-FOC}\\ 
\Moment{\theta}{\measTrue}  =0 & $j\in\setEq$,\label{eq:KKT-EQ}\\
\Moment{\theta}{\measTrue} \leq0,\lambda_{j}   \geq0,\lambda_{j} \Moment{\theta}{\measTrue}    =0 & $j\in\setIneq$. \label{eq:KKT-Comlementarity}
\end{numcases}
 
\end{lem}

\begin{prooF}{lem:KKT}
  By Assumption \reF{assu:Non--empty}, $\setID\subset\Theta$
is non\textendash empty and closed, so the global optima for Program
(\reF{prog:regProg}) exist. Program (\reF{prog:regProg}) is convex
for any $\mu\geq0$, i.e.~the objective function is convex, the constraints
are affine. Assumption \reF{assu:Non--empty} implies Slater's condition.
Since the Program (\reF{prog:regProg}) is convex, any global optimum
$\funLArg$ of Program \reF{prog:regProg} satisfies \eqreF{eq:KKT-FOC}-\eqreF{eq:KKT-Comlementarity}
for some finite vector of Lagrange multipliers $ {\lambda}$
(maybe non\textendash unique) (see p.244 in \citet{boyd2004convex}).
\end{prooF}

If we introduce the notation $\funLLagrangian\bydef\theta_{1}+\mu\ltwo{ {\theta}}^{2}+ {\lambda}^{\prime}\MomentVec{\theta}{\measTrue},$ \eqreF{eq:KKT-FOC} becomes
$$\grad[ {\theta}]\funLLagrangian=0$$

Let $\argminTL\mP\bydef\left(\argminT\mP,\argminL\mP\right)$ be a set of solutions to \eqreF{eq:KKT-FOC}-\eqreF{eq:KKT-Comlementarity}. In order to have a unique solution $\lambda$  Program \eqreF{prog:regProg} need to meet a stronger constraint qualification condition LICQ defined earlier in Section~\ref{subsec:Regularization}.

\newpageLemma

\begin{lem}[Uniqueness of the optimal solutions]
\label{lem:UniqueMin} Suppose that both Assumption  \ref{assu:Non--empty}  and LICQ are satisfied.
Then for any $\mu\geq0$ the set of multipliers $\funLLagrange$ is a singleton. Moreover if $\mu>0$, then $\argminTL\mP$ is a singleton.
\end{lem}

\begin{prooF}{lem:UniqueMin}
By definition of $\activeMat\tP$, any $\theta$ and $\lambda$ satisfying \eqreF{eq:KKTsystem} satisfy   
\begin{equation}
\lambda^\prime(\activeMat\tP)^\prime\activeMat\tP = \lambda^\prime. 
\label{eq:non-zero-lambda}
\end{equation} 
So \eqreF{eq:KKT-FOC}  becomes \begin{equation}
\left(\eOneVec+2\mu {\theta}\right)^{\prime}   =- \gamma^\prime\activeMat\tP\Ap, \label{eq:FOC-gamma}
\end{equation}  where $\gamma^\prime\bydef\lambda^\prime(\activeMat\tP)^\prime\in\R^{\card{\activeSet\tP}}$. 
By LICQ, for any $ {\theta}\in\setID$ the matrix $A\bydef\activeMat\tP \Ap$ has full rank. Hence for any $\theta$ there can be at most one 
 $\gamma^{*}\in \R^{\card{\activeSet\tP}}$
 satisfying \eqreF{eq:FOC-gamma}. If $\eOneVec+2\mu \theta   = 0$ , then trivially $\lambda$ is a zero vector. Otherwise it is given by 
 \begin{equation}\gamma^{*} = - (AA^\prime)^{-1}A^\prime  \left(\eOneVec+2\mu {\theta}\right).  \label{eq:lambda}
\end{equation} Then $(\funLLagrange)^\prime \bydef (\gamma^{*})^\prime\activeMat\tP $ is the unique solution to \eqreF{eq:KKT-FOC}-\eqreF{eq:KKT-Comlementarity} for any solution $\theta$.
 
Now consider the case $\mu>0$.  The second order derivative matrix of $\funLLagrangian$ with respect to $ {\theta}$
at any solution $\argminTL\mP$
is $2\mu\I[d].$ It is positive definite for any $\mu>0$, so the
Second Order Sufficient Condition (SOSC) is satisfied at any point.
By Theorem 3.63 from \citet{bonnans2013perturbation} the second order
growth condition holds at $\funLArg$, i.e. $\exists\varepsilon>0$
and $c>0$ s.t. for $\forall {\theta}\in\Theta\left(\measTrue\right)$
s.t. $\ltwo{ {\theta}-\funLArg}<\varepsilon$ the following
inequality holds
\[
\theta_{1}+\mu\ltwo{ {\theta}}^{2}\geq\eOne\funLArg+\mu\ltwo{\funLArg}^{2}+c\ltwo{ {\theta}-\funLArg}^{2}.
\]
So the value of the objective function at $\funLArg$ is strictly
smaller than the value at any other point in a neighborhood of $\funLArg$.
Since for the convex program the set of global optima is convex and connected, it implies
that $\funLArg$ is the unique global minimizer. 
\end{prooF}

\newpageLemma

 {
\newcommand{\setOfAllActiveSets}{\{\activeSet\tP\}_{\theta\in\setID}}

It is assumed (see eq. \eqref{eq:Box}) that the system of inequalities includes deterministic constraints 
\begin{equation}
-\infty<-\underline{c}_{\ell}\leq\theta_{\ell}\leq \bar{c}_{\ell}<\infty \quad \text{for } \ell=1,...,d.
\end{equation}
These constraints define a compact set $\Theta$. 
By construction, $\Theta (P)\subset \Theta$ for all measures $P$.

\begin{lem} \label{lem:boundOnLambda}
Suppose that Assumptions~\ref{assu:Non--empty} and \ref{ass:ULICQ}  are satisfied and $\mu\leq 1/2$. Then
\begin{equation}
\ltwo{\funLLagrange}^2  \leq \CL^2 \bydef \frac{\CT^{3}}{\eta^{2}}<\infty,\label{eq:lagrangeNorm} 
\end{equation}
where $ \CT  \bydef (1 +\max_{ {\theta}\in\Theta}\ltwo{ \theta})$. 

\end{lem}

\begin{prooF}{lem:boundOnLambda}
Consider any point $\theta\in\funLArg$ and the corresponding $\activeSet\tP$. Let $A\bydef\activeMat\tP \Ap$ and $b\bydef\activeMat\tP \bp$. 
Let $\eta_A^2\bydef \eig(AA^\prime)$ so that equation~\eqreF{eq:lambda} implies  
\begin{equation}
\ltwo{\funLLagrange}  \leq \eta_A^{-1} \ltwo{\eOneVec+2\mu {\theta}} .\label{eq:immediateBoundOnLambda}
\end{equation}
By the variational property of eigenvalues, 
\begin{equation}
\eta_A^2 =\min_{v\in\R^\ell} \frac{v^\prime AA^\prime v}{v^\prime v}. \label{eq:etaA}
\end{equation} 
By Assumption \reF{assu:RankCondition} 
$$\eta^2\leq \eig((A|b)(A|b)^\prime)\bydef\min_{v\in\R^\ell} \frac{v^\prime(AA^\prime +bb^\prime ) v}{v^\prime v}. $$
Let $v_A$ be any minimizer of the r.h.s. of \eqreF{eq:etaA} such that ${v_A^\prime v_A}=1$ . Then
\begin{equation}
\eta^2\leq v_A^\prime(AA^\prime +bb^\prime ) v_A = (A^\prime v_A )^\prime(I_d +\theta \theta^\prime ) (A^\prime v_A) 
\end{equation} 
where the last equality holds since by definition $b = A\theta.$
Finally,  
\begin{equation}
\frac{\eta^2}{\eta_A^2}\leq \frac{(A^\prime v_A )^\prime(I_d +\theta \theta^\prime ) (A^\prime v_A) }{(A^\prime v_A )^\prime(A^\prime v_A) }\leq \ltwo{I_d +\theta \theta^\prime }\leq \CT. \label{eq:etaAbound}
\end{equation} 
Result \eqreF{eq:lagrangeNorm} then follows from \eqreF{eq:immediateBoundOnLambda} and \eqreF{eq:etaAbound} for any $\mu\leq 1/2$.
\end{prooF}
}

\begin{rem} \label{rem:inverse Norm of active constraints}
Equation~\eqreF{eq:etaAbound} provides bound for, $A$, a matrix with gradients of active moment conditions at any point $\theta \in \Theta$,
\begin{equation}
\ltwo{\left(A A^\prime\right)^{-1}} \leq\CT\eta^{-2}.
\end{equation}

\end{rem} 

\newpageLemma

The function $\phi\left(a,b\right)\bydef\sqrt{a^{2}+b^{2}}+a-b$ , considered in \citet{fischer1992special}, has the following property.

\begin{prop}\label{prop:FBproperty}
\begin{equation}
\phi\left(a,b\right)=0\mbox{ if and only if }a\leq0,b\geq0,ab=0.\label{eq:FBproperty}
\end{equation}
\end{prop}
It can be used to replace \eqreF{eq:KKT-Comlementarity} with an equivalent equality so that the KKT system becomes a system of equations. This result can be used to establish the continuity of solutions in $\mu$, as the following lemma shows.

\begin{lem}
\label{lem:continuityInMu}Under Assumptions \ref{assu:Non--empty} and \ref{ass:ULICQ},  $\argminTL\mP$  is u.h.c. in $\mu$;  $\minV\mP$ is continuous in  $\mu$ for $\mu\geq0$.  
\end{lem}

\begin{prooF}{lem:continuityInMu}
By Proposition \reF{prop:FBproperty}   equation \eqreF{eq:KKT-Comlementarity} is equivalent to 
\begin{equation}
\phi\left(\Moment{\theta}{\measTrue},\lambda_{j}\right)  =0 \mbox{ for }j\in\setIneq.
\label{eq:FBKKTsystem}
\end{equation}
Solutions to \eqreF{eq:KKT-FOC},\eqreF{eq:KKT-EQ},\eqreF{eq:FBKKTsystem}  coincide with solutions to
\begin{equation}
\Psi(\theta,\lambda;\mu,\measTrue)  \bydef \left\Vert \grad[ {\theta}]\funLLagrangian\right\Vert _{2}^{2}+\sum_{j\in\setEq}\left(\Moment{\theta}{\measTrue}\right)^{2}+\sum_{j\in\setIneq}\left(\phi\left(\Moment{\theta}{\measTrue},\lambda_{j}\right)\right)^{2} = 0.\label{eq:FBargmin}
\end{equation}    Lemmas~ \reF{lem:UniqueMin}-\reF{lem:boundOnLambda} imply that  $\funLLagrange[\mu]$ is unique and satisfies (\reF{eq:lagrangeNorm})
for any $\mu\in[0,1/2]$. So the solution to (\reF{eq:FBargmin}) coincides
with solutions of 
 
\begin{equation}
\begin{aligned}
& \underset{\theta,\lambda}{\min}
& & \Psi(\theta,\lambda;\mu,\measTrue)  \\ \label{eq:FBargmin-1}
& \text{s.t.} & &  {\theta}\in\Theta,{\lambda}\in\mathbb{R}^{k},\ltwo{ {\lambda}}\leq \CL.
\end{aligned}
\end{equation}
The objective function of this program is continuous in $\mu$ and the domain is a compact valued continuous correspondence in $\mu$. By the Maximum Theorem (see \citet{ok2007real}) $\funLXi$
is u.h.c. function of $\mu\geq0$. 

Function $\minV\mP=\eOne\funLArg+\mu\left\Vert \funLArg\right\Vert ^{2}$
is a composition of u.h.c. functions and hence, by Theorem
VI.2.1' from \citet{berge1963topological}, is u.h.c. in $\mu\in\mathbb{R}_{+}$. Since by definition $\funLVal$ is a single\textendash valued function, u.h.c. implies continuity in $\mu\geq0$ for any fixed $\measTrue$. 
 \end{prooF} 

\newpageLemma
\subsection{Smoothness properties } \label{app:smoothness}
 
{
\newcommand{\psiD}{\left(\begin{array}{c}
(\dX)^{\prime}\argminL + 2 \dmu\argminT\\
\matD (\dX\argminT-\dy)
\end{array}\right)}

\newcommand{\dKKTmat}{\left(\begin{array}{cc}
2\mu\I[d] & \dA^\prime \\
\dA &   0
\end{array}\right)}

In this section we will study the directional derivatives of the value and the optimal solutions of  Program~\eqref{prog:regProg}.
We will pursue this goal by taking a limit of the perturbed program defined below as the size of the perturbation goes to zero. 
Consider a perturbation in parameters $\expect\Data=(\Ap|\bp)$ and $\mu$ in a direction $h^\prime=(\text{vec}(\dW)^\prime,\hm)\in \R^{k \left(d+1\right)+1}$, where  $\dW \bydef(h_A,h_b) \in \R^{k\times\left(d+1\right)} $ and $\hm\in\R$. 
The corresponding perturbation in the constraints is $\dMomentVec{\theta}\bydef\dX \theta -\dy$.
Given these directions, for any $t\geq0$, $\mu>0$  we can define a perturbed program, 
\begin{eqnarray}
\underset{ {\theta}\in\Theta}{\min} & \eOne \theta+(\mu+t\dmu)\ltwo{\theta}^{2},\label{prog:regProgPerturbed}\\
\mbox{s.t. } & \begin{cases}
\Moment{\theta}{\measTrue} + t \dMoment{\theta}  =0 &\mbox{for }j\in\setEq,\\
\Moment{\theta}{\measTrue} + t \dMoment{\theta}   \leq 0 &\mbox{for } j\in\setIneq.
\end{cases}\nonumber 
\end{eqnarray}
Perturbations of inequality constrained programs can lead to changes in the set of the constraints active at the optimum in response to arbitrarily small perturbations. 
It is instructive to consider the following sets of constraints for the unperturbed program, i.e. with $\ha=0,\hb=0,\hm=0$,
\begin{eqnarray*}
\setP\mP& \bydef &  \left\{ j\in\setIneq|\argminL_j\mP>0\right\}  \cup\setEq,\\
\setM\mP & \bydef & \left\{ j\in\setIneq|\Moment{\argminT\mP}{\measTrue}>0\right\}, \\
\setO\mP & \bydef & \left\{ j\in\setIneq|\argminL_j\mP=0,\Moment{\argminT\mP}{\measTrue}=0\right\} ,\\
\activeSet\mP & \bydef & \setO\mP\cup\setP\mP.
\end{eqnarray*}
Set $\setP$ contains active inequality constraints with positive Lagrange multipliers and equality constraints. 
These constraints will remain active for small enough perturbations in any direction $\ha,\hb$ (by continuity of the Lagrange multipliers).
Set $\setM$ contains slack constraints. 
They will remain slack in response to sufficiently small perturbations (by continuity of the optimal solution and the constraints functions).
Set $\setO$ contains active inequality constraints with zero Lagrange multipliers. 
If we drop these constraints, the optimal solution will not change, but they play an important role in the perturbed program.
Constraints in $\setO$ become inactive in response to perturbations in some directions, no matter how small the perturbation is or remain active, and acquire positive Lagrange multipliers for other directions. 
The optimal solution $\xi$ would be fully differentiable iff  $\setO$ is empty as will be evident from the explicit formula for its directional derivative.
Finally, $\activeSet$ contains all active constraints at the optimal solution. 
 
Suppose that the perturbation size $t>0$ is small enough such that Program \eqreF{prog:regProgPerturbed} satisfies Assumptions~\ref{assu:Non--empty}-\ref{assu:NoOveridentification}. 
Then it has a unique solution $\argminTL_h(t)$ for $0\leq t<T$  which can be represented as $\argminTL_h (t) =\argminTL+t\dot{\argminTL}_h$, as the following lemma shows.
The directional derivative $\dot{\argminTL}_h^\prime\mP\bydef(\dot{\argminT}^\prime \mP ,\dot{\argminL}^\prime\mP)$ will depend on the following objects,
\begin{eqnarray*}
\setD\mP & \bydef &  \left\{ j\in\setO\mP|\dot{\argminL}_{h;j}\mP>0\right\} \cup \setP\mP,\\
\dA\mP & \bydef & \matD\mP \Ap,\\
\projD  & \bydef & \I[d]-\dA^\prime\left(\dA\dA^\prime\right)^{-1}\dA,\\
A^\dagger & \bydef & A^\prime\left(AA^\prime\right)^{-1}.
\end{eqnarray*}
I suppress the argument $\mP$. 
 
\begin{lem}[Local linear representation]
\label{lem:directionalDerivative}Suppose that Assumptions \ref{assu:Non--empty} and \ref{ass:ULICQ} hold for $\measTrue$. There is a neighborhood $\left[0,T\left(\mu,h,\measTrue\right)\right]$ in which 
Program \eqreF{prog:regProgPerturbed} has a unique solution $\argminTL_h (t)=\argminTL+t\dot{\argminTL}_h$  with
\begin{equation}
 \dot{\argminTL}_h = -\left(\begin{array}{cc}
\left(2\mu\right)^{-1}\projD & \dA^\dagger\\
(\matD)^\prime(\dA^\dagger)^\prime  & -2\mu(\matD)^\prime\left(\dA\dA^\prime\right)^{-1}
\end{array}\right)\psiD.
\end{equation}

\end{lem}

\begin{prooF}{lem:directionalDerivative}
By Lemma \reF{lem:UniqueMin}, if $t=0$ program (\reF{prog:regProgPerturbed})
has a unique solution $\argminTL$. Since this solution satisfies LICQ  and SOSC, it  is strongly regular by Proposition 5.38 from \BS.
The remaining argument follows the proof of Theorem 5.60 from \BS, which uses  an  implicit function theorem for generalized equations (Theorem 5.13 in the same book)  at a strongly regular solution.
We are going to apply it to the KKT conditions for Program (\reF{prog:regProgPerturbed}) at the strongly regular solution~$\argminTL$,
\begin{numcases}{}\label{eq:KKTsystem-perturbed}
(\eOneVec+2\left(\mu+\dmu t\right) {\theta} )^\prime  =- {\lambda}^{\prime}\left(\Ap+t\dX\right),& \label{eq:KKT-FOC-perturbed}\\
\Moment{\theta}{\measTrue} + t \dMoment{\theta} =0 & $j\in\setEq$,\label{eq:KKT-EQ-perturbed}\\
\phi\left(\Moment{\theta}{\measTrue} + t \dMoment{\theta},\lambda_{j}\right) =0 & $j\in\setIneq$. \label{eq:KKT-Comlementarity-perturbed}
\end{numcases}
By Theorem 5.60 of \BS, $\argminTL_h(t)$ is analytic in $t$ in some neighborhood $\left[0,T\left(\mu,h,\measTrue\right)\right]$ , i.e. it can be represented as power series. 
First, let us compute the linear term.
By the strong regularity and Theorem 5.13 in \BS, there exist a unique solution $\left(\dot{\argminT} ,\dot{\argminL}\right)$
to the following system of equations (this system is the gradient of \eqref{eq:KKT-FOC-perturbed}-\eqref{eq:KKT-Comlementarity-perturbed} with respect to $t$ at point $t=0$)
\begin{numcases}{} 
2\mu  \dot{\argminT}^{\prime}\I[d]+  \dot{\argminL}^{\prime}\EX =-\argminL^\prime\dX-2\dmu\argminT^\prime,&\label{eq:KKT-FOC-d}\\ 
 e_j^\prime \Ap \dot{\argminT}   +  \dMoment{\argminT} =0 & $j\in\setP\mP$,\label{eq:KKT-EQ-d}\\
\phi\left(e_j^\prime \Ap \dot{\argminT}   +  \dMoment{\argminT},\dot{\argminL}_{j}\right)  =0 & $j\in\setO\mP$, \label{eq:KKT-Comlementarity-d}\\
\dot{\argminL}_{h;j} =0 & $j\in\setM\mP$.\label{eq:KKT-EqQ-d}
\end{numcases}
This unique solution determines the set $\setD$. System \eqreF{eq:KKT-FOC-d}-\eqreF{eq:KKT-EqQ-d} can be represented in a matrix form:\footnote{Compare Equation \eqref{eq:directionalLinearization} with Equation (5.81) on page 186 in \cite{shapiro2014lectures}.}
\begin{align}
\dKKTmat\left(\begin{array}{c}
\dot{\argminT}\\
\matD\dot{\argminL}
\end{array}\right) & = -\psiD.\label{eq:directionalLinearization} 
\end{align}
In addition to that, $\dot{\argminL}=(\matD)^{\prime}\matD\dot{\argminL}$. One can check by direct computation that
\[
\dKKTmat^{-1}=\left(\begin{array}{cc}
\left(2\mu\right)^{-1}\projD & \dA^\dagger\\
 (\dA^\dagger)^\prime  & -2\mu \left(\dA\dA^\prime\right)^{-1}
\end{array}\right).
\]

Since the higher order derivatives of every constraint function and the objective function of Program \eqreF{prog:regProgPerturbed} with respect to $t$ are zero, the higher order directional derivatives of $\argminTL_h$ are equal to zero at $t=0$. Thus, the power series expansion of $\argminTL_h$ has only constant and linear terms. 
\end{prooF}

}

\newpageLemma
Now we can rewrite   Program \eqreF{prog:regProgPerturbed}  in an explicit form assuming $h_\mu=0$,
  \begin{eqnarray}
\underset{ {\theta}\in\Theta}{\min} & \eOne \theta+\mu\ltwo{\theta}^{2},\label{prog:regProgPerturbedExplicit}\\
\mbox{s.t. } & \begin{cases}
e_j^\prime (\Ap + t h_A) \theta = \bp + t h_b  &\mbox{for }j\in\setEq,\\
e_j^\prime (\Ap + t h_A) \theta \leq \bp + t  h_b &\mbox{for } j\in\setIneq.
\end{cases}\nonumber 
\end{eqnarray}

\begin{lem}
\label{lem:boundOnVgrowth}Suppose that Assumptions \ref{assu:Non--empty} and \ref{ass:ULICQ}  hold for $P$. There exist such $\delta>0$ such that for any $h = (h_A,h_b)\in \R^{k\times(d+1)}$ with norm $\ltwo{h}<\delta$ and any $\mu\in\left(0,1/2\right]$ and $t\in [0,1]$  the solution of Program \eqreF{prog:regProgPerturbedExplicit}   satisfies 
\begin{align}
\ltwo{\argminT_h(t)- \argminT}&\leq L_\theta \frac{\ltwo{t h}}{\mu} \label{eq:normThetadot},\\
\ltwo{\argminL_h(t)- \argminL}&\leq L_\lambda \ltwo{t h} \label{eq:normLambdadot},\\
\abs{\minVd(t)-\minV-t\argminL^\prime  (h_A\argminT-h_b)} &\leq L_v \frac{\ltwo{t h}^{2}}{\mu}, \label{eq:boundsOnLinearApproximation}
\end{align}
 where $L_\theta = \frac{ \sqrt{2\CT^3 } }{(\underline{\eta}/2)}$, $L_\lambda = \frac{(\ltwo{\expect\Data}+\delta)\CT^4+ \eta^2\CT^2}{(\underline{\eta}/2)^4}  $ , and $L_v = \frac{2\CT^3}{(\underline{\eta}/2)^{2}}	$.
\end{lem}

\begin{prooF}{lem:boundOnV}
First, consider $t\in[0,T(\mu,h,\measTrue)]$ with $T(\mu,h,\measTrue)$ defined in Lemma~\ref{lem:directionalDerivative}.
By Lemma \reF{lem:directionalDerivative} the value function $\minVd(t)  \bydef\eOneVec^\prime\argminTd(t)+\mu\ltwo{\argminTd(t)}^{2}$ can be represented as \begin{equation}
\minVd(t)  = \minV+t\left(\eOneVec+2\mu\argminT\right)^{\prime}\dot{\argminT}+\mu t^{2}\ltwo{\dot{\argminT}}^{2}. \label{eq:linearValueExpansion}
\end{equation}
First, consider the second term. Since by definition $\setP \subseteq \setD$,
 we have $\argminL^\prime  =\argminL^\prime(\matD)^{\prime}\matD.$ 
Correspondingly, $\argminL^\prime\Ap  = \argminL^\prime(\matD)^\prime\dA.$
By Lemma \reF{lem:UniqueMin},
$\left(\eOneVec+2\mu\argminT\right)^{\prime}=-\argminL^\prime\Ap.$
So 
\begin{align}
 \left(\eOneVec+2\mu\argminT\right)^{\prime}\projD & =-\argminL^\prime(\matD)^\prime(\dA\projD) =0,\label{eq:trick1}\\
\left(\eOneVec+2\mu\argminT\right)^{\prime}\dA^\dagger & =-\argminL^\prime(\matD)^\prime(\dA\dA^\dagger) =-\argminL^\prime(\matD)^\prime.\label{eq:trick2}
\end{align}
Equations (\reF{eq:trick1}) and (\reF{eq:trick2}) imply that
 \begin{equation}
 \left(\eOneVec+2\mu\argminT\right)^{\prime}\dot{\argminT} =-\argminL^\prime \dMomentVec{\argminT} .\label{eq:trick3}
 \end{equation} 
Second, by Lemma \reF{lem:boundOnLambda} and Remark~\reF{rem:inverse Norm of active constraints} for any  $\mu\leq1/2$ ,
\begin{equation}
\ltwo{\left(\dA\dA^\prime\right)^{-1}}   \leq\CT\eta ^{-2}(\measTrue)\text{ and }
\ltwo{\argminL}^2 \leq\CT^3\eta^{-2}(\measTrue).\label{eq:boundL}
\end{equation}
Then by the triangular inequality and inequalities  \eqreF{eq:boundL} (and the fact that $\CT\geq1$)
\begin{align}
\ltwo{\dot{\argminT}}^{2} &= \frac{1}{\left(2\mu\right)^{2}}\left(\argminL^\prime\dX\right) \projD\left(\argminL^\prime\dX\right)^\prime+
\dMomentVec{\argminT}^{\prime}(\matD)^\prime\left(\dA\dA^\prime\right)^{-1}\matD\dMomentVec{\argminT}\\
&\leq \ltwo{\dW}^{2}\frac{\CT^3}{\eta^{2}(\measTrue)}\left(  \frac{1}{\mu^2}+1\right),
\end{align} 
which implies \eqreF{eq:normThetadot}. Equation \eqreF{eq:normLambdadot} can be proven similarly,
\begin{align}
\ltwo{\dot{\argminL}}   = & \ltwo{(\matD)^\prime(\dA^\dagger)^\prime \left(\argminL^\prime\dX\right) - 2 \mu (\matD)^\prime\left(\dA\dA^\prime\right)^{-1}\matD\dMomentVec{\argminT} }\\
\leq & \frac{\ltwo{\expect\Data}\CT^4+ \eta^2(\measTrue)\CT^2}{\eta^4(\measTrue)}  \ltwo{\dW} 
\end{align} 
 
Finally, the bound in \eqreF{eq:boundsOnLinearApproximation} follows from equations \eqreF{eq:normThetadot}, \eqreF{eq:linearValueExpansion}, and \eqreF{eq:trick3}.

 To extend the argument to the entire interval $t\in[0,1]$, notice that $\eta(\measTrue)$ and $s(\measTrue)$ are Lipshitz-continuous. So there exist $\delta>0$ such that $\eta(\expect W + t h)>\underline{\eta}/2$ and $s(\expect W + t h)>\underline{s}/2$ for any $h = (h_A,h_b)\in \R^{k\times(d+1)}$ with norm $\ltwo{h}<\delta$ and any $t\in [0,1]$. 
This $\delta$ can be chosen uniformly for $\measTrue\in\Measures$.
Now we can replace $\ltwo{\expect\Data}$ with  $ {(\ltwo{\expect\Data}+\delta)}$ and $\eta(\measTrue)$ with $\underline{\eta}/2$ to obtain uniform constants   $L_\theta$, $L_\lambda$ , and $L_v$.

\end{prooF}

\subsection{Proof of Theorem~\ref{thm:bounds}\label{sec:Proof-of-Theorem1}}
\newpageLemma

\begin{lem}
\label{lem:constantTheta} Suppose that Assumptions \ref{assu:Non--empty} and \ref{ass:ULICQ} hold. There exists some $\bar{\mu}\left(\measure\right)>0$
such that for any $\mu<\bar{\mu}\left(\measure\right)$ the solution
to program \eqreF{prog:regProg}, $\argminT\mP$ is constant. 
\end{lem}

\begin{prooF}{lem:constantTheta}
Consider a direction $h^\prime=(\text{vec}(\dW)^\prime,\hm)\in \R^{k \left(d+1\right)+1}$ satisfying $\dW=0$, $\dmu=1$ and any $\mo>0$ in a neighborhood of $0$. By Lemma \reF{lem:directionalDerivative} we get
\begin{equation}
\dot{\argminT}\moP =-\frac{1}{\mo}\projD\moP\argminT\moP.\label{eq:thetadotmu0}
\end{equation} 

Substitute difference in eq.\eqreF{eq:KKT-FOC} (Lemma \reF{lem:KKT}) between $\mu=0$ and $\mo$ for $\argminT$ in \eqref{eq:thetadotmu0},
\begin{equation}
\dot{\argminT}\moP  =(2\mo^{2})^{-1}\projD\moP\Ap^\prime \left(\argminL\moP-\argminL\oP\right). \label{eq:derivativeOfThetaInMu}
\end{equation} 
Now we need to show that $\dot{\argminT}\moP =0$.  To establish this result, we need to study the behavior of the set of inequalities with positive Lagrange multipliers. 

Consider any $j\in\setIneq$. We know by Lemma \reF{lem:continuityInMu} that $\argminL_j\mP$ is continuous in $\mu$.  If $\argminL_j\oP>0$
then by continuity $\argminL_j\mP>0$ in some neighborhood $\left(0,\bar{\mu}_{j}(\measTrue)\right]$.  If $\argminL_j\oP=0$ set $\bar{\mu}_{j}=1$. Take $\mbP\bydef\min_{j\in\setIneq}\bar{\mu}_{j}(\measTrue)$. WLOG suppose that $\mo\in\left[0,\bar{\mu}\left(\measTrue\right)\right]$, so 
we get the inclusion 
\begin{equation}
\setP\oP\subseteq\setP\moP.\label{eq:indexSetsInclustions1}
\end{equation}
By definition of $\setD$
\begin{equation}
\setP\moP\subseteq\setD\moP.\label{eq:indexSetsInclustions2}
\end{equation}
By definition of the index matrices, inclusions \eqreF{eq:indexSetsInclustions1} and \eqreF{eq:indexSetsInclustions2} imply that
\begin{align}
\argminL\oP & =(\matD\moP)^{\prime}\matD\moP\argminL\oP,\\
\argminL\moP & =(\matD\moP)^{\prime}\matD\moP\argminL\moP,
\end{align}
so \begin{equation}
\Ap^\prime \left(\argminL\moP-\argminL\oP\right) = \dA^\prime \moP(\matD\moP)\left(\argminL\moP-\argminL\oP\right). 
\end{equation} 

Since by definition $\projD\moP\dA\moP=0$ and $\projD$ is a symmetric matrix, equation \eqreF{eq:derivativeOfThetaInMu} implies  $\dot{\argminT}\moP =0$. By Lemma~\reF{lem:continuityInMu} the single valued function $\argminT\mP$ is continuous for $\mu>0$. So the r.h.s. directional derivative being equal to zero implies that $\argminT\moP = \argminT\MP $ for any $\mo\in\left(0,\bar{\mu}\left(\measTrue\right)\right].$
\end{prooF}

\begin{rem} 
Equation \eqreF{eq:KKT-FOC} from Lemma \reF{lem:KKT} with $\mu=0$ and $\mu=\mo$ also implies
$$\argminL\moP=\argminL\oP-2\mo\argminT^\prime\MP\dA^\dagger\moP\matD\moP.$$
This implies that $\argminL\mP$ is Lipschitz at $\mu=0$. By Lemma \reF{lem:boundOnLambda}  the Lipschitz constant can be taken equal to $2\CL$. So, for any $j\in\setIneq$ with $\argminL_j\oP>0$, we can take $\bar{\mu}_j\left(\measTrue\right)= \argminL_j\oP/2\CL$. 
On top of that, Lemma~\reF{lem:directionalDerivative} implies that $\argminL\mP$ is Lipschitz in $\mu$ with the same constant for any $\mu\in[0,1/2]$
\end{rem}

\newpageLemma

\begin{proof}[Proof of Theorem~\ref{thm:bounds}]
 
 Take any $\theta^* \in \argminT(0,\measTrue)$. Since $\theta^*$ is a feasible point of Program~\eqref{prog:regProg}, 
\begin{equation}
\argminT_1(\mu,\measTrue) + \mu\ltwo{ \funLArg[\mu][\measTrue]}^{2}\leq    \theta^*_1 + \mu \ltwo{\theta^*}^{2}.
\end{equation}
By definition $\funLValO[\measTrue]= \theta^*_1$ which immediately implies the  l.h.s. inequality in \eqref{eq:innerOuterBounds}. 

The first coordinate $\argminT_1(\mu,\measTrue)$ is increasing in $\mu$ while the norm  $\ltwo{ \funLArg[\mu][\measTrue]}^{2}$ is decreasing in $\mu$. Since  $\kappa\geq\mu\geq0$, we get
 \begin{align}
\ltwo{ \funLArg[\mu][\measTrue]}^{2}&\geq  \ltwo{\funLArg[\kappa][\measTrue]}^{2},\\
\funLValO[\measTrue] &\leq  \argminT_1(\mu,\measTrue).
 \end{align}
 These two inequalities imply the r.h.s. inequality in \eqref{eq:innerOuterBounds},
\begin{equation}
\funLValO[\measTrue] \leq   \argminT_1(\mu,\measTrue) + \mu(\ltwo{ \funLArg[\mu][\measTrue]}^{2}-\ltwo{ \funLArg[\kappa][\measTrue]}^{2})=\minVin(\mu,\kappa,\measTrue).
\end{equation}
The remaining part of the Theorem's assertion follows from  Lemma~\ref{lem:constantTheta}. 
\end{proof}

\newpageLemma
\subsection{Proof of Theorem \ref{thm:consistency}\label{sec:Proof-of-Theorem2}}
\begin{prop} \label{prop:integrability}
Suppose that $\expect \abs{\xi}^{1+\epsilon}\leq \infty$ for some $\epsilon>0$. Then for any $r>0$  $$ \expect [ \abs{\xi} I\{\abs{\xi}\geq r \} ]\leq \expect \abs{\xi}^{1+\epsilon} / r^\epsilon.$$
\end{prop}
\begin{proof}
The result follows from the monotonicity of integrals.
\end{proof}

Let $\pi(P,Q)$ denote the Prokhorov distance between probability laws in $P$ and $Q$, which induces the weak topology (see p. 456 in \citet{van1996weak}). Let $$  G_{n}\left(\measTrue\right)\bydef\sqrt{n}\left(\vecM\left(\sMean{\data_i}\right)-\vecM\left(\expect\Data\right)\right). $$

\begin{lem}\label{lem:CLTandLLN} Consider $\mathcal{P} $, a class of distributions satisfying  Assumption 
 \reF{assu:Moments},  and any $\epsilon>0$. Then
\begin{eqnarray}
\lim_{n\to\infty}\sup_{\measTrue \in \mathcal{P}}\measTrue\left(\sup_{m\geq n}\ltwo{\frac{1}{m}\sum^{m}_{i=1}{\data_{i}-\expect W}}\geq\epsilon\right) & = & 0,\label{eq:1stMoments}\\
\lim_{n\to\infty}\sup_{\measTrue \in \mathcal{P}}\measTrue\left(\sup_{m\geq n}\ltwo{\frac{1}{m}\sum^{m}_{i=1}{\data_{i}\otimes\data_{i}}-\expect [ \Data \otimes \Data]}\geq\epsilon\right) & = & 0,\label{eq:2ndMoments}\\
\lim_{n\to\infty}\sup_{\measTrue \in \mathcal{P}}\pi( G_{n}\left(\measTrue\right),N\left(0, {\Omega}_{\measTrue}\right)) & = & 0, \label{eq:CLTnonuniform}
\end{eqnarray}
where $ {\Omega}_{\measTrue}=\text{Cov}_{\measTrue}\left(\vecM\left(\Data\right)\right)$.
\end{lem}

\begin{prooF} {lem:CLTandLLN}
Consider any combination of indices $r,\ell,j,m$. 
Assumption \reF{assu:Moments} together with the Schwarz inequality implies,
\begin{equation}
\expect\abs{W_{r,\ell}W_{j,m}}^{1+\varepsilon/2}   \leq\left(\expect\abs{W_{r,\ell}}^{2+\varepsilon}\expect\abs{W_{j,m}}^{2+\varepsilon}\right)^{1/2}\leq\bar{M}.\label{eq:uniformIntegrability}
\end{equation}
So the random variables $\abs{W_{r,\ell}}$ and $\abs{W_{r,\ell}W_{j,m}}$ have correspondingly finite $1+\varepsilon/2$ and $2+\varepsilon$ moments. The bound~\eqreF{eq:uniformIntegrability} on the moments is independent of $\measTrue\in\mathcal{P}$, so these random variables are uniformly integrable on $\mathcal{P}$ by Proposition~\reF{prop:integrability}. The limits (\reF{eq:1stMoments}) and (\reF{eq:2ndMoments})
follow immediately from Proposition A.5.1 in \citet{van1996weak}. The result (\reF{eq:CLTnonuniform}) follows from Proposition A.5.2 in the same book.
\end{prooF}

\newpageLemma
 \begin{lem}\label{lem:wellDefined} Suppose that $\measTrue\in\Measures$. Then   $\measEmp$ satisfies Assumptions~\ref{assu:Non--empty} and  LICQ with probability approaching 1 uniformly in $\measTrue\in\Measures$.
\end{lem}

\begin{prooF}{lem:wellDefined}
 
Consider any $\measTrue\in\Measures$.
Since such a $\measTrue$ satisfies Assumption~\ref{assu:Non--empty}, there exists a set of constraints $\jSet$ with $\card{\jSet}=d$ and containing $\setEq$ such that
\begin{equation}
    \theta^{\jSet}_\measTrue \bydef ((\Ap^\jSet)^\prime\Ap^\jSet)^{-1}(\Ap^\jSet) \bp^\jSet \in \setID,
\end{equation}
 where $\Ap^\jSet \bydef \jMat\Ap$ and $\bp^\jSet \bydef \jMat\bp$. 
 By Assumption~\ref{assu:NoOveridentification} for all $j\in\setIneq \backslash \jSet$ we have
 \begin{equation}
     e_j(\Ap  \theta^{\jSet}_\measTrue -\bp)\geq \underline{s}. 
 \end{equation}
The function $(\Ap,\bp)=\expect W $ is uniformly continuous on $\Measures$ since this class of measures in uniformly integrable, as was shown in the proof of Lemma~\ref{lem:CLTandLLN}.
The function $\theta^{\jSet}_\measTrue$ is uniformly continuous on $\Measures$ since the product matrix $ (\Ap^\jSet)^\prime\Ap^\jSet$ has eigenvalues uniformly bounded from below by $\eta^2$.
By Lemma~\ref{lem:CLTandLLN} and the continuous mapping theorem,
 \begin{equation}
 \lim_{n\to\infty}\sup_{\measTrue \in \mathcal{P}}\measTrue\{ \inf_{m\geq n;j\in \setIneq\backslash\jSet} \{   e_j(\hat{A}_m  \hat{\theta}_m^{\jSet} -\hat{b}_m)\}< \underline{s}/2 \}=0,
\end{equation}
where $\hat{A}_m$ and $\hat{b}_m$ are sample analog estimators of $\Ap$ and $\bp$ based on a sample of size $m$; $  \hat{\theta}_m^{\jSet}\bydef ((\hat{A}_m^\jSet)^\prime\hat{A}_m^\jSet)^{-1}(\hat{A}_m^\jSet) \hat{b}_m^\jSet$. This result implies that $\Theta(\measEmp)$ contains at least one element,  $ \hat{\theta}_n^{\jSet}$,  with probability approaching 1 uniformly in $\measTrue\in\Measures$ as $n\to\infty$.

Analogously, the continuous mapping theorem implies 
 \begin{equation}
 \lim_{n\to\infty}\sup_{\measTrue \in \mathcal{P}}\measTrue\{ \inf_{m\geq n}  \eta(\mathbb{P}_m) < \underline{\eta}/2, \inf_{m\geq n} s(\mathbb{P}_m) < \underline{s}/2\}=0.
 \end{equation}
The result follows from Lemma~\ref{lem:LICQ}.
 
\end{prooF}

\newpageLemma

  \begin{lem}\label{lem:Bahadur} Suppose  that  $\measTrue\in\Measures$ and that $\mn\to0$. Then
  \begin{equation}
 \funLValHat[\mu_{n}]= \minV\mnP +\sMean{\argminL\mnP^\prime\momentVec{\data_i}{\argminT\mnP}  } + O_\Measures(\frac{1}{\mn n}). 
\end{equation}
 
 \end{lem}
 \begin{prooF}{lem:Bahadur}
 First note that by Lemma~\ref{lem:wellDefined} the random variables $  \funLArgHat[\mn]$ and $
 \funLValHat[\mu_{n}]$ are well defined with probability approaching 1 uniformly in $\measTrue\in\Measures$. 
Consider $t=1/\sqrt{n}$ and $h$ such that $\dW =\sqrt{n} (\sMean{\data}-\expect\Data)$ and $\dmu=0$. 
The sequence of perturbations satisfies  $\sqrt{n}(\sMean{\data_i}-\expect\Data)=O_\Measures(1)$  (i.e. it has uniformly tight measures)  by \eqref{eq:CLTnonuniform} in Lemma~\reF{lem:CLTandLLN} and the fact that $\ltwo{{\Omega}_{\measTrue}}\leq \bar{M}$  (by Jensen's inequality).
The assertion of the Lemma follows from equation  \eqref{eq:boundsOnLinearApproximation} in Lemma~\reF{lem:boundOnVgrowth}.  In fact, by Lemma~\ref{lem:CLTandLLN} 
 \begin{equation}
     \lim_{n\to\infty}\sup_{\measTrue \in \mathcal{P}}\measTrue\left(\sup_{m\geq n}\ltwo{\frac{1}{m}\sum^{m}_{i=1}{\data_{i}-\expect W}}\geq\delta\right)=  0,
 \end{equation}
 so  the perturbation is small enough to preserve the results of Lemma~\ref{lem:boundOnVgrowth}, i.e. $\ltwo{th_W}<\delta$,  with probability approaching 1 uniformly as sample size grows.

\end{prooF}

\newpageLemma

  Let's introduce the following definitions
\begin{align*}
\Cov(\theta)\bydef & \expect\left[\matsq{\momentVec{\Data}{\theta}}\right]- \matsq{\MomentVec{\theta}{\measTrue}},\\
\funCovHat{\theta}\bydef & \sMean{\matsq{\momentVec{\data_i}{\theta}}}- \matsq{\sMean{\momentVec{\data_i}{\theta}}}.
\end{align*}
\begin{proof}[Proof of Theorem~\ref{thm:consistency}]
First note that  Lemma~\ref{lem:wellDefined} the random variables $  \funLArgHat[\mn]$ and $
\funLLagrangeHat[\mn]$ are well defined with probability approaching 1 uniformly in $\measTrue\in\Measures$.
Consider $t $ and $h$ as in the proof of Lemma~\ref{lem:Bahadur}.  
Equation~\eqreF{eq:coupling} follows from Lemma~\reF{lem:CLTandLLN},  Lemma~\reF{lem:Bahadur}, and Slutsky's theorem. 
The~results \eqreF{eq:thetaConsistency} and \eqreF{eq:lambdaConsistency} follow from Lemma  \reF{lem:boundOnVgrowth}. 
Finally, by the  triangular inequality  
\begin{align}
  \abs{\funLStdHat^2  - \funLStd ^2}  =&\abs{\trns{\funLLagrangeHat}\funCovHat{\funLArgHat}\funLLagrangeHat-\trns{\funLLagrange}\Cov(\funLArg)\funLLagrange}\leq\nonumber\\
&\abs{\trns{\funLLagrangeHat}\funCovHat{\funLArgHat}\funLLagrangeHat-\trns{\funLLagrange}\funCovHat{\funLArg}\funLLagrange}\nonumber\\+
&\abs{\trns{\funLLagrange}\funCovHat{\funLArg}\funLLagrange-\trns{\funLLagrange}\Cov(\funLArg)\funLLagrange}.\label{eq:consistancy of std, tri Inequality}
\end{align}
Together with \eqreF{eq:thetaConsistency}, \eqreF{eq:lambdaConsistency} and  Lemmas~\reF{lem:boundOnLambda} and \reF{lem:CLTandLLN}, it implies \eqreF{eq:convergenceSpeedStd}.  

\end{proof}

\newpageLemma

\newpageLemma
\subsection{Proof of Theorem \ref{thm:TheoremUniform}\label{sec:Proof-of-Theorem4}}

\begin{lem} \label{lem:argminBound}  
For any $\epsilon>0$ there exists $R\geq0$ such that for any $n$ the following uniform bound holds
\begin{equation}
   \sup_{\measTrue \in \mathcal{P}}\measTrue\left(\sqrt{n}\ltwo{\theta^*(\measEmp)-\theta^*(\measTrue)}\geq R\right)\leq  \epsilon .\label{eq:argminBoundConisitency} 
\end{equation}

\end{lem}
\begin{proof}
The proof is based on the delta method applied to  $ \theta^*(\measTrue) =  \theta^*(\Ap,\bp)$,  a composition of directionally differentiable functions. (Since the space of $\Ap,\bp$ is finite dimensional, the directional derivatives in G\^{a}teaux and Hadamard sense coincide. )

First note that  $ \funLValO[\measTrue] = \minV (\Ap,\bp)$ is directionally differentiable function. 
To see it, consider the minimax representation, which is valid since $\measTrue$ satisfies Assumption~\ref{assu:Non--empty},

\begin{equation}
\minV(\Ap,\bp)=\min_{ {\theta} \in\Theta}\max_{ {\lambda}\in\R^p\times\R_{+}^{k-p},\ltwo{{\lambda}}\leq \CL} \{\theta_{1}  +{\lambda}^{\prime}(\Ap\theta-\bp)\}.  \label{prog:minmax1}
\end{equation}
Here  we also used Lemma~\ref{lem:boundOnLambda} to bound $\lambda$ and make the domain compact. By Theorem 7.28 from  \cite{shapiro2014lectures} for any direction $(h_A , h_b)\in \R^{k\times (d+1)}$ we have
\begin{equation}
  \dot{\minV}  (\measTrue|h_A, h_b)\bydef\lim_{t\to 0^+}\frac{ \minV(\Ap+t h_A ,\bp +t h_b)-  \minV(\Ap,\bp)}{t} = 
 \min_{ {\theta} \in\argminT(\measTrue)}  \{ {\argminL(\measTrue)}^{\prime}(h_A\theta- h_b)\}.
\end{equation}
Similarly, 
\begin{equation}
    {\theta}^{\pm}_i(\Ap,\bp) = \abs{\min_{ {\theta} \in\Theta}\max_{ {\gamma}\in\R^p\times\R_{+}^{k-p},\ltwo{\gamma}\leq \CL ,\nu\geq0} \{\pm {\theta_i} +{\gamma}^{\prime}(\Ap  \theta-\bp ) +\nu (\theta_1-\minV(\measTrue)-\mn)   \}}.\label{eq:conservativeTheta}
\end{equation}
Once again, by Theorem 7.28 from  \cite{shapiro2014lectures} for any direction $(h_A , h_b)\in \R^{k\times (d+1)}$ we have
\begin{equation}
\dot{\theta}^{\pm}_i (\measTrue|h_A, h_b) = 
 \min_{ {\theta} \in\argmin \eqref{eq:conservativeTheta}    }  \{ {\underline{\gamma}(\measTrue)}^{\prime}(h_A\theta- h_b )\} +  \underline{\nu}(\measTrue)    \min_{ {\vartheta} \in\argminT(\measTrue)}  \{ {\argminL(\measTrue)}^{\prime}(h_A\vartheta- h_b)  \}.
\end{equation}
Here we used Proposition 2.47 from \cite{bonnans2013perturbation}, the chain rule for directional derivatives.
By the same proposition, the vector with maximal components $\theta^*(\measTrue)$ is directionally differentiable. 

Delta method (Theorem 7.67 in \cite{shapiro2014lectures}) and Lemma~\ref{lem:CLTandLLN} imply
\begin{equation}
    \sqrt{n}(\theta^*(\measEmp)-\theta^*(\measTrue)) = \dot{\theta}^* (\measTrue|G_{n}\left(\measTrue\right)) + o_p(1).
\end{equation}

By the compactness of $\Measures$, the vanishing term $ o_p(1)$ is uniformly bounded in probability in $\measTrue\in\Measures$. It is routine to compute a uniform bound on the directional derivative,  $\ltwo{\dot{\theta}^* (\measTrue|h_A, h_b)}\leq L_\Measures<\infty$.
The constant $ L_\Measures$ provides a uniform asymptotic bound  
\begin{equation}
   \sqrt{n}\ltwo{\theta^*(\measEmp)-\theta^*(\measTrue)} \leq  L_\Measures  \ltwo {G_{n}\left(\measTrue\right)}  + O_\Measures(1).
\end{equation}
By Lemma~\ref{lem:CLTandLLN}, this representation implies \eqref{eq:argminBoundConisitency}.

\end{proof}

\newpageLemma
 
\begin{proof}[Proof of Theorem~\ref{thm:TheoremUniform}]
The proof is analogous for all CI. Consider, for example, $  {\mbox{CB}}_{\alpha,n,\Measures} $. Pick an arbitrary measure $\measTrue\in\Measures$.
Consider any $\delta>0$  such that $z_{1-\alpha}\sigma^0>2\delta>0$.  Then by Lemma~\ref{lem:argminBound} and Theorem~\ref{thm:consistency} correspondingly there exist  $n(\delta,\epsilon)$ such that for any $n>n(\delta,\epsilon)$
\begin{align}
    &\inf_{\measTrue\in\Measures }    \measTrue\{\mn  \sqrt{n} \abs{  \ltwo{\theta^{* }(\measEmp)}^2-\ltwo{\theta^{* }(\measTrue)}^2 } \leq \delta\}  &\geq& 1-\epsilon,\\
    &\inf_{\measTrue\in\Measures}    \measTrue\{  z_{1-\alpha}\abs{\funLStdHat[\mn]  - \funLStd[\mn] }  \leq \delta\}  &\geq& 1-\epsilon,\\
    &\sup_{\measTrue\in\Measures}  \rho_n(\measTrue) &\leq&\delta
\end{align}

 By construction $ \ltwo{\theta ^{*}(\measTrue)}\geq \min_{ \theta\in\argminT(\measTrue)} \ltwo{\theta}$, so by Theorem~\ref{thm:bounds}
\begin{equation}
\argminT_1(\mu_{n},\measTrue) + \mn(\ltwo{ \funLArg[\mu_{n}][\measTrue]}^{2}-\ltwo{\theta ^*(\measTrue)}^2)\leq \funLValO[\measTrue].
\end{equation}
 Using this bound, 
\begin{align*}
\measTrue\left\{ \funLValHat[\mu_{n}]-\mu_{n} \ltwo{\theta ^*(\measEmp)}^2 -\funLStdHat[\mu_{n}]z_{1-\alpha}n^{-1/2}\leq\funLValO[\measTrue]\right\}  & \geq\\
\measTrue\left\{ \sqrt{n}(\funLValHat[\mu_{n}]-\funLVal[\mu_{n}][\measTrue])-\mu_{n}\sqrt{n} ( \ltwo{\theta^*(\measEmp)}^2-\ltwo{\theta^*(\measTrue)}^2 )\leq \funLStdHat[\mu_{n}] z_{1-\alpha}\right\}  & \geq \text{ (by \eqref{eq:normThetadot})}\\
\measTrue\left\{ \sqrt{n}(\funLValHat[\mu_{n}]-\funLVal[\mu_{n}][\measTrue])\leq \funLStd[\mn] z_{1-\alpha}-2\delta \right\}(1 -\epsilon)^2 &\geq \\
( \Phi(z_{1-\alpha} - \frac{2\delta}{\sigma^0}) -\delta)(1 -\epsilon)^2 & 
\end{align*}

 Since $\measTrue$ is   arbitrary, for any $n>n(\delta,\epsilon)$

\begin{equation}
\inf_{\measTrue\in\Measures}\min_{ {\theta}\in\setID[\measTrue]}\measTrue\left(\theta_{1}\in{\mbox{CB}}_{\alpha,n,\Measures}\right)\geq( \Phi(z_{1-\alpha} - \frac{2\delta}{\sigma^0}) -\delta)(1 -\epsilon)^2.\label{eq:liminfThm2}
\end{equation}
Hence,

\begin{equation}
\liminf_{n\to\infty}\inf_{\measSec\in\Measures}\min_{ {\theta}\in\setID[\measTrue_{n}]}\measTrue_{n}\left(\theta_{1}\in{\mbox{CB}}_{\alpha,n,\Measures}\right)\geq(1-\alpha).\label{eq:liminfThm2}
\end{equation}

\end{proof}

} 

 \section{Additional Results}

\subsection{Point-wise valid and non-conservative confidence intervals} \label{subsec:pointwise-results}

How conservative is $\mbox{CB}_{\alpha,n,\Measures}$ in the nonregular case? 
We can evaluate it by comparing its average length with that of a confidence bound that has exact point-wise asymptotic coverage probability in a Monte Carlo study.
As discussed in Section~\ref{subsec:asymptoticDistribution}, the existing (point-wise valid) nonconservative procedures for inference on support functions in the nonregular case are based on simulation methods (for example, bootstrap for directionally differentiable functions).
These procedures can be computationally costly in high-dimensional settings. 
Theorems~\ref{thm:bounds} and \ref{thm:consistency} suggest  alternative analytical confidence bounds on $\minV(\measTrue)$ using the following inner estimator: 
\begin{equation}
\minV^{in}(\mn,\kappa_n,\measEmp)   \bydef\funLValHat[\mu_{n}]-\mu_{n}\ltwo{\funLArgHat[\kappa_{n}]}^{2}.\label{eq:biasCorrectedEstimator}
\end{equation}
Theorem~\ref{thm:bounds} guarantees that the corresponding  population counterpart $\minVin(\mu,\kappa,\measTrue)$  coincides with $\minV(\measTrue)$  if  $\kappa_n$ and $\mu_n$ are sufficiently small.   
Theorem~\ref{thm:consistency} implies that if $\kappa_{n}$ converges to zero slower than $\mu_{n}$ and both converge slower that $1/\sqrt{n}$, then $\minV^{in}(\mn,\kappa_n,\measEmp) $ is an asymptotically normal estimator with variance $ \funLStd[\mn]$.
This property suggests the following point-wise-valid CSs:
\begin{equation}
\begin{cases}
\mbox{CB}_{\alpha,n} & =\left[\minV^{in}(\mn,\kappa_n,\measEmp)-z_{1-\alpha}n^{-1/2}\funLStdHat[\mu_{n}],\infty\right),\\
\mbox{CI}^{\theta_1}_{\alpha,n} & =\left[\minV^{in}(\mn,\kappa_n,\measEmp)-z_{1-\alpha}n^{-1/2}\funLStdHat[\mu_{n}];\maxV^{in}(\mn,\kappa_n,\measEmp)+z_{1-\alpha}n^{-1/2}\funUStdHat[\mu_{n}]\right],\\
\mbox{CI}_{\alpha,n}^{\mathcal{S}} & =\mbox{CI}^{\theta_1}_{\alpha/2,n},
\end{cases}\label{eq:CIsPoint}
\end{equation}
These CSs' properties are summarized in Theorem 4.
\begin{thm}
\label{thm:coveragePW} Suppose that Assumptions \ref{assu:Non--empty}\textendash \ref{assu:Moments}
hold (i.e. $P \in \mathcal{P}$),    $0<\alpha<1/2$,      $\kappa_{n}\to0$ and $\mu_{n}\to0$  are
  such that $\mu_{n}\sqrt{n}\to\infty$. Moreover, suppose that $\lim_{n\to\infty}\funLStd[\mu_{n}][\measTrue][2]>0$ and $\lim_{n\to\infty}\funUStd[\mu_{n}][\measTrue][2]>0$. Then
\begin{eqnarray*}
\lim_{n\to\infty}\measTrue\left(\setMarginal\subset\mbox{CB}_{\alpha,n}\right)=\liminf_{n\to\infty}\inf_{ {\theta}\in\setID}\measTrue\left(\theta_{1}\in\mbox{CB}_{\alpha,n}\right) & \geq & 1-\alpha,\mbox{ }\\
\lim_{n\to\infty}\measTrue\left(\setMarginal\subset\mbox{CI}_{\alpha,n}^{\mathcal{S}}\right)\geq1-\alpha,\text{ }\liminf_{n\to\infty}\inf_{ {\theta}\in\setID}\measTrue\left(\theta_{1}\in\mbox{CI}_{\alpha,n}^{\mathcal{S}}\right) & \geq & 1-\alpha.
\end{eqnarray*}
If in addition   $\mu_{n}/\kappa_{n}\to0$, then,
\begin{eqnarray*}
\lim_{n\to\infty}\measTrue\left(\setMarginal\subset\mbox{CB}_{\alpha,n}\right)=\liminf_{n\to\infty}\inf_{ {\theta}\in\setID}\measTrue\left(\theta_{1}\in\mbox{CB}_{\alpha,n}\right) & = & 1-\alpha.
\end{eqnarray*}
If further, the model has no equality constraints---that is, if $p=0$---then 
\begin{eqnarray}
\liminf_{n\to\infty}\inf_{ {\theta}\in\setID}\measTrue\left(\theta_{1}\in\mbox{CI}^{\theta_1}_{\alpha,n}\right) & = & 1-\alpha.
\end{eqnarray}
\end{thm}
\begin{proof}
See Appendix \ref{sec:Proof-of-Theorem3}.
\end{proof}

The confidence band $\mbox{CB}_{\alpha,n}$ and the confidence interval $\mbox{CI}_{\alpha,n}^{\theta_1}$ are asymptotically nonconservative for a fixed DGP that satisfies the assumptions of the theorem; that is, they have coverage of exactly $1-\alpha$. 
If $p>0$,  $\theta_{1}$ can be point-identified if there are equality constraints in the model that are orthogonal to $\eOneVec$. 
In this case, I recommend the Bonferroni-type CS  $\mbox{CI}_{\alpha,n}^{\mathcal{S}}$, which remains valid under point identification. 
The shorter $\mbox{CI}_{\alpha,n}^{\theta_1} $ (compare \citet{imbens2004confidence}) is valid only if $\theta_{1}$ is not point-identified.
 
It is interesting to compare the uniformly valid confidence bound $\mbox{CB}_{\alpha,n,\Measures}$ with its nonconservative point-wise counterpart $\mbox{CB}_{\alpha,n}$.
In the regular case, the two sets coincide asymptotically and have the same exact coverage of the support function $\minV(\measTrue)$. In the presence of flat face (the nonregular case), $\mbox{CB}_{\alpha,n}$ will be shorter than $\mbox{CB}_{\alpha,n,\Measures}$ with probability 1. 
Moreover, the confidence level of $\mbox{CB}_{\alpha,n}$ will remain asymptotically exact  (point-wise).
Despite this, I recommend using the uniformly valid bound $\mbox{CB}_{\alpha,n,\Measures}$  since it would have better control of the confidence level in small samples (because of its validity under drifting DGP sequences; see Remark~\ref{rem:drifitngDGP}).
 
\begin{rem}\label{rem:drifitngDGP}
 Theorem~\ref{thm:coveragePW} provides an asymptotic coverage probability for a given DGP with measure $\measTrue$.
The size of the sample required to achieve the nominal coverage of $1-\alpha$ with a given precision in this result can depend on $\measTrue$  through $\bar{\mu}(\measTrue)$ as defined in Theorem~\ref{thm:bounds}.
The cutoff $\bar{\mu}(\measTrue)$ can be arbitrarily close to zero.
It is possible to construct an example in which the sequence of measures $\measSec$  meets the assumptions of Theorem~\ref{thm:coveragePW} but   the asymptotic bias is larger than $1/\sqrt{n}$:
\begin{equation}
    \sqrt{n}\mn({\ltwo{\funLArg[\mn][\measSec]}-\ltwo{\funLArg[\kappa_n][\measSec]}})\to+\infty.
\end{equation} 
In other words, there are examples of DGP with a measure $\measTrue$ and some $\epsilon>0$ such that for any $n$ it is possible to find a measure $Q$ in a neighborhood of $\measTrue$ with \begin{equation}Q\left(\setMarginal\subset\mbox{CB}_{\alpha,n} \right)<1-\alpha-\epsilon.
\end{equation} 
In practical terms, this means that the large-sample theory with a fixed $\measTrue$ may provide a poor approximation for the true coverage probability of the point-wise-valid confidence set  $\mbox{CB}_{\alpha,n}$.  
(A similar concern applies to other existing point-wise-valid inference procedures.)
\end{rem}
 \subsection{Proof of Theorem \ref{thm:coveragePW}\label{sec:Proof-of-Theorem3}}

\begin{proof}[Proof of Theorem \ref{thm:coveragePW}] \noun{Step 1.} First, suppose that  $\mu_{n}/\kappa_{n}\to0$.
Consider
\begin{equation}
\argminZ_{n}  \bydef \sqrt{n}\frac{\minVin(\mu_n,\kappa_n,\measEmp)-\minVin(\mu_n,\kappa_n,\measTrue)+\minVin(\mu_n,\kappa_n,\measTrue)-\funLValO}{\funLStdHat} \label{eq:asymptoticDistribution}
\end{equation}
Since $\mn/\kappa_n\to 0$ and $\mn\to 0$, for all $n$ large enough, such that $\mn\leq \kappa_{n}\leq\bar{\mu}\left(\measTrue\right)$, by  Theorem~\ref{thm:bounds} we get
\begin{equation}
 \minVin(\mu_n,\kappa_n,\measTrue)=\funLValO. \label{eq:constantTheta1}
\end{equation}

By Theorem~\reF{thm:consistency}  we get
\begin{equation}
\mn\sqrt{n}\left(\ltwo{\funLArgHat[\kappa_{n}]}^{2}-\ltwo{\funLArg[\kappa_{n}]}^{2}\right) = \frac{\mn}{\kappa_{n}}  O_p(1)=o_p(1). \label{eq:consistency theta2 kappa}
\end{equation}

By Lemma~\reF{lem:continuityInMu}, $\funLArg$ and $\funLLagrange$ are continuous for $\mu>0$. The matrix function $\funCov[\theta]$ is continuous in $\theta$ and thus  $\argminS\mP$ is continuous in $\mu$ for $\mu>0$. 
So the limit  $\lim_{n\to\infty}\funLStd[\mu_{n}][\measTrue][2]$  exists and belongs to the set $\funLStd[0][\measTrue][2]$ which by  assumptions  implies  $\lim_{n\to\infty}\funLStd[\mu_{n}][\measTrue][2]>0$.
Result~\eqref{eq:coupling} in Theorem~\reF{thm:consistency} together with \eqreF{eq:constantTheta1} and \eqreF{eq:consistency theta2 kappa} imply by Slutsky's theorem that $\argminZ_{n}$ converges in distribution to $N\left(0,1\right)$. 

\noun{Step 2.} Consider the one\textendash sided confidence band
$\mbox{CB}_{\alpha,n}$.
\begin{eqnarray*}
 & \lim_{n\to\infty} & \measTrue\left\{ \setMarginal\subset\mbox{CB}_{\alpha,n}\right\} \\
= & \lim_{n\to\infty} & \measTrue\left\{ \funLValO \geq \minVin(\mu_n,\kappa_n,\measEmp)-\funLStdHat[\mn]z_{1-\alpha}n^{-1/2}\right\} \\
= & \lim_{n\to\infty} & \measTrue\left\{ \argminZ_{n} \leq z_{1-\alpha}\right\} \\
= & \Phi\left(z_{1-\alpha}\right) & =1-\alpha.
\end{eqnarray*}
 Proof for $\mbox{CI}_{\alpha,n}^{\mathcal{S}}$ follows immediately from the Bonferroni inequality. 
 Finally, consider the case $p=0$. Then by Lemma \reF{lem:LICQ} and Assumption~\ref{assu:Non--empty}, $\funLVal[0]<-\funUVal[0]$. 
So 
\begin{eqnarray}
\lim_{n\to\infty}\min_{ {\theta}\in\setID}\measTrue\left( {\theta}\in\mbox{CI}_{\alpha,n}^{\theta_1}\right)=\nonumber \\
\min\left\{ \lim_{n\to\infty}\measTrue\left\{ \funLValHat[\mn]-\mn\ltwo{\funLArgHat[\kappa_{n}]} ^{2}-\funLStdHat[\mn]z_{1-\alpha}n^{-1/2}\leq\funLValO\right\} ,1,\right.\dots\nonumber \\
\left.\lim_{n\to\infty}\measTrue\left\{ -\funUValHat[\mn]+\mn\ltwo{\funUArgHat[\kappa_{n}]}^{2}+z_{1-\alpha}\funUStdHat[\mn]n^{-1/2}\geq\funUValO\right\} \right\} =\nonumber \\
\min\left\{ \lim_{n\to\infty}\measTrue\left\{ \setMarginal\subset\mbox{CB}_{\alpha,n}\right\} ,1,\lim_{n\to\infty}\measTrue\left\{ \setMarginal\subset\mbox{CB}_{\alpha,n}^{R}\right\} \right\} =\\
\min\left\{ 1-\alpha,1,1-\alpha\right\} =1-\alpha.
\end{eqnarray}
To understand the second equation, consider the following argument.
Suppose that $ {\theta}\in\setID$ is such that $\funLValO<\theta_1<\funUValO$.
Then such $  {\theta_1}$ will be covered with probability 1 since  $\mbox{CI}_{\alpha,n}^{\theta}$ is the intersection of  $\mbox{CB}_{\alpha,n}$ and $\mbox{CB}_{\alpha,n}^{R}$ which cover correspondingly $\funLValO$ and $\funUValO$ . 

\noun{Step 3.} Finally, suppose that  $\mu_{n}/\kappa_{n} \to0$ does not hold, i.e. sequence  $\kappa_{n}\to0$ can take any positive values. 
Consider any auxiliary sequence $\kappa^*_n\to0$ such that  $\mu_{n}/\kappa^*_{n} \to0$ and $\kappa_{n} \leq \kappa^*_n$.
By Step 2,
\begin{eqnarray*}
 \lim_{n\to\infty}   \measTrue\left\{ \funLValO \geq \minVin(\mu_n,\kappa^*_n,\measEmp)-\funLStdHat[\mn]z_{1-\alpha}n^{-1/2}\right\}   =1-\alpha.
\end{eqnarray*}
But $\minVin(\mu_n,\kappa ,\measEmp)$ is a non-increasing function in $\kappa $. Indeed, by \eqref{eq:thetadotmu0} in proof of Lemma \ref{lem:constantTheta}, the partial derivative $\ltwo{\argminT(\kappa,\measEmp)}^2$ w.r.t. $\kappa$ at any point $\kappa_0>0$ is $2\argminT^\prime(\kappa_0,\measEmp) \dot{\argminT}(\kappa_0,\measEmp) \leq 0 $. So  $\ltwo{\argminT(\kappa,\measEmp)}$ is decreasing function of $\kappa>0$.

The derivative at $\kappa=0$ does not exist, so this case has to be considered separately. Take any $\theta^* \in \argminT(0,\measEmp)$. Since $\theta^*$ is a feasible point of Program~\eqref{prog:regProg} with $\measTrue$ replaced by $\measEmp$, 
\begin{equation}
\argminT_1(\kappa,\measEmp) + \kappa\ltwo{ \funLArg[\kappa][\measEmp]}^{2}\leq    \theta^*_1 + \kappa \ltwo{\theta^*}^{2}.\label{eq:growthinkappa}
\end{equation}
By Theorem \ref{thm:bounds}, for $\kappa\leq\bar\mu(\measEmp)$ we have $\argminT_1(\kappa,\measEmp) = \theta^*_1$, so for such small values \eqref{eq:growthinkappa} becomes
\begin{equation}
\ltwo{ \funLArg[\kappa][\measEmp]}^{2}\leq  \ltwo{\theta^*}^{2}.\notag
\end{equation}
This proves that  $\ltwo{\argminT(\kappa,\measEmp)}$ is   a non-increasing function of $\kappa$ for all values $\kappa\geq0$ including $0$.
So a.s.  $\minVin(\mu_n,\kappa_n,\measEmp)\leq\minVin(\mu_n,\kappa^*_n,\measEmp)$ and therefore
\begin{eqnarray*}
 & \lim_{n\to\infty} \measTrue\left\{ \funLValO \geq \minVin(\mu_n,\kappa_n,\measEmp)-\funLStdHat[\mn]z_{1-\alpha}n^{-1/2}\right\}  &\geq \\
& \lim_{n\to\infty}  \measTrue\left\{ \funLValO \geq \minVin(\mu_n,\kappa^*_n,\measEmp)-\funLStdHat[\mn]z_{1-\alpha}n^{-1/2}\right\}&=1-\alpha.
\end{eqnarray*}
It means that if one uses a sequence $\kappa_n$ that converges to 0 faster than $\mu_n$, it makes all the confidence sets more conservative than the ones corresponding to $\kappa^*_n$.  
\end{proof}

\subsection{A special case of over-identified moment conditions: Uniformly valid inference on intersection bounds} \label{subsec:overidentification}

The sufficient conditions for LICQ (Assumptions~\ref{assu:RankCondition}--\ref{assu:NoOveridentification}) considerably simplify the asymptotic analysis, but can be violated in some interesting cases.
In particular, the active inequality constraints can be linearly dependent.
In those cases, the regularization  approach can still be applied to the dual formulation of the linear program as in equation  \eqref{eq:DualProgram0}. 
A complete and comprehensive analysis of dual regularization goes beyond the scope of this paper.
In this section, I  restrict my attention to a special case of the following linear program:
\begin{align}
\funLValO=\min_{ \theta_1\in\R} \theta_1 \label{eq:intersectionBoundsmain},\\
\text{s.t. }    \theta_1 \geq &  \expect W_j,\text{ for }j=1,\dots,k.\nonumber 
\end{align}
This   function appears in many econometric applications and has been studied in the literature (see \eqref{eq:subsystemIntersectionBounds} in Example~\ref{exa:runningExample}; also see examples in  \cite{hall2010bootstrap}, \cite{chernozhukov2013intersection}). 
The value of this program can be written explicitly as $\max_{j=1,\dots,k}     \expGeneric W_j $ (also known as an \emph{intersection-bound} parameter). 
Such a maximum over a finite set can in turn be equivalently represented as a linear program,
\begin{align}
\funLValO=&    - \min_{\theta_1\in\R,{\lambda}\in\R ^{k}}   \left\{  -\theta_1\right\} \label{eq:intersectionBoundsmainConvex},\\
\text{s.t. } 
&   \theta_1 -   \sum_{j=1}^k  \lambda_j \expGeneric W_j =0,\text{ }  \sum_{j=1}^k  \lambda_j  =   1,\text{ } \lambda_j  \geq 0.\nonumber 
\end{align}
(Compare with the dual-linear-program representation \eqref{eq:DualProgramIntersectionBounds}.)
The problem of overidentifying inequality constraints (multiple dual solutions) in formulation \eqref{eq:intersectionBoundsmain} becomes a ``flat face'' problem (multiple primal solutions) in  formulation \eqref{eq:intersectionBoundsmainConvex}.
Constraints of the program in  \eqref{eq:intersectionBoundsmainConvex} are almost surely linearly independent, so one can directly verify that Assumptions \ref{assu:Non--empty} and \ref{assu:RankCondition}--\ref{assu:NoOveridentification} are also satisfied.
As a result, Theorems~\ref{thm:bounds} and \ref{thm:consistency} can be applied to the following dual regularized formulation:
\begin{align}
\funLdualVal{\mu_{n}} = &    - \min_{ \theta_1\in\R,{\lambda}\in\R ^{k} }   \left\{  -\theta_1 +\mu_n (\theta_1^2+\ltwo{\lambda}^2) \right\} ,\label{eq:dualregularized}\\
\text{s.t. } 
&   \theta_1 -   \sum_{j=1}^k  \lambda_j \expGeneric W_j =0,\quad \sum_{j=1}^k  \lambda_j  =   1,\quad  \lambda_j  \geq 0.\nonumber
\end{align}
This is completely analogous to \eqref{prog:minSSAregularized} (with the exception of the negative sign in front of the minimum operator).
Estimator $\funLdualVal{\mu_{n}}$ has the following bias-corrected version:
\begin{equation}
    \underline{v}^{dual,in} (\mu_n,\kappa_n,\measEmp) = \funLdualVal{\mu_{n}} + \mu_n (\underline{\theta}_1^2(\kappa_n,\measEmp)+\ltwo{\underline\lambda(\kappa_n,\measEmp)}^2). 
\end{equation}
Here, $\underline{\theta}_1(\kappa_n,\measEmp)$ and $\underline\lambda(\kappa_n,\measEmp)$ are solutions to \eqref{eq:dualregularized} with a tuning parameter $\kappa_n$ instead of $\mu_n$.

By implication of Theorem~\ref{thm:bounds}, the estimator $ \underline{v}^{dual,in} (\mu_n,\kappa_n,\measEmp)$ is either asymptotically unbiased for $\funLValO$ from \eqref{eq:intersectionBoundsmainConvex} (in the regular case, in which only one inequality is binding) or biased downward (in the nonregular case, in which multiple inequalities are binding).
As a result, $\underline{v}^{dual,in} (\mu_n,\kappa_n,\measEmp)$ can be used in a delta-method CS for $\funLValO$,
\begin{equation}
    \mbox{CB}^{dual}_{\alpha,n,\Measures}   \bydef\left[\underline{v}^{dual,in} (\mu_n,\kappa_n,\measEmp)-z_{1-\alpha}n^{-1/2}\funLStdHat[\mu_{n}],\infty\right),
\end{equation}
where $\funLStdHat[\mu_{n}]$ corresponds to the asymptotic variance estimator for $\funLdualVal{\mu_{n}}$.
Following the steps of the proof of Theorem~\ref{thm:TheoremUniform}, one can establish the correct uniform asymptotic coverage of $\funLValO$ with $\mbox{CB}^{dual}_{\alpha,n,\Measures}  $.
Moreover, in this case the coverage will be exactly $1-\alpha$ for any fixed DGP even in the nonregular case (analogous to the results in Theorem~\ref{thm:coveragePW}). 

To conclude, the regularization approach proposed  can be applied to systems of moment-inequality conditions that violate Assumptions~\ref{assu:RankCondition}  and \ref{assu:NoOveridentification}  after  an appropriate dual reformulation of the program.
A comprehensive study of this dual regularization is a topic for another paper. 
The intersection-bounds problem is a specific empirically relevant example in which the method can be applied with minimal modification to the convex reformulation in Equation \eqref{eq:intersectionBoundsmainConvex}.

\section{Discussion of  computational properties}\label{sec:Discussion}\label{sec:Computation}

\subsection{Fast convergence to a minimum}\label{sec:Convergnce rate}
The existing uniform methods of AS, BCS, and KMS are based on standardized moment conditions that are nonconvex even if the original inequalities are affine in $\theta$. 
Example 2 in the previous section illustrates this feature.  
The estimator $\funLArgHat[\mu_{n}]$ is a solution to a strictly convex quadratic program for any affine-moment-inequality model. 
For convex programs, the  Karush\textendash Kuhn\textendash{} Tucker (KKT)  conditions  provide necessary and sufficient conditions for the global optimum (see Lemma \ref{lem:UniqueMin} in Appendix).
Moreover, convex quadratic programs can be solved using  interior-point algorithms with a polynomial rate of convergence. (See, for example, \cite{ye1989extension}.)
This strict convexity gives  a dramatically faster rate of obtaining the optimum than the ones used in BCS and KMS.
These methods are based on nonconvex-constraint optimization problems, which are NP-hard.
Section~\ref{sec:Monte-Carlo} compares computational time in specific examples.

\subsection{Uniqueness of a global optimum} \label{sec:globalOptimum}
The KKT system for strictly convex optimization problems has a unique solution.
The number of KKT points of the optimization problems in the KMS, BCS, and AS procedures in affine moment inequality models can be large and typically grows exponentially with the dimension $d$ and number of inequalities $k$.
The following example illustrates this point. 
\begin{example}\label{ex:multipleMin} 
\label{exa:GrowingD}Consider a set of moment inequalities with coefficients
that have expectation
\[
\expect W=\left(\begin{array}{c|c}
-\I[d] & -\iota\\
\I[d] & -\iota
\end{array}\right).
\]
Suppose that components of $ W$ are independent and have
the same variance $s^{2}$. $\setID$ is a box $\left[-1,1\right]^{d}$.
The standardized moment conditions take the form
\begin{equation}
\frac{\pm\theta_{j}+1}{s\sqrt{1+\ltwo{\typVector{\theta}}^{2}}}\leq0,\text{ }j=1,\dots,d.\label{eq:standardizedMoments}
\end{equation}
The KMS procedure adds slack $c\left(\typVector{\theta}\right)$ to
the right-hand side of every standardized moment inequality. Consider, for example, $j=1$, where
\begin{equation}
\theta_{1}\geq1-c\left(\typVector{\theta}\right)s\sqrt{1+\ltwo{\typVector{\theta}}^{2}}.\label{eq:non-convexINeq}
\end{equation}

\begin{figure}[H]
\begin{centering}
\begin{tabular}{cc}
\includegraphics[scale=0.25]{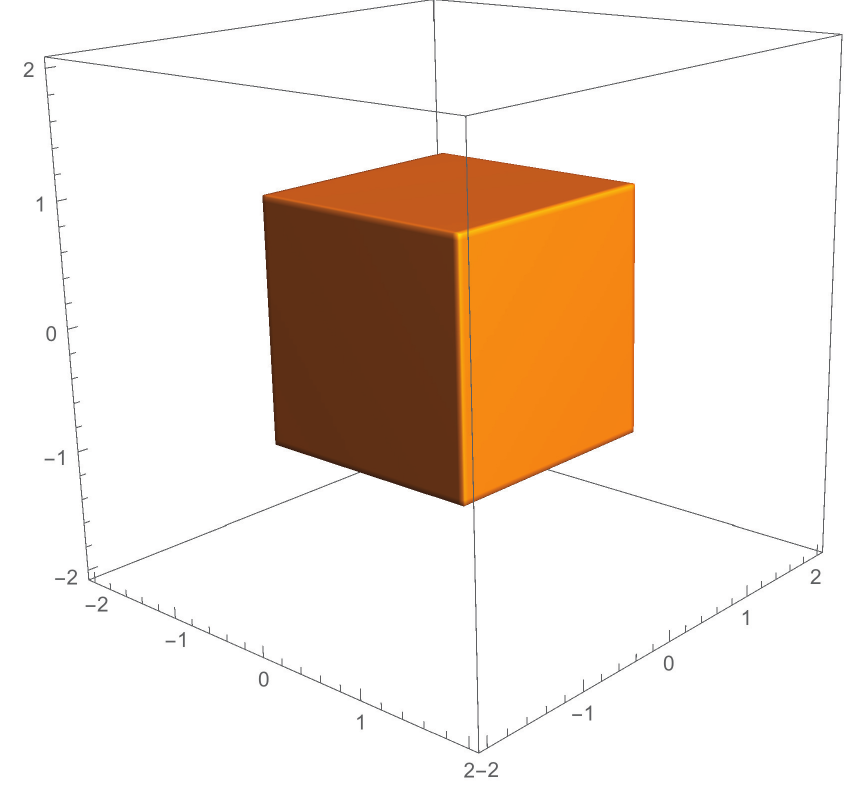} & \includegraphics[scale=0.25]{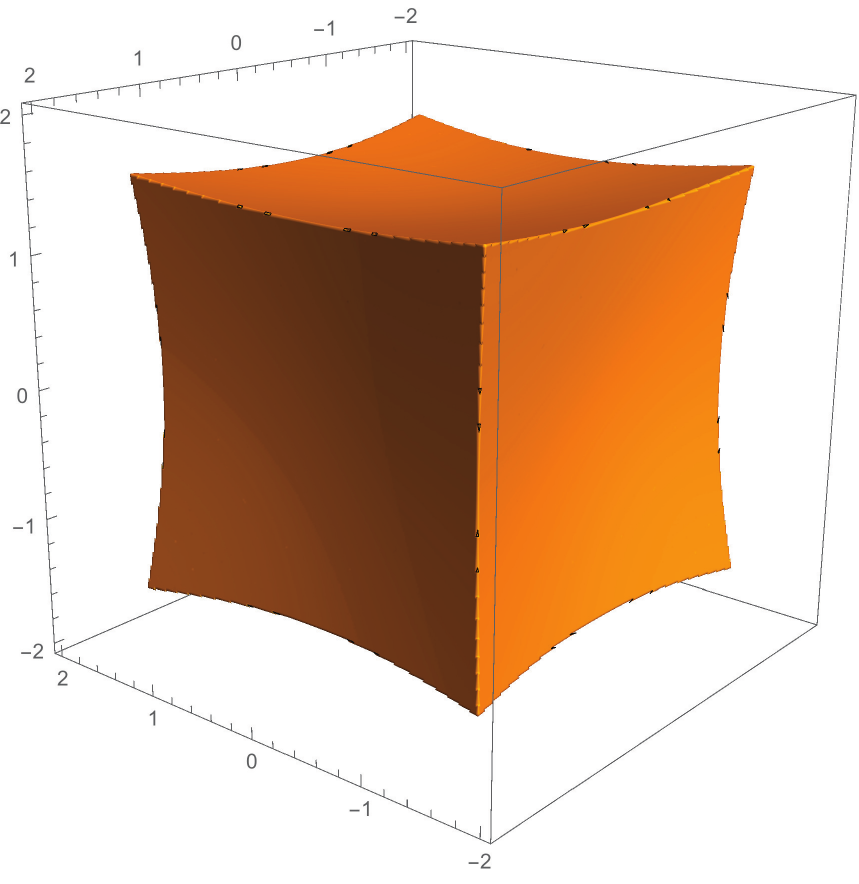}\tabularnewline
\end{tabular}
\par\end{centering}
\caption{\label{fig:Identified-set-and}The identified set and the corresponding
optimization domain of a KMS-type procedure for $d=3$ in Example~\ref{ex:multipleMin}.}
\end{figure}

The slack function $c\left(\typVector{\theta}\right)$ is computed using a resampling on a grid of points. 
Assume, for simplicity, that
$c\left(\typVector{\theta}\right)$ is a constant---for example, one provided using the Bonferroni approach. Figure \ref{fig:Identified-set-and}
shows the identified set and the corresponding expansion with $c\left(\typVector{\theta}\right)=const$.
The optimization domain of the E-A-M algoritm in KMS is similar to the nonconvex set on the right side of Figure \ref{fig:Identified-set-and}. 
Every vertex of the $\left[-1,1\right]^{d}$ with $\theta_{1}=-1$ corresponds to an isolated local minimum of the optimization procedure in KMS. Correspondingly,
the number of local minima grows exponentially with the dimension $d$.
For example, the number of local minima for $d=10$ is $512$.
The growth in the number of local optima is even faster in models with more than two inequalities per coordinate.  $\QEDB$
 
\end{example}

Multiplicity of KKT points both makes the procedures of KMS, AS, and BCS  computationally costly and  does not guarantee  convergence to a global optimum for large $d$. 

\subsection{Implications for validity of multiplier bootstrap}\label{sec:multiplier boostrap}
The multiplier bootstrap  provides a numerical way to implement detla-method inference. 
Main advantage of this procedure when compared to conventional analytical formulas is the fact that one can use simulations to obtain joint variance-covariance matrix of several asymptotically Gaussian estimators (for example, bounds on projections of an identified set for a set of directions) and/or directly make draws from    their joint limiting Gaussian distribution without necessity to derive the corresponding matrix formulas explicitly. 
In comparison to the standard non-parametric bootstrap methods, multiplier bootstrap eliminates the need to recompute estimators based on potentially costly optimization routines.\footnote{For example, FH and KMS solve mathematical programs repeatedly for every bootstrap sample. One of the recent papers in the moment inequality literature that made use of multiplier bootstrap is \cite{chernozhukov2023constrained}. That paper however uses multiplier bootstrap to simulate critical values of GMM-type statistics at a hypothetical parameter values for subsequnt test inversion.}
The multiplier-bootstrap approach would be particularly appealing in subvector inference on more that one component as discussed in Section~\ref{subsec:subvectors}.

The proposed estimators of the regularized support functions for fixed direction have a Bahadur-Kiefer representation with explicit influence functions given in Equation \eqref{eq:Bahadur}. 
One can use this property to justify multiplier bootstrap for inference on the support  of the identified set along the lines of proving validity of the delta-method.
Proof of Theorem \ref{thm:TheoremUniform} can be adapted by replacing estimators of standard deviation $\funLStdHat[\mu_{n}]$ with the multiplier bootstrap estimator of asymptotic standard deviation of  $ \sqrt{n} ( \funLValHat[\mu_{n}]- \minV\mnP )$.
Indeed, Theorem \ref{thm:consistency} shows that the unknown parameters in the influence function are (uniformly over $\mathcal{P}$) consistently estimated by their sample analog.
Multiplier bootstrap procedure could also be generalized to obtain non-standard critical values of $\sup t$ statistics as discussed earlier in Subsection~\ref{subsec:subvectors}.
I leave full analysis of such an extension for future work.

\subsection{Choice of solvers}
The point-wise CIs in (\ref{eq:CIsPoint}) can be computed using any
Newton-type optimization software that provides accurate Lagrange
multipliers. I use the \emph{fmincon} function of \noun{MATLAB} software. I recommend using the active-set or SQP option since the interior-point  option does not provide accurate Lagrange multipliers. 
The estimator $\ltwo{\theta^*\left(\measEmp\right)}^{2}$, which enters the uniformly valid CS described in Theorem \ref{thm:TheoremUniform}, is based on $2d$ linear programs. Linear programs typically scale very well.

 A  benchmarking of results for the state-of-the-art commercial LP solvers can be found, for example, at  http://plato.asu.edu/bench.html. The commercial solvers can tackle LP with tens of thousands of constraints and variables in a matter of minutes. Matlab's linprog solver tends to underperform in the time comparison.

\section{Additional Monte Carlo Results}

\begin{figure}[H]

\begin{centering}
\caption{\label{fig:pwFreq_opt2}Coverage frequency for   $\mbox{CB}_{\alpha,n}$  as function of $\omega$ in the $2$-dimensional  design with $\kappa_n=\mu_n$.}

\includegraphics[scale=1]{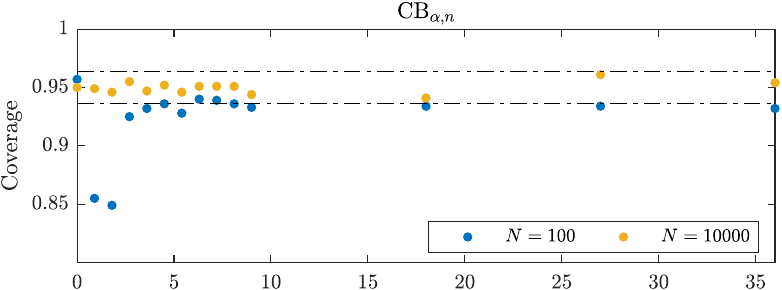} 
 
\par\end{centering}
   Note: The dashed lines correspond  to the asymptotic 95\% confidence interval (point-wise) for the parameter $p=0.95$ of Bernoulli random variable based on a random sample of 1000 observations.
   Some values of the estimated frequency can be slightly outside of the  confidence interval  as a result of multiple hypothesis testing.
   Values of $\omega$ close to zero result in negligible under-coverage. As sample size grows, the problematic area shrinks.
 
\end{figure}

\begin{figure}[H]

\begin{centering}
\caption{\label{fig:lengthCompare_opt2}MC average excess lengths  of  $\mbox{CB}_{\alpha,n}$ and   $\mbox{CB}_{\alpha,n,\Measures}$  as function of $\omega$ in the $2$-dimensional  design with $\kappa_n=\mu_n$.}

\includegraphics[scale=1]{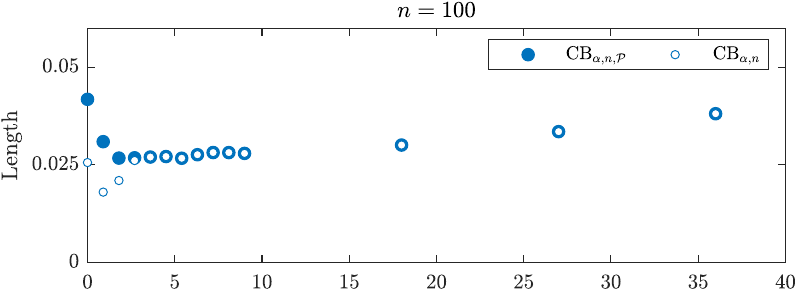} 
 
\par\end{centering}
 
\end{figure}

\newpage
\begin{figure}[H]

\begin{centering}
\caption{\label{fig:pwFreq_opt3}Coverage frequency for   $\mbox{CB}_{\alpha,n}$  as function of $\omega$ in the $2$-dimensional  design with $\kappa_n=0$.}

\includegraphics[scale=1]{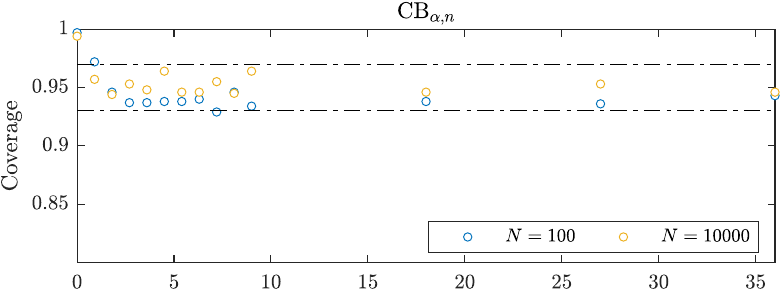} 
 
\par\end{centering}
   Note: The dashed lines correspond to the asymptotic 95\% confidence interval (point-wise) for the parameter $p=0.95$ of the Bernoulli random variable based on a random sample of 1000 observations.
   Some values of the estimated frequency can be slightly outside the confidence interval as a result of multiple hypothesis testing.
   Values of $\omega$ close to zero result in negligible conservative coverage.  
 
\end{figure}

\begin{figure}[H]

\begin{centering}
\caption{\label{fig:lengthCompare_opt3}MC average excess lengths  of  $\mbox{CB}_{\alpha,n}$ and   $\mbox{CB}_{\alpha,n,\Measures}$  as function of $\omega$ in the $2$-dimensional  design with $\kappa_n=0$.}

\includegraphics[scale=1]{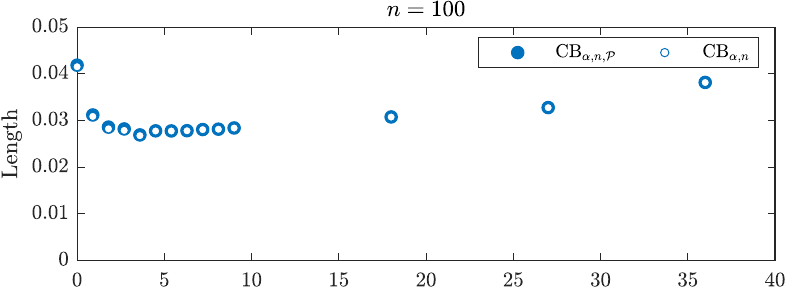} 
 
\par\end{centering}
 
\end{figure}

\newpage

\end{document}